\documentclass[english,a4paper,12pt,twoside,openany]{book}
\usepackage[utf8]{inputenc}
\usepackage[english]{babel}
\usepackage{amsmath}
\usepackage{amsfonts}
\usepackage{amssymb}
\usepackage{amsthm}
\usepackage{cancel}
\usepackage{slashed}
\usepackage{graphics}
\usepackage{graphicx}
\usepackage{fancyhdr}
\usepackage{multirow}
\usepackage{graphicx}
\usepackage{ragged2e}
\usepackage{hyperref}
\usepackage{fancyhdr}
\usepackage{bm}
\usepackage[left=2.75cm,top=2cm,bottom=2.5cm,right=2.25cm,includehead]{geometry}

\usepackage[titletoc]{appendix}
\usepackage{youngtab}
\usepackage{dsfont}
\usepackage{tikz}
\usetikzlibrary{shapes,arrows}
\usepackage{bbm}
\usepackage{mcite}
\usepackage{cite}

\usepackage{verbatim}
\usepackage{indentfirst}

\usepackage{framed}
\usepackage[normalem]{ulem}
\usepackage{enumerate}
\usepackage{cancel}
\usepackage{physics}
\usepackage{captdef}
\usepackage{mathtools}


\theoremstyle{definition}
\newtheorem{theorem}{Theorem}
\newtheorem{corollary}{Corollary}
\newtheorem*{definition}{Definition}
\newtheorem{example}{Example}

\setlength{\columnseprule}{1 pt}

\newcommand{\diff}{\mathrm{d}}
\newcommand{\lag}{\mathcal{L}}

\fancypagestyle{plain}{%
	\fancyhf{} 
	\fancyfoot{} 
	
	}

\pagestyle{empty}
\pagestyle{fancy}
\fancyfoot{}
\renewcommand{\chaptermark}[1]%
{\markboth{{\thechapter.\ #1}}{}}
\renewcommand{\sectionmark}[1]%
{\markright{{\thesection.\ #1}}}

\fancyhead[LE,RO]{\thepage}
\fancyhead[LO]{\nouppercase{\it \rightmark}}
\fancyhead[RE]{\nouppercase{\it \leftmark}}

\begin{document}
	
\title{The information loss paradox}
\author{Francisco Mart\'inez L\'opez}
\date{\today}

\setcounter{page}{0}
\pagenumbering{roman}

\begin{titlepage}

	\begin{center}
		
		\vspace*{0.1cm}
		
		{\textbf{\Huge The information loss \vspace{3mm} \\  paradox}}
		
		\vspace{2cm}
		
		\includegraphics[scale=1]{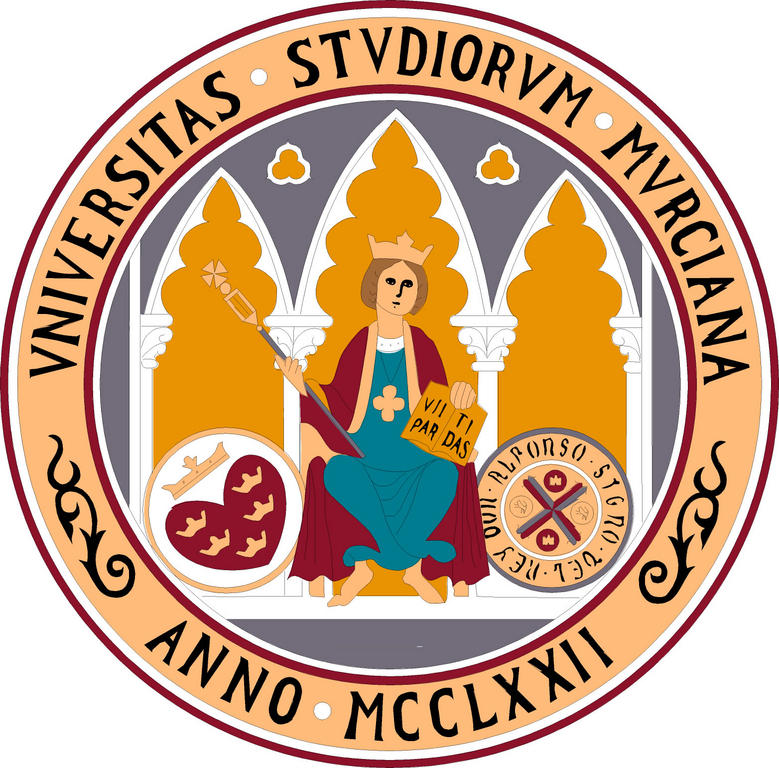}

		\vspace{2cm}
		
		\emph{A dissertation submitted to\\ the University of Murcia for the\\ degree of graduate in Physics}
		
		\vspace{1.5cm}
		
		{\LARGE Francisco Mart\'inez L\'opez}
		
		\vspace{1cm}

		{\Large Supervised by Jose Juan Fern\'andez Melgarejo}		
		
		\
		
		{\Large \hspace{2.5mm} and Emilio Torrente Luj\'an}
		
		\vspace{1cm}
		
		{\large June 2019}
		
		\vspace{1cm}
		
		{\large Department of Physics
			\vspace{2mm}
			\\
			University of Murcia
		}
		
	\end{center}
	
\end{titlepage}

\newpage
\thispagestyle{empty}
\mbox{}
\newpage
\thispagestyle{empty}
\mbox{}

\newpage
\thispagestyle{empty}

\begin{center}
{\Large \bf DECLARACIÓN DE ORIGINALIDAD}	
\end{center}

\

D. FRANCISCO MARTÍNEZ LÓPEZ estudiante del Grado en Física de la Facultad de Química de la Universidad de Murcia, DECLARO:

\

Que el Trabajo de Fin de Grado que presento para su exposición y defensa titulado ``THE INFORMATION LOSS PARADOX'' y cuyos tutores son:

\

D. JOSE JUAN FERNÁNDEZ MELGAREJO,

D. EMILIO TORRENTE LUJÁN,

\

$\hspace{-6mm}$es original y que todas las fuentes utilizadas para su realización han sido debidamente citadas en el mismo.

\begin{flushright}
	Murcia, a 14 de Junio de 2019.
\end{flushright}

Firma:

\begin{flushleft}
		\includegraphics[width=0.4\linewidth]{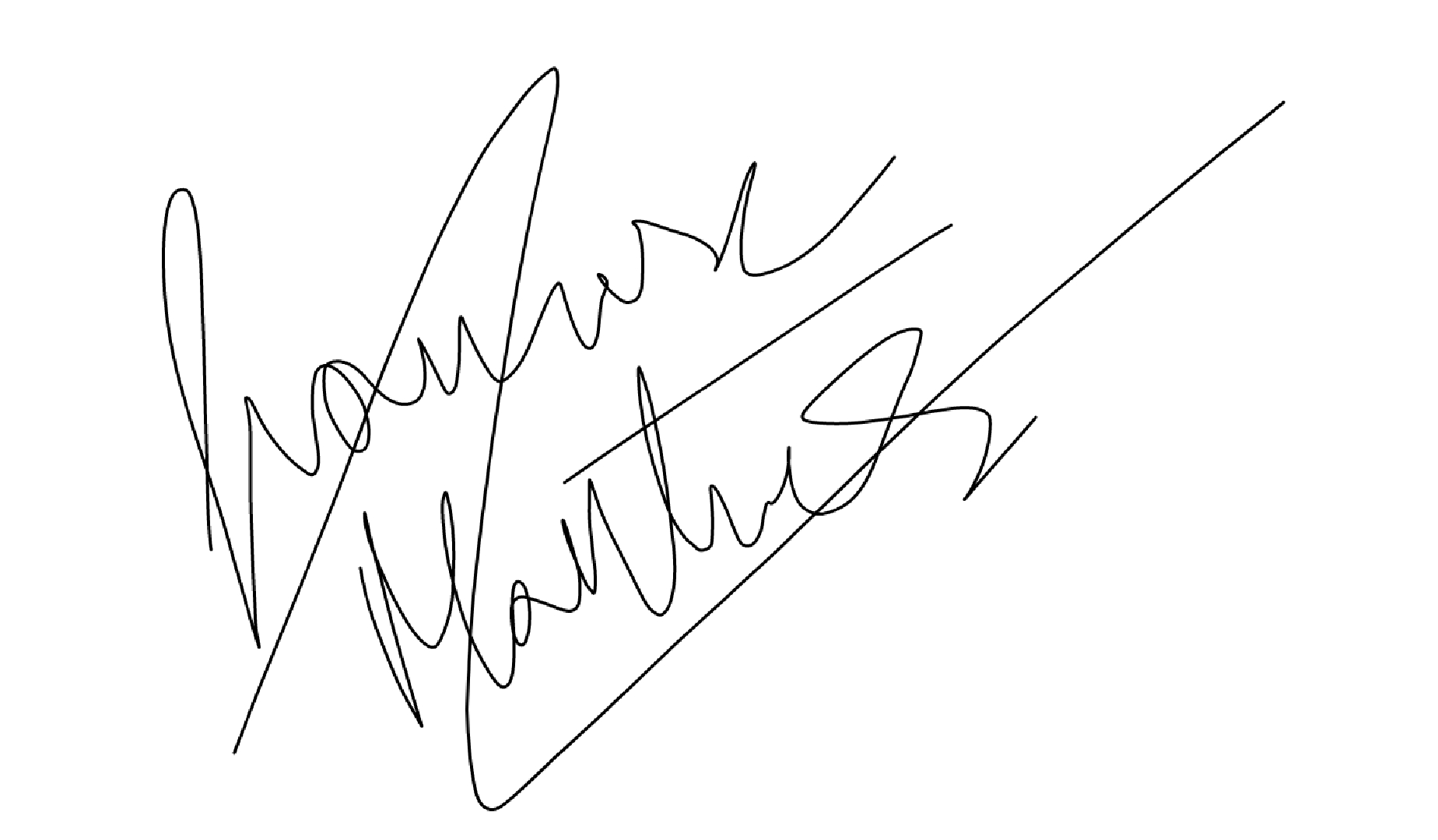}	
\end{flushleft}

\newpage

\begin{center}
	\textbf{\small{Resumen}}
		\justify
		En esta tesis hemos estudiado la paradoja de la pérdida de información en detalle. Como un primer paso, hemos derivado los principales resultados de la teoría cuántica de campos en espaciotiempos curvos. Hemos discutido el caso del campo escalar de Klein-Gordon y concluido con una deducción del llamado efecto Unruh en el espacio de Minkowski. Tras dar un breve listado de definiciones necesarias, como gravedad superficial y factor de ``redshift'', los hemos aplicado junto con los resultados del efecto Unruh para obtener la temperatura de la radiación de Hawking. Después, hemos empleado el formalismo de TCC en espaciotiempos curvos para obtener rigurosamente la distribución de la radiación, considerando el proceso de formación de un agujero negro. A continuación, nos hemos centrado en los estados mecanocuánticos de los cuantos de radiación y la masa en el agujero negro, probando que a primer orden más pequeñas correcciones (condición necesaria para despreciar efectos de gravedad cuántica en la Física usual) la conclusión de Hawking de estados mezcla/remanentes sigue siendo correcta. Finalmente, hemos presentado algunos de los principales resultados de estudios recientes sobre simetrías asintóticas y el grupo de simetrías $\mathrm{BMS}_4$. Hemos concluido presentando algunas ideas que relacionan el efecto de memoria gravitacional con las supertransformaciones y el ``soft hair'' que portan los agujeros negros.
\end{center}

\vspace*{0.5cm}

\begin{center}
	\textbf{\small{Abstract}}
		\justify
		In this thesis, we have studied the information loss paradox in detail. As a first step, we have derived the main results of quantum field theory in a curved background. We have discussed the case of the free scalar Klein-Gordon field and concluded with a derivation of the so-called Unruh effect in Minkowski spacetime. After giving a brief survey of necessary concepts, such as surface gravity and the redshift factor, we have applied them along the results from the Unruh effect to derive the temperature of Hawking radiation. Later, we have used the formalism of QFT in curved spacetime to rigorously obtain the distribution of the radiation, considering a black hole formation process. Thus, we have focused on the quantum mechanical states of the radiation quanta and the mass in the black hole, showing that at first order plus small corrections (condition needed to neglect effects of quantum gravity in normal physics) the Hawking conclusion of mixed states/remnants holds. Finally, we have presented some of the principal results of the recent study of asymptotic symmetries and the $\mathrm{BMS}_4$ symmetry group. We have concluded presenting some ideas relating the gravitational memory effect with the supertransformations and the soft hair carrying the black holes.
\end{center}

\newpage

\tableofcontents

\listoffigures\footnote{All figures appearing in the text have been created by the author, using \textit{Inkscape}. Graphics of functions have been generated using \textit{Maxima} and \textit{gnuplot}.}

\mainmatter
\include{introduction-190601}
\chapter{QFT in curved spacetime}\label{Chapter 2}

General Relativity is a purely classical theory, in its framework all observable quantities have always definite values. But our world is known to be described, on fundamental level, by the principles of quantum mechanics. So on, search for a theory of quantum gravity is one of the hot topics of research in theoretical physics nowadays.

\

The aim of this chapter is to study how free quantum-mechanical matter fields propagate in a fixed curved spacetime background. The underlying reason to study non-interacting fields is that we are interested in the effects of the spacetime itself on the fields. This discussion will lead us to the so-called Unruh effect in flat spacetime. The goal of this is to understand the physical basis of the Hawking radiation (which is the topic of the next chapter).

\

Our discussion here is fundamentally based on \cite{c.qft} and in a more mathematical rigour on \cite{w.qft}. Some ideas were also taken from \cite{t.bh}.

\section{Quantization of the free scalar field}

The minimal-coupling principle gives us a ``simple recipe'' to generalize the laws of physics for curved spacetime. To do so we express our theories, which we know are valid in flat spacetime, in a coordinate-invariant form and then assert that they remain true in curved spacetime. This usually translates into replacing the Minkoswki metric by a generic metric and the partial derivatives by covariant derivatives.

\

The Lagrangian density of a scalar field $\phi$ in curved spacetime is \footnote{In \cite{c.qft} eq. (9.87) a factor $\sqrt{-g}$ is included. We omit it in order to follow the standard fashion.}:
\begin{equation}\label{1.1}
\lag = -\frac{1}{2} g^{\mu \nu} \phi_{;\mu} \phi_{;\nu} - \frac{1}{2} m^2 \phi^2 - \xi R \phi^2,
\end{equation}
where $m$ is the mass, $R$ the Ricci curvature scalar and $\xi$ a constant which parametrized the coupling to the curvature scalar. This expression differs from its flat-spacetime analogue (besides the appearance of the metric $g_{\mu \nu}$ and the covariant derivatives) in the addition of a direct coupling to the Ricci curvature scalar. This coupling is parametrized by a constant $\xi$, which usually takes values $\xi=0$ (minimal coupling) or $\xi=\frac{n-2}{4(n-1)}$ (conformal coupling), where $n$ is the dimension of the spacetime. We can compute the conjugate momentum of the field \footnote{A discrepancy is found between this expression and the one found in \cite{c.qft} eq. (9.90), which gives $\pi=\phi_{;0}$.}:
\begin{equation}\label{1.2}
\pi=\frac{\partial \lag}{\partial \phi_{;0}}=-g^{\mu 0} \phi_{;\mu}.
\end{equation}

The generalization of the Euler-Lagrange equations is straightforward, following the steps of the minimal coupling principle:
\begin{equation}\label{1.3}
\nabla_{\mu} \frac{\partial \lag}{\partial \phi_{;\mu}} - \frac{\partial \lag}{\partial \phi}=0.
\end{equation}

From this, we arrive to the equation of motion of the scalar field:
\begin{equation}\label{1.4}
\Box \phi - m^2 \phi - \xi R \phi=0,
\end{equation}
where the operator of the first term is defined as:
\begin{equation}\label{1.5}
\Box = g^{\mu \nu} \nabla_{\mu} \nabla_{\nu}.
\end{equation}

The solutions to the equation (\ref{1.4}) span a space $\mathbf{S}$ with an inner product defined on a Cauchy surface with induced metric $h_{ab}$ as:
\begin{equation}\label{1.6}
\left(\phi_1,\phi_2\right)=-i \int_{\Sigma} \diff^{n-1} y \ \left(\phi_1 \nabla_{\mu} \phi_2^*-\phi_2^* \nabla_{\mu} \phi_1\right) n^\mu \sqrt{|h|}.
\end{equation}

This product does not depend of the choice of the hypersurface $\Sigma$. Let us consider another Cauchy surface $\Sigma'$ with the inner product defined in the same fashion. Now, if for two arbitrary solutions $\phi_1$ and $\phi_2$, we compute the difference between the inner products defined on the two different hypersurfaces we obtain \footnote{Something must be said about why these two Cauchy surfaces define a closed region.}:
\begin{equation}\label{1.7}
\begin{split}
\left(\phi_1,\phi_2\right)_{\Sigma}-\left(\phi_1,\phi_2\right)_{\Sigma'}&=-i \int_{\Sigma} \diff^{n-1} y \ \left(\phi_1 \nabla_{\mu} \phi_2^*-\phi_2^* \nabla_{\mu} \phi_1\right) n^\mu \sqrt{|h|}\\
&\hphantom{=}+\imath \int_{\Sigma'} \diff^{n-1} y \ \left(\phi_1 \nabla_{\mu} \phi_2^*-\phi_2^* \nabla_{\mu} \phi_1\right) n^\mu \sqrt{|h|}\\
&=\imath \int_{\mathcal{V}} \diff^{n} x \ \nabla^{\mu} \left(\phi_1 \nabla_{\mu} \phi_2^*-\phi_2^* \nabla_{\mu} \phi_1\right)\sqrt{-g}\\
&=\imath \int_{\mathcal{V}} \diff^{n} x \ g^{\mu \nu}\left(\cancel{\nabla_{\nu} \phi_1 \nabla_{\mu} \phi_2^*} + \phi_1 \nabla_{\mu} \nabla_{\nu} \phi_2^*\right.\\
&\hphantom{=}\left.-\cancel{\nabla_{\nu}\phi_2^* \nabla_{\mu} \phi_1}-\phi_2^* \nabla_{\mu} \nabla_{\nu} \phi_1\right)\sqrt{-g}\\
&=\imath \int_{\mathcal{V}} \diff^{n} x \ \left(\phi_1 (m^2 \phi_2^* + \xi R \phi_2^*)-\phi_2^*(m^2 \phi_1 + \xi R \phi_1)\right)\\
&=0.
\end{split}
\end{equation}

In the second equality we have used Stoke's theorem and in the third the Klein-Gordon equation (\ref{1.4}). We can now impose the canonical commutation relations, promoting the fields and its conjugate momentum to linear operators of the Hilbert space of states:
\begin{equation}\label{1.8}
\begin{split}
&\left[\phi(\vec{x},t),\phi(\vec{x'},t)\right]=0,\\
&\left[\pi(\vec{x},t),\pi(\vec{x'},t)\right]=0,\\
&\left[\phi(\vec{x},t),\pi(\vec{x'},t)\right]=\imath \ \delta^{(n-1)}(\vec{x}-\vec{x'}).\\
\end{split}
\end{equation}

Now, if we want to continue working in analogy to flat spacetime we should look for a set of normal modes forming a complete basis of the space $\mathbf{S}$. Since in general there will not be any timelike Killing vector, we can not find solutions which factorize into a time-dependent and a space-dependent factor. That way, we can not classify modes as positive or negative frequency, which is the common procedure in flat spacetime.

\

Anyway, we can always find a set of solutions $\left\{f_i\right\}$ to the equation (\ref{1.4}) that are orthonormal:
\begin{equation}\label{1.9}
(f_i,f_j)=\delta_{ij}.
\end{equation}

The corresponding conjugate modes will obey:
\begin{equation}\label{1.10}
(f_i^*,f_j^*)=-\delta_{ij}.
\end{equation}

Assuming that the index denoting the modes is discrete, these can be used to expand our field as:
\begin{equation}\label{1.11}
\phi=\sum_i \left(\hat{a}_i f_i + \hat{a}_i^{\dagger} f_i^*\right),
\end{equation}
where $\hat{a}_i$ and $\hat{a}^{\dagger}_i$ are suitable operators for the expansion. The commutation relations of the operators $\hat{a}_i$ and $\hat{a}_i^{\dagger}$ are easily obtained if we plug this expansion into the canonical commutation relations that we have introduced previously:
\begin{equation}\label{1.12}
\begin{split}
&\left[\hat{a}_i,\hat{a}_j\right]=0,\\
&\left[\hat{a}_i^{\dagger},\hat{a}_j^{\dagger}\right]=0,\\
&\left[\hat{a}_i,\hat{a}_j^{\dagger}\right]=\delta_{ij}.\\
\end{split}
\end{equation}

As we note, these operators obey the characteristic commutation relations of creation and annihilation operators of the simple harmonic oscillator. The difference is that we have now an infinite number of them. For the harmonic oscillator, we use this operators to build a basis of the Hilbert space, consisting of the set of eigenfunctions of the harmonic oscillator. Now, since we do not have any preferred basis modes, our set of operators will define a vacuum state which depends on our election:
\begin{equation}\label{1.13}
\hat{a}_i \ket{0_f}=0 \hspace{5mm} \mathrm{for \ all} \ i.
\end{equation}

We put the subscript on the vacuum state to keep in mind that it is defined with respect to the modes $f_i$. The entire Fock space can be built from this. Generically, a state with different kind of excitations would be written as:
\begin{equation}\label{1.14}
\ket{n_1,n_2,...,n_i,...,n_j}=\frac{1}{\sqrt{n_1! n_2! ... n_i! ... n_j!}} \left(\hat{a}_1^{\dagger}\right)^{n_1} \left(\hat{a}_2^{\dagger}\right)^{n_2} ... \left(\hat{a}_i^{\dagger}\right)^{n_i} ... \left(\hat{a}_j^{\dagger}\right)^{n_j} \ket{0_f},
\end{equation}
where $n_i$ are the number of excitations of momenta $\vec{k}_i$. Acting on one of those states, the operators change the excitations as expected:
\begin{equation}\label{1.15}
\begin{split}
&\hat{a}_i \ket{n_1,n_2,...,n_i,...,n_j}= \sqrt{n_i} \ket{n_1,n_2,...,n_i-1,...,n_j},\\
&\hat{a}_i^{\dagger} \ket{n_1,n_2,...,n_i,...,n_j}= \sqrt{n_i+1} \ket{n_1,n_2,...,n_i+1,...,n_j}.\\
\end{split}
\end{equation}

For each mode we can also define a number operator $\hat{n}_{fi}$:
\begin{equation}\label{1.16}
\hat{n}_{fi}\equiv\hat{a}_i^{\dagger} \hat{a}_i.
\end{equation}

Those operators will obey the following eigenvalues equation:
\begin{equation}\label{1.17}
\hat{n}_{fi} \ket{n_1,n_2,...,n_i,...,n_j} = n_i \ket{n_1,n_2,...,n_i,...,n_j}.
\end{equation}

\section{Bogoliubov transformations}

Now, consider another set of orthonormal modes $g_i(x^\mu)$ with all the same properties as the original modes $f_i(x^\mu)$. The field operator may be expand in such a new complete basis as:
\begin{equation}\label{1.18}
\phi=\sum_i \left(\hat{b}_i g_i + \hat{b}_i^{\dagger} g_i^*\right).
\end{equation}

By performing this expansion we obtain a new set of annihilation and creation operators, obeying the usual commutation relations:
\begin{equation}\label{1.19}
\begin{split}
&\left[\hat{b}_i,\hat{b}_j\right]=0,\\
&\left[\hat{b}_i^{\dagger},\hat{b}_j^{\dagger}\right]=0,\\
&\left[\hat{b}_i,\hat{b}_j^{\dagger}\right]=\delta_{ij}.\\
\end{split}
\end{equation}

There will be a vacuum state associated with those pairs of operators:
\begin{equation}\label{1.20}
\hat{b}_i \ket{0_g}=0 \hspace{5mm} \mathrm{for \ all} \ i.
\end{equation}

As before, the Fock basis is constructed by repeated application of the creation operator on the vacuum state.

\

In flat spacetime we can choose a natural set of modes demanding they are positive-frequency with respect to the time coordinate. In the transition to curved spacetime we have lost this possibility. If one observer defines particles with respect to the set $f_i$ and another observer uses the set $g_j$, they will generally disagree on how many particles there are.

\

We can expand the different sets of modes in terms of the others:
\begin{equation}\label{1.21}
\begin{split}
&g_i=\sum_j\left(\alpha_{ij} f_j + \beta_{ij} f_j^*\right),\\
&f_i=\sum_j\left(\alpha_{ji}^* g_j - \beta_{ji} g_j^*\right),\\
\end{split}
\end{equation}
where $\alpha_{ij}$ and $\beta_{ij}$ are the corresponding matrix coefficients of the expansion. This transformation between the two sets of modes is known as the Bogoliubov transformation. The Bogoliubov coefficients are expressed as:
\begin{equation}\label{1.22}
\begin{split}
&\alpha_{ij}=(g_i,f_j),\\
&\beta_{ij}=-(g_i,f_j^*).\\
\end{split}
\end{equation}

These relations can be easily derived from equation (\ref{1.21}) using the orthonormality relations of the modes. The coefficients satisfy normalization conditions:
\begin{equation}\label{1.23}
\begin{split}
&\sum_k \left(\alpha_{ik}\alpha_{jk}^* - \beta_{ik}\beta_{jk}^*\right)=\delta_{ij},\\
&\sum_k \left(\alpha_{ik}\beta_{jk} - \beta_{ik}\alpha_{jk}\right)=0.\\
\end{split}
\end{equation}

We can relate the different sets of creation and annihilation operators using the expansion (\ref{1.18}) and the expressions (\ref{1.22}). Making so, we obtain:
\begin{equation}\label{1.24}
\begin{split}
\phi&=\sum_j \left(\hat{b}_j g_j + \hat{b}_j^{\dagger} g_j^*\right)\\
&=\sum_j \left(\hat{b}_j \sum_i\left(\alpha_{ji} f_i + \beta_{ji} f_i^*\right) + \hat{b}_i^{\dagger} \sum_i\left(\alpha_{ji}^* f_i^* + \beta_{ji}^* f_i\right)\right)\\
&=\sum_{i,j} \left[\left(\alpha_{ji} \hat{b}_j + \beta_{ji}^* \hat{b}_j^{\dagger}\right) f_i + \left(\beta_{ji} \hat{b}_j + \alpha_{ji}^* \hat{b}_j^{\dagger}\right) f_i^*\right]\\
&=\sum_{i} \left(\hat{a}_i f_i + \hat{a}_i^{\dagger} f_i^*\right).\\
\end{split}
\end{equation}

This way, we can identify terms and express the transformations of the operators in terms of the Bogoliubov coefficients:
\begin{equation}\label{1.25}
\begin{split}
\hat{a}_i=\sum_j \left(\alpha_{ji} \hat{b}_j + \beta_{ji}^* \hat{b}_j^{\dagger}\right),\\
\hat{b}_i=\sum_j \left(\alpha_{ij}^* \hat{a}_j - \beta_{ij}^* \hat{a}_j^{\dagger}\right).\\
\end{split}
\end{equation}

At this point, we can look out the dependence on our election of modes in the number of particles we observe. If we choose the vacuum state $\ket{0_f}$, in which there are no particles in virtue of (\ref{1.17}), we can calculate how many particles does an observer measure if she use the mode $g_i$. To do so, we calculate the expectation value of the number operator $\hat{n}_{gi}$:
\begin{equation}\label{1.26}
\begin{split}
\expval{\hat{n}_{gi}}{0_f}&=\expval{\hat{b}_i^{\dagger}\hat{b}_i}{0_f}\\
&=\ev**{\sum_{j,k} \left(\alpha_{ij} \hat{a}_j^{\dagger} - \beta_{ij} \hat{a}_j\right) \left(\alpha_{ik}^* \hat{a}_k - \beta_{ik}^* \hat{a}_k^{\dagger}\right)}{0_f}\\
&=\sum_{j,k} \beta_{ij} \beta_{ik}^* \ev{\hat{a}_j\hat{a}_k^{\dagger}}{0_f}\\
&=\sum_{j,k} \beta_{ij} \beta_{ik}^* \ev{\hat{a}_k^{\dagger}\hat{a}_j+\delta_{jk}}{0_f}\\
&=\sum_{j,k} \beta_{ij} \beta_{ik}^* \delta_{jk} \braket{0_f}\\
&=\sum_{j} \beta_{ij} \beta_{ij}^*=\sum_{j} |\beta_{ij}|^2.\\
\end{split}
\end{equation}

In this calculation we have taken into account the commutation relations of the annihilation and creation operators and how those act on the vacuum state. This quantity is in general non-zero. Thus, what it seems to be empty space for an observer, it could be full of particles for another.

\

Let us consider now an experimental set up. Firstly, we need to specify what is the definition of particle that uses a detector travelling in a curved spacetime. A detector will measure proper time along its trajectory, and with respect to it, we define a set of modes of positive ($\omega>0$) or negative ($\omega<0$) frequency:
\begin{equation}\label{1.27}
\frac{D}{\diff \tau} \ f_i = -i \omega f_i,
\end{equation}
where $D/\diff \tau$ is covariant differentiation with respect proper time $\tau$. If the spacetime is static, we have a timelike Killing vector $K^\mu$ \footnote{Requiring the spacetime to be static is a sufficient but non-necessary condition to have this Killing vector field. If it is stationary we find the same.}. Then we can choose coordinates a set of in which time-space cross terms in the metric cancel. In this case we can find separable solutions to (\ref{1.4}) of the form:
\begin{equation}\label{1.28}
f_{\omega}(t,\vec{x})=\mathrm{e}^{-i \omega t} \bar{f}(\vec{x}).
\end{equation}

Those modes may be described as positive frequency in a coordinate-invariant form as:
\begin{equation}\label{1.29}
\pounds_{K} f_{\omega}=-i\omega f_{\omega}.
\end{equation}

Now, the modes $f_i$ may be a natural basis for describing the Fock space of the detector if it follows an orbit of the Killing field, i.e. the four-velocity $u^\mu$ is proportional to $K^\mu$, and so proper time is proportional to $t$.

\subsection{Particle creation in non-stationary spacetimes}

Consider a spacetime which is static in the asymptotic past and future. Let us assume that in-between, we have a disturbance, i.e., for some time interval we have a time-dependent metric. Then, we can express a solution of the Klein-Gordon equation before the perturbation in terms of some normal modes with its annihilation and creation operators. This modes are solutions of the Klein-Gordon equation only before the disturbance. After that takes place, the field may be expressed in terms of a different set of modes with its corresponding Bogoliubov-transformed operators.

\begin{figure}[t]
	\centering
	\includegraphics[width=0.4\linewidth]{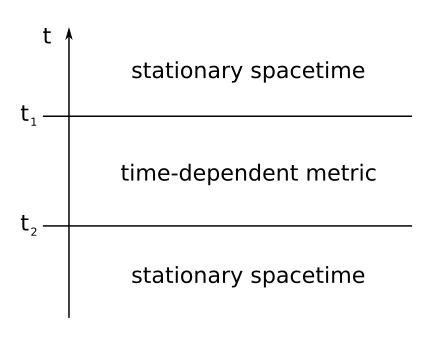}
	\caption{\textit{Diagram of a specetime which is stationary in the asymptotic past and future, but has a time-dependent part in-between.}}
	\label{Figure 1.1}
\end{figure}

The Bogoliubov transformation expresses the operators used in the asymptotic future in terms of the ones of the asymptotic past. As we have seen before, this leads to a possible particle detection when we are using the second set of modes even when at the beginning we are in the vacuum state of the original modes. This way, the disturbance has produced particles which did not exist earlier.

\section{The Unruh effect}

Following the ideas that we have introduced, as a previous step in order to reach the physical understanding of the Hawking radiation, we are going to study a phenomenon that occurs in flat spacetime, the Unruh effect \cite{unruh}. It consist of the discrepancy between the number of particles observed by inertial and accelerated observers in Minkowski spacetime, as a consequence of its different notions of positive-frequency modes.

\

The trajectory of an accelerated observer in Minkowski space will follow the orbit of a timelike Killing vector. As we have seen, we can expand a field in terms of an adequate set of modes in that case (\ref{1.28}). When we compare the vacuum state defined in the Minkowski space to the expectation value of the number operator for the accelerated (Rindler) observer we will obtain a thermal spectrum of particles.

\

To make the derivation as simple as possible, we consider the case of a massless scalar field in two dimensions. An observer in (rectilinear) motion with constant proper acceleration $\alpha$ would follow a trajectory \cite{rindler}:
\begin{equation}\label{1.30}
\begin{split}
&x^0(\tau)=t(\tau)=\frac{1}{\alpha} \ \sinh{(\alpha \tau)},\\
&x^1(\tau)=x(\tau)=\frac{1}{\alpha} \ \cosh{(\alpha \tau)}.\\
\end{split}
\end{equation}

The previous equations define a hyperbolic motion asymptoting to null paths in the past and the future:
\begin{equation}\label{1.31}
x^2(\tau)+t^2(\tau)=\alpha^2.
\end{equation}

Adapted to this kind of motion, we can define a new set of coordinates $(\eta,\xi)$ in Minkowski spacetime:
\begin{equation}\label{1.32}
\begin{split}
&t=\frac{\mathrm{e}^{a \xi}}{a} \ \sinh{(a \eta)},\\
&x=\frac{\mathrm{e}^{a \xi}}{a} \ \cosh{(a \eta)}.\\
\end{split}
\end{equation}

These coordinates cover only the wedge $x>|t|$, known as region I, which is the region accessible to an observer with constant proper acceleration in the direction of positive $x$. If we flip signs in the previous equations the coordinates will cover $x<|t|$, labelled as region IV. Let us note that we can not use the coordinates simultaneously in both regions. However, if we indicate explicitly in which region we are working there will be no problem.

\begin{figure}[t]
	\centering
	\includegraphics[width=0.6\linewidth]{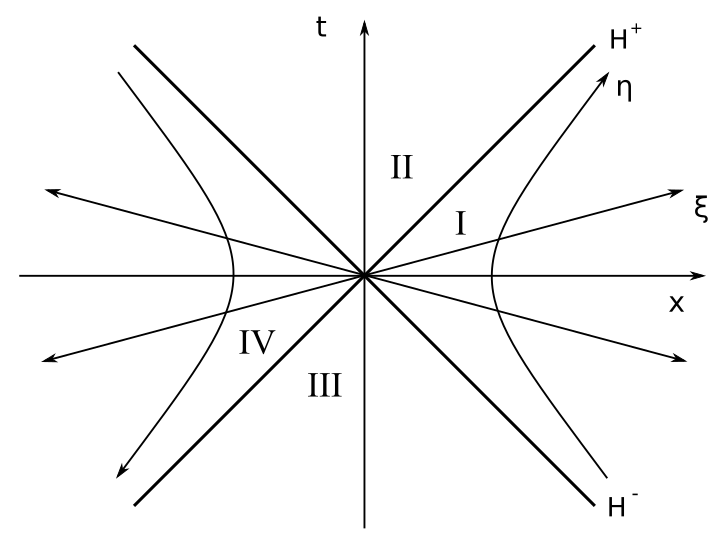}
	\caption{\textit{Diagram of Minkowski spacetime, using Rindler coordinates $(\eta,\xi)$.}}
	\label{Figure 1.2}
\end{figure}

In these coordinate system, the constant proper acceleration trajectory (\ref{1.30}) takes the form (we just need to equate equations (\ref{1.30}) and (\ref{1.32}), which is straightforward to solve):
\begin{equation}\label{1.33}
\begin{split}
&\eta(\tau)=\frac{\alpha}{a} \ \tau,\\
&\xi(\tau)=\frac{1}{a} \ \ln(\frac{a}{\alpha}).\\
\end{split}
\end{equation}

The line element of Minkowski spacetime in Minkowskian coordinates is:
\begin{equation}\label{1.34}
\diff s^2 = - \diff t^2 + \diff x^2.
\end{equation}

In the new coordinates, it is given by:
\begin{equation}\label{1.35}
\diff s^2 = \mathrm{e}^{2 a \xi} \left(- \diff \eta^2 + \diff \xi^2\right).
\end{equation}

The metric is independent of $\eta$, and because of this it is direct that $\partial_\eta$ is a Killing vector. Thus, this vector field may be used to define positive-frequency modes to build a proper Fock basis of the Hilbert space, as discussed in the previous section. The massless Klein-Gordon equation, in this new set of coordinates (known as Rindler coordinates), is:
\begin{equation}\label{1.36}
\Box \phi= \mathrm{e}^{2 a \xi} \left(-\partial_\eta^2 +\partial_\xi^2\right) \phi=0.
\end{equation}

The solution modes in region I must have positive frequency with respect to the future-directed Killing vector $\partial_\eta$. Since this vector is past-directed in region IV, in that portion of the Minkowski spacetime we are going to build positive-frequency modes with respect to $\partial_{-\eta}=-\partial_\eta$, which is future-directed in IV. Properly normalized plane waves solve equation (\ref{1.36}), so the two sets of modes we must introduce take the form:
\begin{equation}\label{1.37}
\begin{split}
&g_k^{(1)}=\left\{\begin{array}{c}\frac{\mathrm{e}^{\imath(k\xi-\omega\eta)}}{\sqrt{4\pi\omega}}\\ 0\end{array}\right. \left.\begin{array}{c} \mathrm{I} \\ \mathrm{IV}\end{array}\right.,\\
&g_k^{(2)}=\left\{\begin{array}{c}0\\ \frac{\mathrm{e}^{\imath(k\xi-\omega\eta)}}{\sqrt{4\pi\omega}}\end{array}\right. \left.\begin{array}{c} \mathrm{I} \\ \mathrm{IV}\end{array}\right.,\\
\end{split}
\end{equation}
where $\omega=|k|$. With respect to the suitable future-directed Killing vector, the modes are positive-frequency:
\begin{equation}\label{1.38}
\begin{split}
&\partial_\eta g_k^{(1)}=-\imath \omega g_k^{(1)},\\
&\partial_{-\eta} g_k^{(2)}=-\imath \omega g_k^{(1)}.\\
\end{split}
\end{equation}

The two sets of modes, together with its complex conjugates, form a basis of the space of solutions of the Klein-Gordon equation through the entire spacetime. Assuming the index of the modes to be continuous, we can expand any solution $\phi$ of (\ref{1.36}) in the form:
\begin{equation}\label{4.39}
\phi=\int \diff k \left(\hat{b}_k^{(1)} g_k^{(1)}+\hat{b}_k^{(1)\dagger}g_k^{(1)*}+\hat{b}_k^{(2)} g_k^{(2)}+\hat{b}_k^{(2)\dagger}g_k^{(2)*}\right).
\end{equation}

Alternatively, we can expand the field in terms of a set of usual Minkowski modes:
\begin{equation}\label{4.40}
\phi=\int \diff k \left(\hat{a}_{k} f_{k}+\hat{a}_{k}^\dagger f_{k}^*\right).
\end{equation}

It is straightforward to calculate the inner products of the modes (\ref{1.37}) and check that gives the same result as using the ordinary modes. In this case, as we have seen previously, the Hilbert space is the same but the Fock spaces that generate our distinct creation and annihilation operators will be different. In order to probe this, we should evaluate the average of the Rindler number operator in the Minkowski vacuum state.

\

Since that would be a difficult task, we look for a more direct option. We may find a set of modes which overlaps easily with the Rindler modes and share at least the same vacuum state as the Minkowski modes. To do so, we can start with Rindler modes and extend them analytically through the entire spacetime in terms of the Minkowski coordinates. We may write for the exponentials in (\ref{1.32}) and its analogous for region IV the following:
\begin{equation}\label{1.41}
\mathrm{e}^{\mp a(\eta \mp \xi)}=\left\{\begin{array}{c}a(\mp t+x)\\a(\pm t -x)\end{array}\right.\begin{array}{c}\mathrm{I}\\\mathrm{IV}\end{array}.
\end{equation}

So, the Rindler modes (those with $k>0$, so $k=\omega$) may be written as:
\begin{equation}\label{1.42}
\begin{split}
&g_k^{(1)}=\frac{\left[a(-t+x)\right]^{\imath \omega/a}}{\sqrt{4 \pi \omega}},\\
&g_k^{(2)}=\frac{\left[a(-t-x)\right]^{-\imath \omega/a}}{\sqrt{4 \pi \omega}}.\\
\end{split}
\end{equation}

The analytical extension is performed by using this expression for any $(t,x)$. We see that at $t=0$ our modes do not coincide. We can reverse the wave number of $g_k^{(2)}$ and then take the complex conjugate:
\begin{equation}\label{1.43}
\begin{split}
g_{-k}^{(2)*}&=\frac{\left[a(t-x)\right]^{\imath \omega/a}}{\sqrt{4 \pi \omega}}\\
&=\frac{\left[a \ \mathrm{e}^{-\imath \pi}(-t+x)\right]^{\imath \omega/a}}{\sqrt{4 \pi \omega}}.\\
\end{split}
\end{equation}

The same procedure may be done for the other set of modes:
\begin{equation}\label{1.44}
\begin{split}
g_{-k}^{(1)*}&=\frac{\left[a(t+x)\right]^{-\imath \omega/a}}{\sqrt{4 \pi \omega}}\\
&=\frac{\left[a \ \mathrm{e}^{\imath \pi}(-t-x)\right]^{-\imath \omega/a}}{\sqrt{4 \pi \omega}}.\\
\end{split}
\end{equation}

We can build some linear combinations which are well defined over the surface $t=0$:
\begin{equation}\label{1.45}
\begin{split}
&h_k^{(1)}=\mathrm{e}^{\pi\omega/2a} \ g_k^{(1)} + \mathrm{e}^{-\pi\omega/2a} \ g_{-k}^{(2)*},\\
&h_k^{(2)}=\mathrm{e}^{\pi\omega/2a} \ g_k^{(2)} + \mathrm{e}^{-\pi\omega/2a} \ g_{-k}^{(1)*}.\\
\end{split}
\end{equation}

We can obtain the normalization by computing the inner product, considering that the Rindler modes are orthonormal.
\begin{equation}\label{1.46}
\begin{split}
\left(h_{k_1}^{(1)},h_{k_2}^{(1)}\right)&=\mathrm{e}^{\pi/2a \left(\omega_1+\omega_2\right)} \left(g_{k_1}^{(1)},g_{k_2}^{(1)}\right)+\mathrm{e}^{-\pi/2a \left(\omega_1+\omega_2\right)} \left(g_{-k_1}^{(2)*},g_{-k_2}^{(2)*}\right)\\
&=\mathrm{e}^{\pi/2a \left(\omega_1+\omega_2\right)} \ \delta \left(k_1-k_2\right)+\mathrm{e}^{-\pi/2a \left(\omega_1+\omega_2\right)} \ \delta \left(-k_1+k_2\right)\\
&=\left(\mathrm{e}^{\pi\omega_1/a}-\mathrm{e}^{-\pi\omega_1/a}\right) \delta \left(k_1-k_2\right)\\
&=2 \sinh(\frac{\pi\omega_1}{a}) \ \delta \left(k_1-k_2\right).\\
\end{split}
\end{equation}

We can redefine the modes in order to orthonormalize them:
\begin{equation}\label{1.47}
\begin{split}
&h_k^{(1)}=\frac{\mathrm{e}^{\pi\omega/2a} \ g_k^{(1)} + \mathrm{e}^{-\pi\omega/2a} \ g_{-k}^{(2)*}}{\sqrt{2 \sinh(\frac{\pi\omega}{a})}},\\
&h_k^{(2)}=\frac{\mathrm{e}^{\pi\omega/2a} \ g_k^{(2)} + \mathrm{e}^{-\pi\omega/2a} \ g_{-k}^{(1)*}}{\sqrt{2 \sinh(\frac{\pi\omega}{a})}}.\\
\end{split}
\end{equation}

A Rindler mode, for example, $g_{k}^{(1)}$, at $t=0$ is only valid in the positive number semi-line $\mathbb{R}^+$. As a consequence, it can not be expressed in terms of positive frequency plane waves. Then Rindler's annihilation operator is a superposition of the Minkowski's creation and annihilation operators. Unlike this, the new modes may be expanded as purely positive-frequency plane waves since they are analytic in the same portion of the complex plane of its arguments as the Minkowski modes. Thus, their annihilation operators may produce the same vacuum state:
\begin{equation}\label{1.48}
\hat{a}_k \ket{0_M}=\hat{c}_k^{(1)} \ket{0_M}=\hat{c}_k^{(2)} \ket{0_M}=0,
\end{equation}
where $\ket{0_M}$ is the vacuum state associated with the Minkowski modes. The excitations may not coincide, but as we are only interested in the vacuum state this will work. As discussed in the previous section, the Bogoliubov transformation between two set of modes provide an expression of the ladder operators associated to one set of modes in terms of the other ones. Since we know the transformation between $h_{k}^{(1,2)}$ and $g_{k}^{(1,2)}$, we have for the operators:
\begin{equation}\label{1.49}
\begin{split}
&\hat{b}_k^{(1)}=\frac{\mathrm{e}^{\pi\omega/2a} \ \hat{c}_k^{(1)} + \mathrm{e}^{-\pi\omega/2a} \ \hat{c}_{-k}^{(2)\dagger}}{\sqrt{2 \sinh(\frac{\pi\omega}{a})}},\\
&\hat{b}_k^{(2)}=\frac{\mathrm{e}^{\pi\omega/2a} \ \hat{c}_k^{(2)} + \mathrm{e}^{-\pi\omega/2a} \ \hat{c}_{-k}^{(1)\dagger}}{\sqrt{2 \sinh(\frac{\pi\omega}{a})}}.\\
\end{split}
\end{equation}

Hence, it is possible to express the Rindler number operators in terms of these $\hat{c}_k^{(1,2)}$. For an observer moving in region I we can calculate the expected number of particles in Minkowski vacuum. As $\hat{c}_k^{(1)\dagger} \ket{0_M}$ is a one-particle state, it will only survive the term with $\hat{c}_{-k}^{(2)}\hat{c}_{-k}^{(2)\dagger}$\footnote{In \cite{c.qft} equation (9.163) the term that survives is the one with $\hat{c}_{-k}^{(1)}\hat{c}_{-k}^{(1)\dagger}$ but if you perform the calculation that one does not appear at all.}, which is a part of the product $\hat{n}_{gk}^{(1)}=\hat{b}_{k}^{(1)\dagger}\hat{b}_{k}^{(1)}$. Explicitly, one can write:
\begin{equation}\label{1.50}
\begin{split}
\expval{\hat{n}_{gk}^{(1)}}{0_M}&=\expval{\hat{b}_{k}^{(1)\dagger}\hat{b}_{k}^{(1)}}{0_M}\\
&=\frac{\expval{\mathrm{e}^{-\pi\omega/a} \ \hat{c}_{-k}^{(2)}\hat{c}_{-k}^{(2)\dagger}}{0_M}}{2 \sinh(\frac{\pi\omega}{a})}\\
&=\frac{\mathrm{e}^{-\pi\omega/a} \ \delta(0)}{2 \sinh(\frac{\pi\omega}{a})}\\
&=\frac{\mathrm{e}^{-\pi\omega/a} \ \delta(0)}{\mathrm{e}^{\pi\omega/a}-\mathrm{e}^{-\pi\omega/a}}\\
&=\frac{\delta(0)}{\mathrm{e}^{2\pi\omega/a}-1}.\\
\end{split}
\end{equation}

We arrive to a result that looks like a Planck spectrum\footnote{The delta is a consequence of using plane waves: if the set of modes were defined as wave packets, we would arrive to a finite solution.}, with a temperature:
\begin{equation}\label{1.51}
T_U=\frac{a}{2\pi}.
\end{equation}

Reintroducing constants using dimensional analysis, one gets:
\begin{equation}\label{1.52}
T_U=\frac{\hbar a}{2\pi k_B c}.
\end{equation}

We see that if we take the limit $\hbar\rightarrow 0$ this temperature vanishes, which points out its quantum-mechanical origin. This is the so-called Unruh effect: an observer moving with uniform acceleration (a Rindler observer) will observe a thermal spectrum of particles even in Minkowski vacuum. This reveals the thermal nature of vacuum in field theory.

\

In this chapter we have obtain a relation between the non-stationary nature of a spacetime and the creation of particles in it, and how to quantify it in terms of the Bogoliubov coefficients. We have applied that to the case of accelerated observers in Minkoski space, obtaining the Unruh effect. In the next chapter we will apply again that, but now to the formation of a black hole. Since this conforms a non-stationary spacetime, we expect again a particle production.
\chapter{Hawking radiation}\label{Chapter 3}

In this chapter, we are going to study in detail the relation between the non-stationary nature of a black hole spacetime and the creation of particles. The discovery of the thermal spectrum of emission of black holes sets new standards in the way we think about them. As they emit particles, they must have a non-zero temperature and therefore they are in an equilibrium state. This considerations establish a connection with the thermodynamics of black holes. Black holes form from gravitational collapse and by this mechanism they would evaporate and disappear in a finite amount of time.

\

We will start by reviewing some concepts that will appear in our discussion, such as Killing horizon, surface gravity and redshift factor. Later, we will explore a derivation of the Hawking effect based on the Unruh effect \footnote{We will mainly follow the discussions in \cite{c.qft} and \cite{j.bh}.}.

\

Finally we will discuss the Hawking effect in a simple scenario of black hole formation, following the ideas of particle creation in non-stationary spacetimes, which were succinctly mentioned previously. Lastly, we explore the eventual evaporation of a black hole as a consequence of particle emission. We can find further insight into these topics in \cite{f.mbhe} and \cite{v.bms} respectively.

\section{Surface gravity, redshift, acceleration}

A \textbf{Killing horizon} is defined as a null hypersurface along which certain Killing vector field is null. This is a concept that is independent from the one of event horizon, but sometimes closely related. By definition, the Killing vector turns out to be normal to the associate Killing horizon.

\

In Minkowski spacetime, for example, there are no event horizons at all. However, consider the Killing vector field which generates a boost in the $x$-direction:
\begin{equation}\label{2.1}
\chi=x\partial_t+t\partial_x.
\end{equation}

The norm of this vector is easily computed:
\begin{equation}\label{2.2}
\chi_\mu \chi^\mu=-x^2+t^2.
\end{equation}

This expression is null at the null surfaces:
\begin{equation}\label{2.3}
x=\pm t.
\end{equation}

Thus, those surfaces are Killing horizons. In general, since every Killing vector is normal to a Killing horizon, it must obey a geodesic equation along this:
\begin{equation}\label{2.4}
\xi^\mu \nabla_\mu \xi^\nu=\xi^{\mu} \xi^{\nu}_{\ ;\mu}=-\kappa \xi^\nu.
\end{equation}

As the curves are not affine-parametrized, in general the right-hand side is nonzero. The parameter $\kappa$ is known as \textbf{surface gravity}, which is constant over the horizon (aside from some exceptions). Invoking the Killing equation $\pounds_\xi g_{\mu\nu}=0$ and the fact that the Killing vector field is normal to the null surface, $\xi_{\left[\mu\right.} \nabla_\nu \xi_{\left.\sigma\right]}=0$\footnote{This is a consequence of Frobenius's theorem, as it is indicated in Appendix B of \cite{w.qft}.}, one can obtain a simpler expression for the surface gravity. If we expand the relation for hypersurface orthogonality and plug the Killing equation, we obtain:
\begin{equation}\label{2.5}
\begin{split}
\xi_{\left[\mu\right.} \xi_{\left.\sigma;\nu\right]}&=\frac{1}{6} \left(\xi_\mu \xi_{\sigma;\nu}-\xi_\nu \xi_{\sigma;\mu}+\xi_\nu \xi_{\mu;\sigma}-\xi_\sigma \xi_{\mu;\nu}+\xi_\sigma \xi_{\nu;\mu}-\xi_\mu \xi_{\nu;\sigma}\right)\\
&=\frac{1}{6} \left(\xi_\mu \xi_{\sigma;\nu}+\xi_\nu \xi_{\mu;\sigma}+\xi_\nu \xi_{\mu;\sigma}+\xi_\sigma \xi_{\nu;\mu}+\xi_\sigma \xi_{\nu;\mu}+\xi_\mu \xi_{\sigma;\nu}\right)\\
&=\frac{1}{3} \left(\xi_\mu \xi_{\sigma;\nu}+\xi_\nu \xi_{\mu;\sigma}+\xi_\sigma \xi_{\nu;\mu}\right)=0.\\
\end{split}
\end{equation}

Now we can contract that expression with $\xi^{\sigma}\xi^{\nu;\mu}$ and use the geodesic equation (\ref{2.4}) to get:
\begin{equation}\label{2.6}
\begin{split}
-\xi_\sigma \xi^{\sigma} \ \xi^{\nu;\mu}\xi_{\nu;\mu}&=\xi_\mu\xi^{\nu;\mu} \ \xi^{\sigma} \xi_{\sigma;\nu}+ \xi_\nu\xi^{\nu;\mu} \ \xi^{\sigma}\xi_{\mu;\sigma}\\
&=-\xi_\mu\xi^{\nu;\mu} \ \xi^{\sigma} \xi_{\nu;\sigma}- \xi_\nu\xi^{\mu;\nu} \ \xi^{\sigma}\xi_{\mu;\sigma}\\
&=-\xi_\mu\xi^{\ ;\mu}_{\nu} \ \xi^{\sigma} \xi_{\ ;\sigma}^{\nu}- \xi_\nu\xi^{\ ;\nu}_\mu \ \xi^{\sigma}\xi_{\ ;\sigma}^{\mu}\\
&=\kappa^2 \xi_\nu \xi^\nu + \kappa^2 \xi_\mu \xi^\mu.\\
\end{split}
\end{equation}

Finally we end up with the desired equation:
\begin{equation}\label{2.7}
\kappa^2=-\frac{1}{2} \xi_{\nu;\mu} \xi^{\nu;\mu}=-\frac{1}{2} \nabla_\mu \xi_\nu \nabla^\mu \xi^\nu.
\end{equation}

In a static, asymptotically flat spacetime we can interpret the surface gravity as the acceleration of a fixed observer near the horizon, as seen by an observer at infinity. In this case we can normalize the time translation Killing vector $K=\partial_t$ as $K_\mu K^\mu\stackrel{r \rightarrow \infty}{=} -1$ in order to obtain a unique value for the surface gravity.

\

For a fixed (or static) observer, its velocity is proportional to the Killing field $K^\mu$:
\begin{equation}\label{2.8}
K^\mu=V \ U^\mu.
\end{equation}

But as the velocity is normalized to $U_\mu U^\mu=-1$, this proportionality function results:
\begin{equation}\label{2.9}
\begin{split}
K_\mu K^\mu&=K_\mu V \ U^\mu\\
&=V^2 \ U_\mu U^\mu\\
&=-V^2.
\end{split}
\end{equation}

Solving for $V$ we get:
\begin{equation}\label{2.10}
V=\sqrt{-K_\mu K^\mu}.
\end{equation}

This quantity $V$ is called the \textbf{redshift factor}. It is a measure of the frequency shift of a radiation emitted and observed by static observers. In terms of its wavelengths, the relation takes the form:
\begin{equation}\label{2.11}
\lambda_2=\frac{V_2}{V_1} \ \lambda_1.
\end{equation}

At infinity $V=1$ and $\lambda_\infty=\lambda/V$. Now, we can relate the surface gravity with the four-acceleration $a^\mu$. It is given by:
\begin{equation}\label{2.12}
a^\mu=\frac{\mathrm{D} U^\mu}{\diff \lambda}=U^\nu \nabla_\nu U^\mu.
\end{equation}

This may be re-expressed using the redshift factor as:
\begin{equation}\label{2.13}
a_\mu=\nabla_\mu \ln(V).
\end{equation}

Its magnitude is straightforward to compute:
\begin{equation}\label{2.14}
a=\sqrt{a_\mu a^\mu}=\frac{\sqrt{\nabla_\mu V \nabla^\mu V}}{V}.
\end{equation}

This quantity diverges at the horizon, since the Killing vector is null along it. Namely, what an observer at infinity would measure is that the acceleration of a fixed observer is redshifted, so we recover the surface gravity:
\begin{equation}\label{2.15}
\begin{split}
\kappa^2&=(a \ V)^2= \nabla_\mu V \nabla^\mu V\\
&=\nabla_\mu \sqrt{-K_\nu K^\nu} \ \nabla^\mu \sqrt{-K_\sigma K^\sigma}\\
&=-\frac{1}{4 K_\nu K^\nu} \nabla_\mu (K_\nu K^\nu) \ \nabla^\mu (K_\sigma K^\sigma)\\
&=-\frac{1}{4 K_\nu K^\nu}  (K_\nu \nabla_\mu  K^\nu + K^\nu \nabla_\mu  K_\nu) \ (K_\sigma \nabla^\mu  K^\sigma + K^\sigma \nabla^\mu  K_\sigma)\\
&=-\frac{1}{4 K_\nu K^\nu} \ 2 K^\nu \nabla_\mu K_\nu \ K_\nu \nabla^\mu  K^\nu\\
&=-\frac{1}{2}\ \nabla_\mu K_\nu \nabla^\mu  K^\nu.
\end{split}
\end{equation}

\section{The Hawking effect}

Consider a non-rotating fixed observer near a Schwarzschild black hole. The Schwarzschild metric is given by: 
\begin{equation}\label{2.16}
\diff s^2=-\left(1-\frac{2M}{r}\right) \diff t^2 + \left(1-\frac{2M}{r}\right)^{-1} \diff r^2 + r^2 \diff \Omega^2.
\end{equation}

The vector $K^\mu=\left[1,0,0,0\right]$ is a Killing vector for this metric. The velocity of the fixed observer must be zero in every directions except the time-like. Applying the normalization of the four-velocity one gets:
\begin{equation}\label{2.17}
U^{\mu}U_{\mu}=U^{\mu}U^{\nu}g_{\mu \nu}=\left(U^0\right)^2 g_{00}=-1.
\end{equation}
\begin{equation}\label{2.18}
U^{\mu}=\left[\left(1-\frac{2M}{r}\right)^{-1/2},0,0,0\right].
\end{equation}

Now, we can compute the redshift factor for this geometry, since the only non-vanishing components of the Killing vector and the velocity are the $0$'s:
\begin{equation}\label{2.19}
V=\left(1-\frac{2M}{r}\right)^{1/2}.
\end{equation}

It is straightforward now to calculate the magnitude of the acceleration using (\ref{2.14}):
\begin{equation}\label{2.20}
\begin{split}
a&=\frac{\sqrt{\nabla_\mu V \nabla^\mu V}}{V}\\
&=\frac{M}{r^2\left(1-\frac{2M}{r}\right)^{1/2}}.\\
\end{split}
\end{equation}

If our observer is pretty close to the event horizon then $r-2M\ll2M$ and the acceleration becomes:
\begin{equation}\label{2.21}
a\gg\frac{1}{2M}.
\end{equation}

The time scale, which is the inverse of the acceleration, is therefore small compared to the Schwarzschild radius. Thus the curvature is negligible and the spacetime is basically Minkowski. Then the quantum fluctuations of the vacuum will be the ones of flat spacetime. In that sense, a freely falling observer would measure for a scalar field the Minkowski vacuum near the horizon.

\

Returning to our previous considerations, the fixed observer near the horizon would experience the Unruh effect. Now another observer at a large distance compared with the Schwarzschild radius will not measure a radiation of the Unruh type. Instead, she will measure the radiation observed near the horizon propagating with an appropriate redshift. The temperature of such a thermal radiation is:
\begin{equation}\label{2.22}
T_2=\frac{V_1}{V_2} \ T_1= \frac{V_1}{V_2} \ \frac{a}{2\pi}.
\end{equation}

As we discussed in the previous section, at infinity, $V_2=1$. Therefore:
\begin{equation}\label{2.23}
T_H=\lim_{r\rightarrow 2M} \frac{a \ V}{2\pi}=\frac{\kappa}{2 \pi}.
\end{equation}

When this observer is far from the black hole, she measures a thermal radiation at a certain temperature proportional to its surface gravity. This is known as Hawking effect, and the radiation itself the Hawking radiation. For the Schwarzschild geometry the surface gravity results:
\begin{equation}\label{2.24}
\kappa=\lim_{r\rightarrow 2M}a \ V=\lim_{r\rightarrow 2M}\frac{M}{r^2}=\frac{1}{4M}.
\end{equation}

And so, the Hawking temperature is:
\begin{equation}\label{2.25}
T_H=\frac{1}{8\pi M}.
\end{equation}

If we restore units by using dimensional analysis, we find out:
\begin{equation}\label{2.26}
T_H=\frac{\hbar c^3}{8\pi G k_B M}.
\end{equation}

It is notable saying that this is the only formula where $\hbar$, $c$, $k_B$ and $G$ appear simultaneously. It relates the quantum effects with the macroscopic/thermodynamic world by means of the gravitational interaction. This derivation obviously fails when the state of the quantum field is not regular near the horizon. The only vacuum state that is regular everywhere and invariant under $K=\partial_t$ is named the Hartle-Hawking state.

\

\section{Hawking radiation as a consequence of gravitational collapse}

We will consider the simplest black hole formation process. The black hole is generated by the collapse of a single shock wave. Such an spacetime has a line element in the advanced Eddington-Finkelstein gauge:
\begin{equation}\label{2.27}
\diff s^2=-\left(1-\frac{2 M(v)}{r}\right) \diff v^2 + 2 \diff v \diff r + r^2 \diff \Omega^2.
\end{equation}

The mass function describes the location of the shock wave. If it is placed at some $v=v_0$, then we will have:
\begin{equation}\label{2.28}
M(v)=M \ H(v-v_0),
\end{equation}
where the function $H(v-v_0)$ is a displaced Heaviside step function. Therefore, this spacetime is a composite of two patches, one Minkowski and one Schwarzschild. This is the simplest non-stationary spacetime we can build in which we expect particle creation in the way we saw in Section \textbf{4.2.1}. For convenience we will refer t the Minkowski region as ``in'' region and the ``out'' region will be the Schwarzschild one.

\begin{figure}[t]
	\centering
	\includegraphics[width=0.4\linewidth]{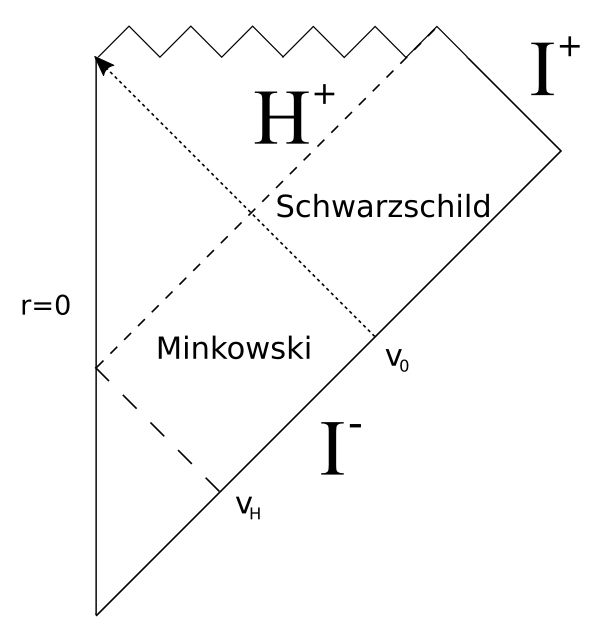}
	\caption[\textit{Formation of a Schwarzschild black hole by a shock wave at $v=v_0$.}]{\textit{Formation of a Schwarzschild black hole by a shock wave at $v=v_0$. $\mathrm{I}^+$ and $\mathrm{I}^+$ are the future and past null infinities respectively.}}
	\label{Figure 2.1}
\end{figure}

Let us consider now the massless Klein-Gordon equation. Led by the spherical symmetry of our spacetime, it is convenient to expand the field in terms of the spherical harmonics:
\begin{equation}\label{2.29}
\phi(t,\vec{x})=\sum_{l,m} \frac{f_l(t,r)}{r} Y_m^l(\theta,\varphi).
\end{equation}

In terms of common directional derivatives, the box operator (\ref{1.5}) takes the form:
\begin{equation}\label{2.30}
\begin{split}
\square \phi&= g^{\mu \nu} \nabla_\mu \nabla_\nu \phi\\
&=\frac{1}{\sqrt{g}} \ \partial_\mu \left(\sqrt{g} \ g^{\mu \nu} \partial_\nu \phi\right).\\
\end{split}
\end{equation}

For Minkowski, we have $\sqrt{g}=r^2 \sin \theta$ and hence the Klein-Gordon equation reduces to:
\begin{equation}\label{2.31}
\begin{split}
\square \phi&= -\partial_t^2 f_l \ \frac{Y_m^l}{r} + \partial_r^2 f_l \ \ \frac{Y_m^l}{r}\\
&\phantom{=} + \left[\frac{1}{r^2 \sin \theta} \ \partial_\theta \left(\sin\theta \ \partial_\theta Y_m^l\right)+\frac{1}{r^2 \sin^2\theta} \ \partial_\varphi^2 Y_m^l\right] \frac{f_l}{r}\\
&=0.\\
\end{split}
\end{equation}

We can perform the analogous calculation with the Schwarzchild metric with the tortoise radial coordinate:
\begin{equation}\label{2.32}
\diff s^2= \left(1-\frac{2M}{r}\right) \left(-\diff t^2 + \diff r_*^2\right) + r^2 \diff \Omega^2,
\end{equation}
where $r_*$ is the tortoise radial coordinate, defined as:
\begin{equation}\label{2.33}
r_*=r+2M \log\left(\frac{r-2M}{2M}\right)
\end{equation}

The Klein-Gordon equation turns out to be:
\begin{equation}\label{2.34}
\begin{split}
\square \phi&=-\left(1-\frac{2M}{r}\right)^{-1} \partial_t^2 f_l \ \frac{Y_m^l}{r}\\
&\hphantom{=}+ \left(1-\frac{2M}{r}\right)^{-1} \frac{1}{r^2} \left[Y_m^l \ r \ \partial_{r_*}^2 f_l  - \frac{2M}{r^2} \left(1-\frac{2M}{r}\right) Y_m^l \ f_l\right]\\
&\hphantom{=} + \left[\frac{1}{r^2 \sin \theta} \ \partial_\theta \left(\sin\theta \ \partial_\theta Y_m^l\right)+\frac{1}{r^2 \sin^2\theta} \ \partial_\varphi^2 Y_m^l\right] \frac{f_l}{r}\\
&=0.\\
\end{split}
\end{equation}

We may use the definition of the spherical harmonics in terms of the Laplacian on the 2-sphere. Because the spherical harmonics satisfy the Laplacian equation on the 2-sphere:
\begin{equation}\label{2.35}
\begin{split}
\Delta Y_m^l(\theta,\phi)&=\frac{1}{r^2 \sin \theta} \ \partial_\theta \left(\sin\theta \ \partial_\theta Y_m^l\right)+\frac{1}{r^2 \sin^2\theta} \ \partial_\varphi^2 Y_m^l\\
&=-\frac{l(l+1)}{r^2} \ Y_m^l(\theta,\phi),
\end{split}
\end{equation}
the above equation simplifies. Plugging this into our previous results and rearranging the expressions we get two two-dimensional wave equations. In the Minkowski case:
\begin{equation}\label{2.36}
\left[-\partial_t^2+\partial_r^2-\frac{l(l+1)}{r^2}\right]f_l(t,r)=0.
\end{equation}

For the Schwarzschild geometry, it takes the form:
\begin{equation}\label{52.37}
\left[-\partial_t^2+\partial_{r_*}^2-\left(1-\frac{2M}{r}\right)\left(\frac{l(l+1)}{r^2}+\frac{2M}{r^3}\right)\right] f_l(t,r)=0.
\end{equation}

We see that at the horizon, $r=2M$, the potential appearing in the equations vanishes. Since we are interested in phenomena happening near the horizon we will neglect the potential everywhere. This way, in both regions the free field equations are satisfied. Even more, we can assume a harmonic time dependence:
\begin{equation}\label{2.38}
f(t,r)=\mathrm{e}^{-i\omega t} f(r).
\end{equation}

The free field equations now take a very simple form. In the case of the Minkowskian ``in'' region ($v<v_0$) we have:
\begin{equation}\label{2.39}
\frac{\diff^2 f(r)}{\diff r^2} + \omega^2 f(r)=0,
\end{equation}
whereas for the Schwarzschild ``out'' region ($v>v_0$):
\begin{equation}\label{2.40}
\frac{\diff^2 f(r)}{\diff r_*^2} + \omega^2 f(r)=0.
\end{equation}

In order to give a solution to this equations in a natural set of coordinates, we are going to introduce a set of null coordinates in each region. In the Minkowski region we will have $u_{in}=t_{in}-r_{in}$ and $v=t_{in}+r_{in}$. Therefore the line element is:
\begin{equation}\label{2.41}
\diff s^2=-\diff v \diff u_{in} + r_{in}^2 \diff \Omega^2.
\end{equation}

In the out, Schwarzschild, region this coordinates result $u_{out}=t_{out}-r_{*out}$ and $v=t_{out}+r_{*out}$, and hence:
\begin{equation}\label{2.42}
\diff s^2=-\left(1-\frac{2M}{r_{out}}\right)\diff v \diff u_{out} + r_{out}^2 \diff \Omega^2.
\end{equation}

The solutions in the ``in'' and ``out'' region are ingoing and outgoing plane waves, respectively. We will take the modes in each region associated with the natural time parameter, at $I^-$ in the Minkowski part, and at $I^+$ in the Schwarzschild one. In the first case the modes take the form:
\begin{equation}\label{2.43}
f_\omega^{in}(v)=\zeta_\omega^{in} \ \mathrm{e}^{-i \omega v},
\end{equation}
where $\omega$ is its frequency and $\zeta^{out}_{\omega}$ is its amplitude. In the out region we have a similar expression:
\begin{equation}\label{2.44}
f_\omega^{out}(u_{out})=\zeta_\omega^{out} \ \mathrm{e}^{-i \omega u_{out}},
\end{equation}
where $\zeta^{out}_{\omega}$ is its amplitude. The normalization constants $\zeta_\omega^{in}$ and $\zeta_\omega^{out}$ may be fixed by imposing the normalization conditions. To do so, we calculate the inner product $\left(f_{\omega}^{in},f_{\omega'}^{in}\right)$. As the procedure is analogous in both cases, we will restrict ourselves to the calculation of the ``in'' case.

\begin{equation}\label{2.45}
\begin{split}
\left(f_{\omega}^{in},f_{\omega'}^{in}\right)&=-i \int_{I^{-}} \diff v \ r^2 \diff \Omega \left(f_{\omega}^{in} \partial_v f_{\omega'}^{in *}-f_{\omega'}^{in *} \partial_v f_{\omega}^{in}\right)\\
&=-4\pi r^2 i \int_{-\infty}^{+\infty} \diff v \left[i \omega' \zeta_\omega^{in} \zeta_{\omega'}^{in *} \ \mathrm{e}^{-i (\omega-\omega') v}+i \omega \zeta_\omega^{in} \zeta_{\omega'}^{in *} \ \mathrm{e}^{-i (\omega-\omega') v}\right]\\
&=4\pi r^2 \zeta_\omega^{in} \zeta_{\omega'}^{in *} (\omega+\omega') \int_{-\infty}^{+\infty} \diff v \ \mathrm{e}^{-i (\omega-\omega') v}\\
&=8\pi^2 r^2 \zeta_\omega^{in} \zeta_{\omega'}^{in *} (\omega+\omega') \ \delta(\omega-\omega').
\end{split}
\end{equation}

By assuming the constant to be real, its value turns out to be:
\begin{equation}\label{2.46}
\zeta_\omega^{in}=\frac{1}{4\pi r\sqrt{\omega}}.
\end{equation}

The same result arises when we calculate the normalization constant for the ``out'' region, just integrating over $I^+$\footnote{The scalar product between solutions of the K-G equation must be defined by an integration over a Cauchy surface. Formally, $I^+$ is not a proper Cauchy surface, we must add the event horizon $H^+$ in order to be consistent. But the modes crossing $H^+$ do not affect the calculation of particle production, so we are going to omit them.} and imposing normalization:
\begin{equation}\label{2.47}
\zeta_\omega^{out}=\frac{1}{4\pi r\sqrt{\omega}}.
\end{equation}

Now, to evaluate the particle production we need an expression for the corresponding Bogoliubov coefficients. As we have seen previously, they determine the mean value of the number operator associated with a set of modes in the vacuum state of the other ones. Choosing $I^-$ as the Cauchy surface for the scalar product, the most important coefficients to compute are written as:
\begin{equation}\label{2.48}
\beta_{\omega \omega'}=-\left(f_{\omega}^{out},f_{\omega'}^{in *}\right)=i \int_{I^{-}} \diff v \ r^2 \diff \Omega \left(f_{\omega}^{out} \partial_v f_{\omega'}^{in}-f_{\omega'}^{in} \partial_v f_{\omega}^{out}\right).
\end{equation}

If we want to do the calculation, we must first know the behaviour of $f_{\omega}^{out}$ at $I^-$. First of all, we must impose a matching condition along $I^-$, near the location of the shock wave, so the metric on both sides are the same. This means:
\begin{equation}\label{2.49}
r(v_0,u_{in})=r(v_0,u_{out}).
\end{equation}

We can use the expression of the Schwarzschild tortoise radial coordinate in terms of the usual radial coordinate (\ref{2.33}). Writing the radial coordinates in terms of $v$, $u_{in}$ and $u_{out}$ and using (\ref{2.47}) we get a relation between the coordinates $u_{in}$ and $u_{out}$:
\begin{equation}\label{2.50}
u_{out}=u_{in}-4M \log\left(\frac{v_0-4M-u_{in}}{4M}\right).
\end{equation}

The free field equation in the Minkowski part of the spacetime implies a regularity condition at $r=0$ and thus the modes must vanish. In this part the ``out'' modes should therefore take the form\footnote{In \cite{f.mbhe} equation (3.71) appears only $H(v_H-v)$ but since you want that term to contribute only when $v<v_H$ I think you should put a $1-H(v_H-v)$ instead.}:
\begin{equation}\label{2.51}
f_{\omega}^{out}=\frac{1}{4\pi\sqrt{\omega}} \left[\frac{\mathrm{e}^{-i \omega u_{out}(u_{in})}}{r} - \frac{\mathrm{e}^{-i \omega u_{out}(v)}}{r} \left(1-H(v_H-v)\right)\right],
\end{equation}
where $v_H=v_0-4M$ is the location of the null ray which forms the event horizon. Since the particle creation will occur at late times, we are going to study this limiting case. We will have $u_{out}\rightarrow\infty$ so we can write:
\begin{equation}\label{2.52}
u_{out}(u_{in})=v_H-4M \log\left(\frac{v_0-4M-u_{in}}{4M}\right).
\end{equation}

Thus we have:
\begin{equation}\label{2.53}
u_{out}(v)=v_H-4M \log\left(\frac{v_0-4M-v}{4M}\right)
\end{equation}
and, at $I^-$, near $v_H$, the modes will behave as:
\begin{equation}\label{2.54}
f_{\omega}^{out}\approx-\frac{1}{4\pi\sqrt{\omega}} \frac{\mathrm{e}^{-i \omega \left[v_H-4M \log\left(\frac{v_0-4M-v}{4M}\right)\right]}}{r} \left(1-H(v_H-v)\right).
\end{equation}

This results to be a superposition of positive and negative frequency modes and therefore we expect particle creation. Before inserting this result in (\ref{2.48}), we can notice something about that integral. Integration by parts allow us to write:
\begin{equation}\label{2.55}
\begin{split}
\beta_{\omega \omega'}&=-\left(f_{\omega}^{out},f_{\omega'}^{in *}\right)=i \int_{I^{-}} \diff v \ r^2 \diff \Omega \left(f_{\omega}^{out} \partial_v f_{\omega'}^{in}-f_{\omega'}^{in} \partial_v f_{\omega}^{out}\right)\\
&=i \int_{I^{-}} \diff v \ r^2 \diff \Omega \left(2 f_{\omega}^{out} \partial_v f_{\omega'}^{in}-\partial_v \left[f_{\omega'}^{in} f_{\omega}^{out}\right]\right)\\
&=2 i \int_{I^{-}} \diff v \ r^2 \diff \Omega \ f_{\omega}^{out} \partial_v f_{\omega'}^{in}.\\
\end{split}
\end{equation}

The resulting boundary term in the second line vanishes because $f_{\omega}^{out}$ is zero at both $v=\pm\infty$. Now we can use the late-time expression for $f_{\omega}^{out}$ and calculate the Bogoliubov coefficients:
\begin{equation}\label{2.56}
\begin{split}
\beta_{\omega \omega'}&=2 i \int_{I^{-}} \diff v \ r^2 \diff \Omega \ f_{\omega}^{out} \partial_v f_{\omega'}^{in}\\
&\simeq-\frac{1}{2\pi} \sqrt{\frac{\omega'}{\omega}} \int_{-\infty}^{v_H} \diff v \ \mathrm{e}^{-i \omega \left[v_H-4M \log\left(\frac{v_H-v}{4M}\right)\right]} \ \mathrm{e}^{-i \omega' v}.\\
\end{split}
\end{equation}

Making a change of variable $x=v_H-v$ and expanding the first exponential we get a simpler integral:
\begin{equation}\label{2.57}
\begin{split}
\beta_{\omega \omega'}&\simeq-\frac{1}{2\pi} \sqrt{\frac{\omega'}{\omega}} \ \mathrm{e}^{-i (\omega+\omega') v_H} (4M)^{-i 4M\omega} \int_{0}^{+\infty} \diff x \ x^{i 4M\omega} \ \mathrm{e}^{i \omega' x}.\\
\end{split}
\end{equation}

This last integral does not converge. However, an extra integration over frequencies (namely, the construction of a wave packet) makes it finite. We can avoid this by adding an infinitesimal real constant $-\epsilon$ to the exponential so that we can use:
\begin{equation}\label{2.58}
\int_0^{+\infty} \diff x \ x^a \ \mathrm{e}^{-b x}=\Gamma(a+1) \ b^{-1-a},
\end{equation}
where $\Gamma(x)$ is the Euler Gamma function. Thus, we finally obtain:
\begin{equation}\label{2.59}
\beta_{\omega \omega'}\simeq-\frac{1}{2\pi} \sqrt{\frac{\omega'}{\omega}} \ \mathrm{e}^{-i (\omega+\omega') v_H} (4M)^{-i 4M\omega} \ \Gamma(1+i 4M\omega) \ \left(-i \omega' + \epsilon\right)^{-1-i 4M\omega}.
\end{equation}

In a similar fashion, we can calculate the value of the other Bogoliubov coefficient:
\begin{equation}\label{2.60}
\alpha_{\omega \omega'}=\left(f_{\omega}^{out},f_{\omega'}^{in}\right)=-i \int_{I^{-}} \diff v \ r^2 \diff \Omega \left(f_{\omega}^{out} \partial_v f_{\omega'}^{in *}-f_{\omega'}^{in *} \partial_v f_{\omega}^{out}\right).
\end{equation}

Following the same steps as before we end up with the result:
\begin{equation}\label{2.61}
\alpha_{\omega \omega'}\simeq-\frac{1}{2\pi} \sqrt{\frac{\omega'}{\omega}} \ \mathrm{e}^{-i (\omega-\omega') v_H} (4M)^{-i 4M\omega} \ \Gamma(1+i 4M\omega) \ \left(+i \omega' + \epsilon\right)^{-1-i 4M\omega}.
\end{equation}

In order to relate both coefficients we compute the quotient:
\begin{equation}\label{2.62}
\begin{split}
\frac{\alpha_{\omega \omega'}}{\beta_{\omega \omega'}}&=\mathrm{e}^{i 2 v_H \omega'} \left(\frac{i \omega' + \epsilon}{-i \omega' + \epsilon}\right)^{-1-i 4M\omega}=\mathrm{e}^{i 2 v_H \omega'} \left(\frac{\omega' -i \epsilon}{-\omega' -i \epsilon}\right)^{-1-i 4M\omega}\\
&=\mathrm{e}^{i 2 v_H \omega'} \mathrm{e}^{\left(-1-i 4M\omega\right) \left[\log\left(\omega'-i \epsilon\right)-\log\left(-\omega'-i \epsilon\right)\right]}.
\end{split}
\end{equation}

Having into account the relation:
\begin{equation}\label{2.63}
\log (-x)=\log x - i \pi,
\end{equation}
and Taylor-expanding the logarithms and neglecting the terms in $\epsilon$ we get:
\begin{equation}\label{2.64}
\begin{split}
\frac{\alpha_{\omega \omega'}}{\beta_{\omega \omega'}}&=\mathrm{e}^{i 2 v_H \omega'} \mathrm{e}^{\left(-1-i 4M\omega\right) \left[\cancel{\log(\omega')}-\cancel{\log(\omega')}+i \pi\right]}\\
&=\mathrm{e}^{i 2 v_H \omega'} \mathrm{e}^{-i \pi} \mathrm{e}^{4 \pi M\omega}\\
&=-\mathrm{e}^{i 2 v_H \omega'} \mathrm{e}^{4 \pi M\omega}.\\
\end{split}
\end{equation}

Thus, we obtain the remarkable relation between the modulus of the Bogoliubov coefficients:
\begin{equation}\label{2.65}
\left|\alpha_{\omega\omega'}\right|=\mathrm{e}^{4 \pi M\omega} \left|\beta_{\omega\omega'}\right|.
\end{equation}

Now we can use the normalization conditions of the Bogoliubov coefficients (\ref{1.23}), for the values $i=j$ to obtain, together with (\ref{2.65}):
\begin{equation}\label{2.66}
\int_{0}^{+\infty} \diff \omega' \left(\left|\alpha_{\omega \omega'}\right|^2 - \left|\beta_{\omega \omega'}\right|^2\right)=\left(\mathrm{e}^{8\pi M\omega}-1\right) \int_{0}^{+\infty} \diff \omega' \ \left|\beta_{\omega \omega'}\right|^2=1.
\end{equation}

The expectation value of the number operator of the ``out'' part of the spacetime in the ``in'' state can be calculated from this last result. Remembering (\ref{1.26}) one can conclude:
\begin{equation}\label{2.67}
\expval{\hat{n}_\omega^{out}}{in}=\int_{0}^{+\infty} \diff \omega' \ \left|\beta_{\omega \omega'}\right|^2=\frac{1}{\mathrm{e}^{8\pi M\omega}-1}.
\end{equation}

This coincides with a Planckian distribution of thermal radiation, following the Bose-Einstein statistic. From there we can expect a certain temperature for the radiating body, the black hole:
\begin{equation}\label{2.68}
T_H=\frac{1}{8\pi M}=\frac{\hbar c^3}{8 \pi G k_B M}.
\end{equation}

We see that we have recovered the previous result (\ref{2.25}).

\section{Black hole evaporation}

As the black hole has a non-zero temperature, we can use the Stefan-Boltzmann law to estimate the radiated power:
\begin{equation}\label{2.69}
P=A\sigma T^4.
\end{equation}

The area of the Schwarzschild black hole is calculated as follows:
\begin{equation}\label{2.70}
\begin{split}
A&=\int_{r=2M}\diff \theta \diff \varphi \sqrt{g_{\theta \theta} \ g_{\varphi\varphi}}\\
&=\int_0^\pi \diff \theta \int_0^{2\pi} \diff \varphi \ 4M^2 \sin \theta\\
&=16\pi M^2.
\end{split}
\end{equation}

Knowing the value of the Stefan constant (in the natural units where $c=\hbar=G=k_B=1$):
\begin{equation}\label{2.71}
\sigma=\frac{\pi^2}{60},
\end{equation}
along with (\ref{2.68}) we can express the radiated power as:
\begin{equation}\label{2.72}
P=\frac{1}{15360 \pi M^2}=\frac{K_{ev}}{M^2}.
\end{equation}

\begin{figure}[t]
	\centering
	\includegraphics[width=0.7\linewidth]{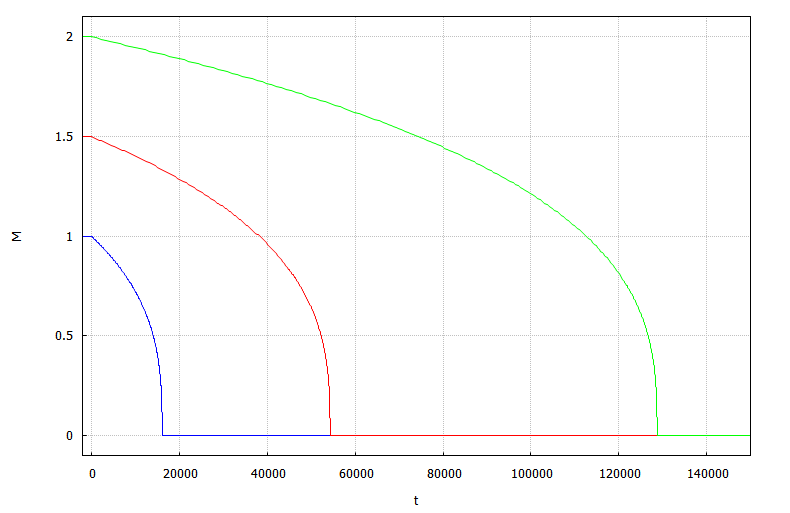}
	\caption{\textit{Evaporation of Schwarzschild black holes of different initial mass.}}
	\label{Figure 2.2}
\end{figure}

From the definition of radiated power it is straightforward to obtain a relation involving the mass:
\begin{equation}\label{2.73}
P=-\frac{\diff E}{\diff t}=-\frac{\diff M}{\diff t}=\frac{K_{ev}}{M^2}.
\end{equation}

To solve this differential equation we must impose two boundary conditions. The first one is related to the initial mass of the black hole: when the black hole is formed, it has a certain mass $M_0$ and from a particular time (we set that to be $t=0$) it starts to evaporate. The second condition is related to the evaporation time: when some finite time has passed, call it $t_f$, the black hole disappears and therefore $M(t\geq t_f)=0$. Thus we obtain:
\begin{equation}\label{2.74}
M(t)=\left\{ \begin{array}{cc}M_0;&t\leq 0\\\left(M_0^3-3 K_{ev} t\right)^{1/3};&0<t<t_f\\0;&t\geq t_f\end{array}\right..
\end{equation}

A straightforward calculation reveals the value of the time at which the black hole disappears:
\begin{equation}\label{2.75}
t_f=\frac{M_0^3}{3K_{ev}}=5120\pi M_0^3.
\end{equation}

We have obtained, as we predicted, the Hawking effect associated to the formation of a black hole. As we have discussed, this implies that the black hole will lose mass and eventually disappear. In the next chapter, we will study in detail the evaporation process. We are going to consider the quantum-mechanical state of this radiation and how it evolves as the black hole decreases.
\chapter{The information loss paradox}\label{Chapter 4}

\section{Heuristic approach to the problem}

Let us study Hawking radiation from the prespective of creation and annihilation of a pair of particle and antiparticle \cite{c.qft}. In particular, let us consider the creation of a pair near the horizon of a black hole. It is possible that one of the partners falls into the hole whereas the other one escapes to infinity. As seen from infinity, the particle that falls has a negative energy because inside the horizon the Killing vector $K=\partial_t$ is spacelike, and the total energy of the pair vanishes. This flux of particles is what we call Hawking radiation.

\begin{figure}[t]
	\centering
	\includegraphics[width=0.4\linewidth]{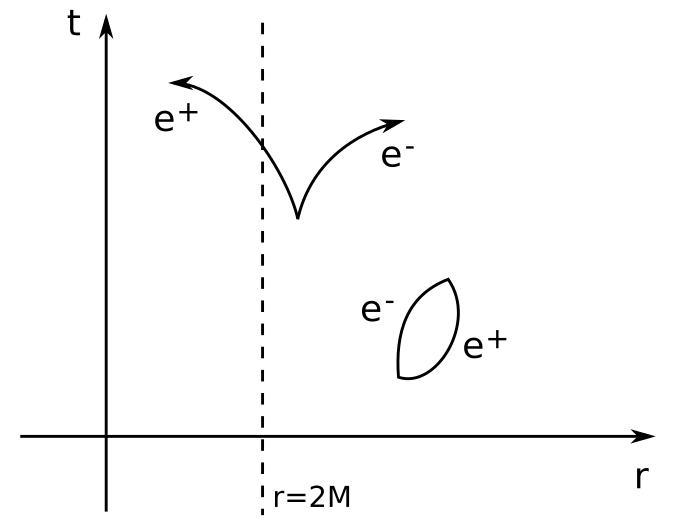}
	\caption{\textit{Particle/antiparticle pair creation near the event horizon of a Schwarzschild black hole.}}
	\label{Figure 3.1}
\end{figure}

The discovery of this thermal spectrum becomes essential for establishing the relation between black hole mechanics and thermodynamics. In particular, for the entropy we have found:
\begin{equation}\label{3.1}
S=\frac{A}{4}=4\pi M^2.
\end{equation}

The last equality holds for a Schwarzschild black hole. We see that this quantity is very large for supermassive black holes, whose mass is some million times the one of our Sun. From a statistical point of view, the entropy is a measure of the number of possible microstates of the black hole. Since a classical black hole is described only in terms of three parameters (its mass, angular momentum and charge) it is difficult to relate them with the microstates. But this problem could be avoided, if we assume that the information we need in order to know the state is hidden behind the event horizon.

\

When we introduce the semi-classical approach of QFT in curved spacetime the problem grows bigger. Now, by means of the Hawking radiation, the mass of the black hole shrinks and eventually all its mass evaporates. At this point we can not appeal the event horizon of the black hole to hide the information, because there is no black hole at all. We are left with a collection of thermal Hawking particles from which we cannot extract the information needed. Any state that collapses into a black hole of certain mass, angular momentum and charge will give us the same Hawking spectrum. This constitutes the basis of the \textbf{information loss paradox}.

\section{Niceness conditions for local evolution}

In our everyday physics we do not worry about the effects of quantum gravity. That is to say, we implicitly assume that there is a limit where those are negligible and we can use approximate expressions for time evolution of the systems of interest. A certain set of conditions are assumed in order to make this possible \footnote{We will mainly follow the discussion in \cite{m.info}}:

\begin{enumerate}
	\item The quantum state is defined on a 3-dimensional spacelike slice, intrinsic curvature which is much smaller than the Planck scale.
	\item The spacelike slice is embedded in a 4-dimensional spacetime in a way that its extrinsic curvature is smaller than the Planck scale.
	\item The curvature of the spacetime near the slice is small compared to the Planck scale.
	\item The matter on the slice has a good behaviour. Its wavelength is much bigger than the Planck length and its energy and momentum density are small compared to Planck scale.
	\item Considering evolution, all slices have the good behaviour described above. That requires the lapse and shift vectors to change smoothly along the slices. 
\end{enumerate}

Here, the quantum process we are concerned about is the Hawking effect, as seen in the previous chapter. This process implies the creation of pairs of particles near the horizon of the black hole. The particles created are entangled. If we want to take into account only the essence of this entanglement, we can assume that the state of the pair is a Bell state. Assuming locality and the existence of matter on the slice, but far away from where the pair is created, the state on the spacetime slide would be:
\begin{equation}\label{3.2}
\ket{\Psi}\approx\ket{\psi}_M\otimes\frac{1}{\sqrt{2}}\left(\ket{0}_r\ket{0}_l+\ket{1}_r\ket{1}_l\right).
\end{equation}

The presence of matter will affect the pair, even if it is located far away from where these are created. Consequently, we have written the $\approx$ sign in this last equation. If we construct the respective \textbf{density matrix of particle} $r$, $\rho=\sum_i p_i \ket{\psi_i}_r\bra{\psi_i}_r$, a straightforward computation gives us the entropy of entanglement:
\begin{equation}\label{3.3}
\begin{split}
S_{ent}&=-\tr(\rho \log \rho)\\
&=-\sum_i \rho_i \log \rho_i\\
&=-2 \ \frac{1}{2} \log\left(\frac{1}{2}\right)\\
&=\log 2.
\end{split}
\end{equation}

Locality allows small deviations from this state where the matter has no effect on the pair. Small deviations must preserve the entanglement between the particles. Because of that, we can express the locality condition in terms of this entanglement entropy as:
\begin{equation}\label{3.4}
\frac{S_{ent}}{\log 2}-1\ll 1.
\end{equation}

\section{Slicing the black hole geometry}

A black hole is called a traditional black hole if it has an information-free horizon. A point on the horizon is information-free if in its vicinity the evolution of the quantum fields with wavelength $\lambda\gg l_P$ is given by the semiclassical evolution of quantum field theory in curved spacetime.

\

The traditional Schwarzschild black hole geometry has a singularity at $r=0$. Since we want our niceness conditions to hold, the spacelike surfaces where our quantum state is defined can not intersect the singularity. Then, we have to construct them in a very specific way to make our next arguments consistent.

\

Consider a spacetime of the type we have worked with in Section \textbf{2.3}. We have flat spacetime with a spherical shell of mass $M$ collapsing towards the origin. At a perceptible distance far from the event horizon, like $r>4M$, we take the slice to be $t_1=constant$. Since space and time directions interchange roles inside the event horizon, there the slices are $r_1=constant$. We fix this for the interval $M/2<r<3M/2$ in order not to be near the horizon nor the singularity. Those parts are connected by a smooth segment that satisfies the niceness conditions. We finish our spacelike slice by extending it to early times when the black hole has not formed yet and smoothly taking it to $r=0$.

\begin{figure}[t]
	\centering
	\includegraphics[width=0.7\linewidth]{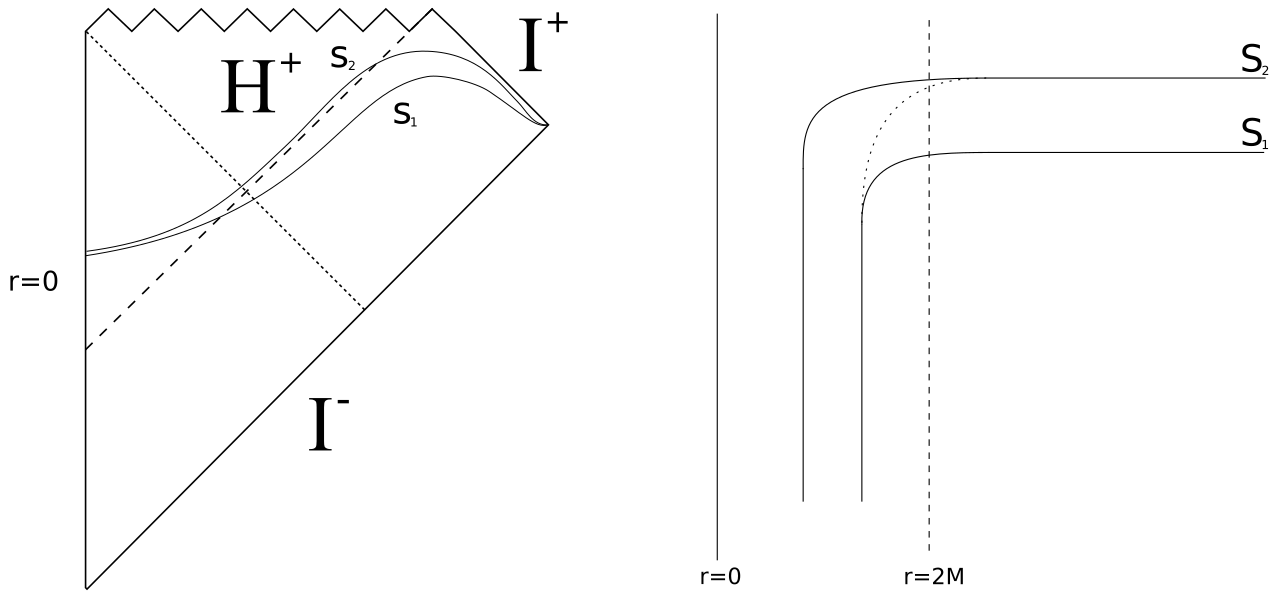}
	\caption{\textit{Conformal diagram of a black hole formed by gravitational collapse (left) and schematic coordinates of the Schwarzschild black hole geometry (right). In both cases we represent two spacelike slices constructed as we discussed.}}
	\label{Figure 3.2}
\end{figure}

Later-time slices are constructed using the same rules: taking $t_2=t_1+\Delta$ and at $r>4M$ and $r_2=r_1+\delta$ at the $r=constant$ part. If we take the limit $\delta\rightarrow 0$ then the connector is the same for all slices, but the $r=constant$ part is longer as the time runs. In the evolution, the first connector must stretch to cover the connector of the second slice and its extra $r=constant$ part. For a larger succession of spacelike slices the  evolution follows exactly the same steps.

\

The foliation constructed satisfies the niceness conditions of the previous section. The region of the spacelike hypersurfaces which covers the horizon keeps stretching with time and therefore the 
modes of the fields stretch its wavelengths, which leads to particle creation.

\section{The radiation process to leading order}

Let us consider an initial spacelike slice, where the massive shell that collapses to form the black hole is represented by the state $\ket{\psi}_M$. As we evolve to the next slice the middle region stretches and therefore a pair of particles is created. The matter state stays the same, because that part of the spacelike slice remains unchanged. The total quantum state is (\ref{3.2}):
\begin{equation}\label{3.5}
\ket{\Psi}\approx\ket{\psi}_M\otimes\frac{1}{\sqrt{2}}\left(\ket{0}_{r_1}\ket{0}_{l_1}+\ket{1}_{r_1}\ket{1}_{l_1}\right).
\end{equation}

We have calculated before the entanglement entropy of this state, which is given in (\ref{3.3}). If we evolve again to the next slide, the stretching moves away the previously created pair and creates a new one. Neglecting interaction between the different pairs, the state would be:
\begin{equation}\label{3.6}
\ket{\Psi}\approx\ket{\psi}_M\otimes\frac{1}{\sqrt{2}}\left(\ket{0}_{r_1}\ket{0}_{l_1}+\ket{1}_{r_1}\ket{1}_{l_1}\right)\otimes\frac{1}{\sqrt{2}}\left(\ket{0}_{r_2}\ket{0}_{l_2}+\ket{1}_{r_2}\ket{1}_{l_2}\right).
\end{equation}

The entropy of entanglement of the right particles with the lefts and the in-falling mass results:
\begin{equation}\label{3.7}
S_{ent}=2\log 2.
\end{equation}

We can repeat this step $N$ times, to obtain a state:
\begin{equation}\label{3.8}
\begin{split}
\ket{\Psi}\approx\ket{\psi}_M&\otimes\frac{1}{\sqrt{2}}\left(\ket{0}_{r_1}\ket{0}_{l_1}+\ket{1}_{r_1}\ket{1}_{l_1}\right)\\
&\otimes\frac{1}{\sqrt{2}}\left(\ket{0}_{r_2}\ket{0}_{l_2}+\ket{1}_{r_2}\ket{1}_{l_2}\right)\\
&...\\
&\otimes\frac{1}{\sqrt{2}}\left(\ket{0}_{r_N}\ket{0}_{l_N}+\ket{1}_{r_N}\ket{1}_{l_N}\right).\\
\end{split}
\end{equation}

The entanglement of $\left\{r_i\right\}$ with $\left\{l_i\right\}$ and $M$ is trivial to calculate:
\begin{equation}\label{3.9}
S_{ent}=N\log 2.
\end{equation}

As a consequence of the emission of quanta, the mass of the black hole decreases. We will stop evolving the spacelike slices when the mass reaches the Planck scale, because at that point they will not satisfy the niceness conditions.

\begin{figure}[t]
	\centering
	\includegraphics[width=0.4\linewidth]{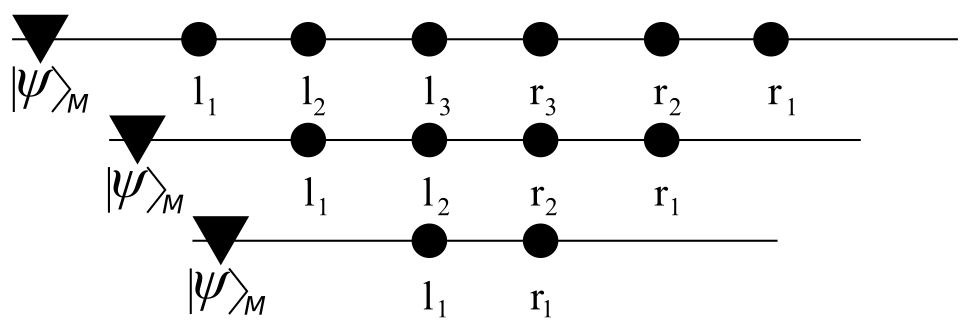}
	\caption{\textit{Creation of Hawking pairs. When new pairs are formed, the previous one move away as well as the mass state $\ket{\psi}_M$.}}
	\label{Figure 3.3}
\end{figure}

Once reached this point, our discussion forces us to choose between two distinct possibilities:

\begin{itemize}
	\item The black hole has disappeared completely. The emitted particles have an entanglement entropy $S_{ent}\neq0$, but there is nothing to be entangled with. Thus its state can not described by a single ket vector, instead it is described in terms of its associate density matrix. This is what is called a mixed state.
	\item We are left with an object with a Planck scale mass, which is bounded both in mass and length, and which has an arbitrarily high entanglement with the quanta escaped. In this case we say that our theory contains remnants.
\end{itemize}

The first possibility implies a non-unitary evolution: a pure state evolves to a mixed state, which is in contradiction with quantum mechanics. The latter leads to an unexpected phenomenon: it is not a violation of the laws of quantum mechanics, since a system with bounds on energy and length is expected to have a finite number of states. Using the previous argumentation we can not choose between these possibilities.

\section{Corrections of the leading order}

We can think that if we add small corrections to the state of our quantum system we may invalidate the previous argument and therefore bypass the information problem. In this section, our purpose is (i) to prove that small corrections are irrelevant to the paradox and (ii), when a correction that could avoid the paradox is considered, it violates the niceness conditions.

\

Consider that at certain time $t_n$ we have the state $\ket{\Psi_{M,l},\psi_r(t_n)}$. The first term, $\Psi_{M,l}$, denotes the matter that forms the hole and the Hawking particles that have fallen into it, and the second one, $\psi_r(t_n)$, refers to the ones that have escaped to infinity. There is entanglement between both parts. If we choose a orthonormal basis for both $(M,l)$ and $r$, which is denoted as $\left\{\psi_m,\chi_n\right\}$ we can write the state as:
\begin{equation}\label{3.10}
\ket{\Psi_{M,l},\psi_r(t_n)}=\sum_{m,n} C_{mn} \psi_m \chi_n.
\end{equation}

Then, it is possible to perform a series of unitary transformations on the basis vectors to express the state in terms of a diagonal matrix:
\begin{equation}\label{3.11}
\ket{\Psi_{M,l},\psi_r(t_n)}=\sum_{i} C_{i} \psi_i \chi_i.
\end{equation}

The density matrix describing the right particles is just the modulus of these coefficients, and therefore the entanglement results:
\begin{equation}\label{3.12}
S_{ent}=-\sum_{i} \left|C_i\right|^2 \log \left|C_i\right|^2.
\end{equation}

Now, we assume that when time passes the state of the pair created in the region of the spacelike slice that stretches is spanned by the set of vectors:
\begin{equation}\label{3.13}
\begin{split}
&S^{(1)}=\frac{1}{\sqrt{2}} \left(\ket{0}_{l_{n+1}}\ket{0}_{r_{n+1}}+\ket{1}_{l_{n+1}}\ket{1}_{r_{n+1}}\right),\\
&S^{(2)}=\frac{1}{\sqrt{2}} \left(\ket{0}_{l_{n+1}}\ket{0}_{r_{n+1}}-\ket{1}_{l_{n+1}}\ket{1}_{r_{n+1}}\right).\\
\end{split}
\end{equation}

Invoking locality, we neglect interaction between the recently created pair and the quanta that left the vicinity of the black hole. Thus, a general evolution to $t_{n+1}$ which obeys these conditions must have the form:
\begin{equation}\label{3.14}
\begin{split}
&\chi_i\rightarrow\chi_i,\\
&\psi_i\rightarrow\psi_i^{(1)} S^{(1)}+\psi_i^{(2)} S^{(2)}.\\
\end{split}
\end{equation}

Since we want the evolution to be unitary and the vectors $S^{(i)}$ are properly normalized, we have:
\begin{equation}\label{3.15}
\left\|\psi_i^{(1)}\right\|^2+\left\|\psi_i^{(2)}\right\|^2=1.
\end{equation}

The evolved state is therefore given by:
\begin{equation}\label{3.16}
\begin{split}
\ket{\Psi_{M,l},\psi_r(t_{n+1})}&=\sum_{i} C_{i} \left[\psi_i^{(1)} S^{(1)}+\psi_i^{(2)} S^{(2)}\right] \chi_i\\
&=S^{(1)} \sum_{i} C_{i} \psi_i^{(1)} \chi_i + S^{(2)} \sum_{i} C_{i} \psi_i^{(2)} \chi_i\\
&=S^{(1)} \Lambda^{(1)} + S^{(2)} \Lambda^{(2)}.\\
\end{split}
\end{equation}

Again, normalization of the state implies the following relation:
\begin{equation}\label{3.17}
\left\|\Lambda^{(1)}\right\|^2+\left\|\Lambda^{(2)}\right\|^2=1.
\end{equation}

In the leading order evolution we have $\psi_i^{(1)}=\psi_i$ and $\psi_i^{(2)}=0$ and as a consequence $\left\|\Lambda^{(1)}\right\|^2=1$ and $\left\|\Lambda^{(2)}\right\|^2=0$. These results help us to define what we call a small correction: we are going to consider corrections to be small if:
\begin{equation}\label{3.18}
\left\|\Lambda^{(2)}\right\|<\epsilon,
\end{equation}
with $\epsilon\ll1$.

\section{Entropy bounds}

Our next step is to compute the entanglement entropy between the right quanta outside the hole, including the ones that have been created before the last time evolution and the ones that have just emerged, and the particles that are inside. In order to do so, we divide our system in three subsystems: the radiation particles $\left\{r_i\right\}$ emitted until time $t_n$, the matter in the black hole $\left(M,\left\{l_i\right\}\right)$ at that same moment, and the recently created pair $\left(r_{n+1},l_{n+1}\right)$.

\

We first compute the entanglement entropy of the new pair with the rest of the system. Since every $2\times2$ matrix may be expressed in the basis $\left(I,\left\{\sigma_i\right\}\right)$, where $\left\{\sigma_i\right\}$ are the three Pauli matrices, the associated density matrix admits an expansion of the form:
\begin{equation}\label{3.19}
\rho_{\left(r_{n+1},l_{n+1}\right)}=\frac{1}{2} \ I + \left|\vec{\alpha}\right| \sigma_3.
\end{equation}

The density matrix is given by:
\begin{equation}\label{3.20}
\rho_{\left(r_{n+1},l_{n+1}\right)}=\left(\begin{array}{cc}\braket{\Lambda^{(1)}}{\Lambda^{(1)}}&\braket{\Lambda^{(1)}}{\Lambda^{(2)}}\\\braket{\Lambda^{(2)}}{\Lambda^{(1)}}&\braket{\Lambda^{(2)}}{\Lambda^{(2)}}\end{array}\right).
\end{equation}

A straightforward diagonalization exercise gives us the value of the $\left|\vec{\alpha}\right|$ factor:
\begin{equation}\label{3.21}
\left|\vec{\alpha}\right|=\frac{\sqrt{1-4\left(\left\|\Lambda^{(1)}\right\|\left\|\Lambda^{(2)}\right\|-\braket{\Lambda^{(1)}}{\Lambda^{(2)}}^2\right)}}{2}.
\end{equation}

We can compute the entropy of entanglement for a matrix of the type (\ref{3.20}):
\begin{equation}\label{3.22}
\begin{split}
S_{ent}&=-\tr\left(\rho \log\rho\right)=-\tr \left(\begin{array}{cc}\left(\frac{1}{2}+\left|\vec{\alpha}\right|\right)\log\left(\frac{1}{2}+\left|\vec{\alpha}\right|\right)&0\\0&\left(\frac{1}{2}-\left|\vec{\alpha}\right|\right)\log\left(\frac{1}{2}-\left|\vec{\alpha}\right|\right)\end{array}\right)\\
&=-\frac{1+2\left|\vec{\alpha}\right|}{2} \ \log\left(\frac{1+2\left|\vec{\alpha}\right|}{2}\right)-\frac{1-2\left|\vec{\alpha}\right|}{2} \ \log\left(\frac{1-2\left|\vec{\alpha}\right|}{2}\right)\\
&=\log 2 -\frac{1}{2}\left(1+2\left|\vec{\alpha}\right|\right) \ \log\left(1+2\left|\vec{\alpha}\right|\right)-\frac{1}{2}\left(1-2\left|\vec{\alpha}\right|\right) \ \log\left(1-2\left|\vec{\alpha}\right|\right).\\
\end{split}
\end{equation}

Using (\ref{3.18}) we have $\left\|\Lambda^{(2)}\right\|=\epsilon_1<\epsilon$, and using the \textit{Cauchy-Bunyakovsky inequality} we can obtain:
\begin{equation}\label{3.23}
\braket{\Lambda^{(1)}}{\Lambda^{(2)}}=\epsilon_2\leq\left\|\Lambda^{(1)}\right\|\left\|\Lambda^{(2)}\right\|=\left\|\Lambda^{(1)}\right\| \epsilon_1<\epsilon.
\end{equation}

Now, we can approximate equation (\ref{3.21}) to obtain a clearer expression for the entropy:
\begin{equation}\label{3.24}
\left|\vec{\alpha}\right|\simeq\frac{\sqrt{1-4\left(\epsilon_1^2-\epsilon_2^2\right)}}{2}=\frac{1-2\left(\epsilon_1^2-\epsilon_2^2\right)}{2} + \mathcal{O}(\epsilon^3).
\end{equation}

Thus, the entropy results:
\begin{equation}\label{3.25}
\begin{split}
S_{ent}\left(r_{n+1},l_{n+1}\right)&=\log 2 -\left(1-\left(\epsilon_1^2-\epsilon_2^2\right)\right)\log\left[2-2\left(\epsilon_1^2-\epsilon_2^2\right)\right]\\
&\hphantom{=}-\left(\epsilon_1^2-\epsilon_2^2\right)\log\left[2\left(\epsilon_1^2-\epsilon_2^2\right)\right]\\
&=-\left(1-\left(\epsilon_1^2-\epsilon_2^2\right)\right)\log\left[1-\left(\epsilon_1^2-\epsilon_2^2\right)\right]-\left(\epsilon_1^2-\epsilon_2^2\right)\log\left[\left(\epsilon_1^2-\epsilon_2^2\right)\right]\\
&=\left(\epsilon_1^2-\epsilon_2^2\right) \ \log\left(\frac{\mathrm{e}}{\epsilon_1^2-\epsilon_2^2}\right)+\mathcal{O}(\epsilon^3)<\epsilon.
\end{split}
\end{equation}

The entanglement of the right particles is given by (\ref{3.12}), we denote it as $S_{ent}(\left\{r_i\right\})=S_0$. Taking this along our previous result, the entropy of entanglement between those particles and the new pair with the rest of the systems satisfy, as a consequence of subadditivity property of the entropy;
\begin{equation}\label{3.26}
\begin{split}
S_{ent}\left(\left\{r_i\right\},r_{n+1},l_{n+1}\right)&\geq\left|S_{ent}(\left\{r_i\right\})-S_{ent}(r_{n+1},l_{n+1})\right|\\
&\geq S_0-\epsilon.\\
\end{split}
\end{equation}

Finally, we are going to calculate the entanglement entropy of the in-falling newly created particle with the rest of the system. To do so, we first re-write the state (\ref{3.16}) as:
\begin{equation}\label{3.27}
\begin{split}
\ket{\Psi_{M,l},\psi_r(t_{n+1})}&=S^{(1)} \Lambda^{(1)}+S^{(2)} \Lambda^{(2)}\\
&=\frac{1}{\sqrt{2}}\left[\ket{0}_{l_{n+1}}\ket{0}_{r_{n+1}}\left(\Lambda^{(1)}+\Lambda^{(2)}\right)\right.\\
&\hphantom{=}\left.+\ket{1}_{l_{n+1}}\ket{1}_{r_{n+1}}\left(\Lambda^{(1)}-\Lambda^{(2)}\right)\right].\\
\end{split}
\end{equation}

The density matrix associated with the $l_{n+1}$ particle is given by:
\begin{equation}\label{3.28}
\rho_{l_{n+1}}=\frac{1}{2}\left(\begin{array}{cc}\braket{\Lambda^{(1)}+\Lambda^{(2)}}{\Lambda^{(1)}+\Lambda^{(2)}}&0\\0&\braket{\Lambda^{(1)}-\Lambda^{(2)}}{\Lambda^{(1)}-\Lambda^{(2)}}\end{array}\right).
\end{equation}

As a consequence of the triangular inequality we find out:
\begin{equation}\label{3.29}
\begin{split}
\left\|\Lambda^{(1)}\pm\Lambda^{(2)}\right\|^2&\leq\left(\left\|\Lambda^{(1)}\right\|\pm\left\|\Lambda^{(1)}\right\|\right)^2\\
&=\left\|\Lambda^{(1)}\right\|^2+\left\|\Lambda^{(2)}\right\|^2\pm 2\left\|\Lambda^{(1)}\right\|\left\|\Lambda^{(2)}\right\|\\
&=1\pm 2\left\|\Lambda^{(1)}\right\|\left\|\Lambda^{(2)}\right\|.\\
\end{split}
\end{equation}

But, according to equation (\ref{3.23}), we can approximate the last expression as:
\begin{equation}\label{3.30}
\left\|\Lambda^{(1)}\pm\Lambda^{(2)}\right\|^2=1\pm 2\braket{\Lambda^{(1)}}{\Lambda^{(2)}}+\mathcal{O}(\epsilon^2)
\end{equation}

Thus the entanglement entropy may be calculated using (\ref{6.22}) since now our density matrix ha the suitable form. Performing a Taylor series and keeping terms to second order in $\epsilon_2$ gives:
\begin{equation}\label{3.31}
\begin{split}
S_{ent}(l_{n+1})&=\log 2 - \frac{1}{2} \left[\left(1+2\epsilon_2\right) \log \left(1+2\epsilon_2\right) + \left(1-2\epsilon_2\right) \log \left(1-2\epsilon_2\right)\right]\\
&=\log 2 - 2 \epsilon_2^2 + \mathcal{O}(\epsilon^3)\geq\log 2 -2\epsilon^2>\log 2 - \epsilon.\\
\end{split}
\end{equation}

Now, using the results (\ref{3.25}), (\ref{3.26}) and (\ref{3.31}), we are able to prove the following theorem:

\

\begin{theorem}[\textbf{Stability of entanglement}]
	At time $t_{n+1}$, taking the entropy of entanglement of the emitted quanta until the previous time step $t_n$ to be $S_0$, if the recently created pair is in a state which differs from the leading order by small corrections satisfying (\ref{3.18}) then the entropy of the emitted quanta to that time satisfy:
	\begin{equation}\label{3.32}
	S_{ent}\left(\left\{r_i\right\},r_{n+1}\right)>S_0+\log 2-2\epsilon.
	\end{equation}
\end{theorem}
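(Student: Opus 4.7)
The plan is to apply strong subadditivity of the von Neumann entropy to the tripartite splitting $A=\{r_i\}$ (the radiation emitted before $t_n$), $B=r_{n+1}$ (the outgoing member of the newly created pair) and $C=l_{n+1}$ (its infalling partner). Strong subadditivity (Lieb--Ruskai) asserts
\begin{equation*}
S_{ent}(AB)+S_{ent}(BC)\ge S_{ent}(A)+S_{ent}(C),
\end{equation*}
which I rearrange into a lower bound on the quantity of interest:
\begin{equation*}
S_{ent}(\{r_i\},r_{n+1})\ge S_{ent}(\{r_i\})+S_{ent}(l_{n+1})-S_{ent}(r_{n+1},l_{n+1}).
\end{equation*}

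Each of the three entropies on the right has already been estimated in the previous subsection. By hypothesis $S_{ent}(\{r_i\})=S_0$; from (\ref{3.31}) one has $S_{ent}(l_{n+1})>\log 2-\epsilon$; and from (\ref{3.25}) one has $S_{ent}(r_{n+1},l_{n+1})<\epsilon$. Substituting these three estimates,
\begin{equation*}
S_{ent}(\{r_i\},r_{n+1})>S_0+(\log 2-\epsilon)-\epsilon=S_0+\log 2-2\epsilon,
\end{equation*}
which is exactly the claim. Strictly speaking the intermediate bound (\ref{3.26}) plays no role in this argument; it is itself a corollary of the Araki--Lieb triangle inequality applied to the same two entropies and is therefore useful only as an independent sanity check.

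The key conceptual step, and the only non-routine one, is choosing strong subadditivity rather than ordinary subadditivity. The triangle inequality used to derive (\ref{3.26}) is too weak: it can bound $S_{ent}(\{r_i\},r_{n+1},l_{n+1})$ below by $S_0-\epsilon$ but does not extract the additional $\log 2$ that the new pair should bring in. Strong subadditivity does precisely this, by using simultaneously the two finer facts about the pair established in the previous subsection, namely that $(r_{n+1},l_{n+1})$ is almost pure while each partner alone is almost maximally mixed. The moral is exactly what the theorem name suggests: no matter how the amplitudes are redistributed between the two allowed final states $S^{(1)}$ and $S^{(2)}$ within the small-correction window (\ref{3.18}), each Hawking emission contributes essentially a full $\log 2$ to the entanglement entropy of the outgoing radiation with the rest of the system, so the monotone buildup that underlies Hawking's original calculation is not destroyed by perturbatively small deviations from it.
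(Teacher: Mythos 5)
Your proof is correct and is essentially identical to the paper's: inequality (\ref{3.33}) in the paper is exactly the weak-monotonicity form of strong subadditivity you invoke, applied to the same tripartition, followed by the same substitutions of (\ref{3.25}) and (\ref{3.31}). The only difference is cosmetic — the paper loosely calls the inequality ``the subadditivity theorem'' whereas you correctly identify it as strong subadditivity (Lieb--Ruskai), and your observation that (\ref{3.26}) is not actually needed is accurate.
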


\begin{proof}
	By virtue of the subadditivity theorem, the entropies of entanglement of our three subsystems obey:
	\begin{equation}\label{3.33}
	S_{ent}\left(\left\{r_i\right\},r_{n+1}\right)+S_{ent}\left(r_{n+1},l_{n+1}\right)\geq S_{ent}\left(\left\{r_i\right\}\right)+S_{ent}\left(l_{n+1}\right).
	\end{equation}
	
	Using the fact that $S_{ent}\left(\left\{r_i\right\}\right)=S_0$ and equations (\ref{3.25}) and (\ref{3.31}) we straightforwardly obtain:
	\begin{equation}\label{3.34}
	S_{ent}\left(\left\{r_i\right\},r_{n+1}\right)>S_0+\log 2-2\epsilon.
	\end{equation}
\end{proof}

This result establish that the entanglement entropy always increases, if we assume small deviations from the leading order state (\ref{3.5}). Thus the conclusion of Section \textbf{3.4} does not change: the evaporation process leads to mixed states/remnants since we have restricted to small corrections.

\section{Hawking's theorem}

At this stage of the proceedings, we can establish the following theorem:

\begin{theorem}[\textbf{Hawking's theorem}]
	If we assume:	
	\begin{enumerate}
		\item The niceness conditions of Section \textbf{3.2}, which ensures approximate expressions for time evolution.
		\item The existence of the traditional black hole.
	\end{enumerate}
	
	Then, the process of formation and evaporation of such a black hole leads to mixed states/remnants.
\end{theorem}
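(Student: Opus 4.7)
The plan is to assemble the theorem from the machinery already built: the niceness conditions fix the kinematic arena, the traditional-black-hole hypothesis fixes the local dynamics near the horizon, and an iterated application of the Stability of entanglement theorem drives the global conclusion. First I would use assumption 1 to construct the foliation of Section 3.3: a family of spacelike slices $\Sigma_n$ obtained by incrementing $t_1\to t_1+\Delta$ in the exterior part and $r_1\to r_1+\delta$ in the interior part, glued by a fixed connector in the region $M/2<r<3M/2$. Because each $\Sigma_n$ satisfies the niceness conditions by construction, evolution between consecutive slices is given by the semiclassical QFT in curved spacetime developed in Chapter 2.

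Next I would invoke assumption 2 (traditional black hole, hence information-free horizon) to assert that the evolution from $\Sigma_n$ to $\Sigma_{n+1}$ is, to leading order, exactly the pair-creation process of Section 3.4: the stretching of the connector produces a new Bell pair $(r_{n+1},l_{n+1})$ in the state $S^{(1)}$, with at most small corrections of size $\epsilon\ll1$ in the sense of (\ref{3.18}). This is precisely the setup in which the Stability of entanglement theorem applies. Applying it inductively, with $S_0$ the entanglement entropy at step $n$, I obtain
\begin{equation}
S_{ent}\left(\{r_i\}_{i\leq n+1}\right)>S_0+\log 2-2\epsilon,
\end{equation}
so after $N$ emissions the entanglement entropy of the radiation with everything inside the horizon satisfies $S_{ent}\geq N(\log 2 - 2\epsilon)$, a quantity that grows without bound as $N$ grows, independently of the specific small corrections at each step.

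Finally I would run this evolution until the black hole mass is driven down to the Planck scale, which is the first moment at which the niceness conditions (in particular items 3 and 4) can fail and our reasoning loses control. At that moment the exterior radiation has a macroscopic entanglement entropy $S_{ent}\sim N\log 2$, but its only possible entanglement partners are the modes $\{l_i\}$ and the residual object of Planck-scale mass. Two mutually exclusive possibilities remain: either the hole has disappeared entirely, in which case the entangled partners no longer exist and the radiation must be described by a mixed density matrix; or there is a bounded object carrying the accumulated entanglement, i.e.\ a remnant. Both are the two alternatives in Section 3.4, so the theorem follows.

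The main obstacle I anticipate is not computational but conceptual: one must be careful to argue that the small-correction bound (\ref{3.18}) genuinely controls \emph{every} step of the evolution simultaneously, rather than just a single step, so that the inductive use of the Stability theorem is legitimate. This is exactly the role played by the niceness conditions plus the traditional-black-hole hypothesis, which together guarantee that at each slice the local physics near the horizon is again a weak deformation of the free QFT pair-creation process. Once that uniformity is in place, the rest of the argument is essentially the subadditivity chain already worked out in the previous section.
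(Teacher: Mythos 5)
Your proposal is correct and follows essentially the same route as the paper's proof: niceness conditions give controlled semiclassical evolution on the nice slices, the traditional (information-free) horizon forces the outgoing modes to be near-vacuum so that the small-correction bound (\ref{3.18}) holds at every step, iterated application of the Stability of entanglement theorem makes $S_{ent}$ grow without bound, and the Planck-scale endpoint forces the mixed-state/remnant dichotomy of Section \textbf{3.4}. The only presentational difference is that the paper makes the link from assumption 2 to (\ref{3.18}) explicit via the Fock expansion $\ket{\psi}=\sum_n\alpha_n\ket{n}$ with $\sum_n|\alpha_n|^2<\gamma\ll1$, a detail you assert rather than derive, but this does not change the argument.
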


\begin{proof}
	Using the first assumption of the theorem, in the region near the horizon, in which by construction the niceness conditions hold, we can follow the evolution of an outgoing mode. We can expand this mode in the Fock basis:
	\begin{equation}\label{3.35}
	\ket{\psi}=\sum_n \alpha_n \ket{n}.
	\end{equation}
	
	If our mode has $\lambda>M$ then we should have:
	\begin{equation}\label{3.36}
	\sum_n \left|\alpha_n\right|^2 < \gamma; \hspace{5mm} \gamma\ll 1.
	\end{equation}
	
	That quantity can not be order unity, otherwise we would have particles at the horizon and the second assumption of the theorem would be violated. So, we are going to have vacuum modes which evolve in agreement to the leading order to some accuracy. As a consequence we will have some $\epsilon$ which verifies (\ref{3.18}).
	
	\
	
	Thus, the pairs of particles that will be produced must be in a state near $S^{(1)}$. As we have demonstrated in the previous section, the entropy of entanglement increases as time passes. If before the black hole becomes Planck-sized as a consequence of evaporation the mechanism has produced $N$ pairs, the entanglement would verify:
	\begin{equation}\label{3.37}
	S_{ent}>\frac{N}{2} \ \log 2.
	\end{equation}
	
	As discussed in Section \textbf{3.4}, this result will inevitably lead us to mixed states/remnants.
\end{proof}

We have stated this conclusion as a theorem. To avoid it we must violate one of its assumptions. We are left with two possibilities: either the niceness conditions are not sufficient and we have to add something else, or the traditional black hole can not arise in our theory.

\

Despite being unpopular, another possibility is the acceptance of the conclusions of the theorem and expect new unknown physics in the process of black hole formation and evaporation.

\section{The information problem}

In our previous discussion we have not even mentioned the term ``information''. We have just worried about the nature of the Hawking radiation. We can pay attention, for example, to the simple leading order state (\ref{3.8}). The radiation quanta is entangled with the matter inside the black hole, by virtue of (\ref{3.9}), and also carries no information about the initial matter state $\ket{\psi}_M$. Even if we consider corrections to that leading order, we end up with quanta entangled with the matter in the hole and the only amount of information which carries about $\ket{\psi}_M$ will be an infinitesimal part of the total, arising form the corrections of order $\epsilon$.

\

Thus, if the black hole at last disappears we are left with a mixed state and no information about the matter which had created the black hole. Otherwise, our theory would contain remnants. Let us consider two explicit examples to illustrate the problem.

\

\begin{example}
	Consider the simple initial matter state:
	\begin{equation}\label{3.38}
	\ket{\Psi}_0=\ket{\psi}_M=\alpha \ket{0}_M+ \beta \ket{1}_M,
	\end{equation}
	which evolves in a first time step as:
	\begin{equation}\label{3.39}
	\begin{split}
	\ket{\Psi}_1&=\left(\frac{1}{\sqrt{2}}\ket{1}_M\ket{0}_{l_1}+\frac{1}{\sqrt{2}}\ket{0}_M\ket{1}_{l_1}\right)\otimes\left(\alpha \ket{0}_{r_1} + \beta \ket{1}_{r_1}\right).\\
	\end{split}
	\end{equation}
	
	To this step, the black hole have emitted one radiation quantum, which in this case carries the information about the initial matter state. The system evolves one more time before the black hole completely disappears, giving the state:
	\begin{equation}\label{3.40}
	\begin{split}
	\ket{\Psi}_2&=\left(\frac{1}{\sqrt{2}}\ket{1}_M\ket{0}_{l_1}+\frac{1}{\sqrt{2}}\ket{0}_M\ket{1}_{l_1}\right)\otimes\left(\alpha \ket{0}_{r_1} + \beta \ket{1}_{r_1}\right)\\
	&\phantom{=}\otimes\left(\frac{1}{\sqrt{2}}\ket{1}_{r_2}\ket{0}_{l_2}+\frac{1}{\sqrt{2}}\ket{0}_{r_2}\ket{1}_{l_2}\right).\\
	\end{split}
	\end{equation}
	
	The second outgoing quantum is entangled with the matter in the hole. This leads to a loss of unitarity, however no information have been lost in the process.
	
\end{example}

\begin{example}
	Consider the same initial matter state $\ket{\Psi}_0$ in (\ref{3.38}), as in the previous example. In this case, it evolves as:
	\begin{equation}\label{3.41}
	\ket{\Psi}_1=\left(\alpha\ket{1}_M\ket{0}_{l}+\beta\ket{0}_M\ket{1}_{}\right)\otimes\left(\frac{1}{\sqrt{2}} \ket{0}_{r} + \frac{1}{\sqrt{2}} \ket{1}_{r}\right).
	\end{equation}
	
	Then, the outgoing quantum is not entangled with the rest. However, it does not carry any information about the initial matter state.
\end{example}

Our state exhibits these two problems simultaneously. It is expected that a solution to the paradox will solve both of them. Nevertheless, we have to keep in mind that mixed states and information loss are two different terms, which, in general, involve distinct physical situations.

\

In this chapter, we have proved that the conclusion of mixed states/remnants is unavoidable when the evaporation process of a black hole is studied. Moreover, the information loss which implies the process has been explicitly described. The next logical step in our discussion will be the study of a possible solution to the problem, the one that provides the BMS symmetry group.
\chapter{The BMS symmetry group}\label{Chapter 5}

When Bondi, van der Burg, Metzner and Sachs (BMS) first studied the asymptotic symmetries of asymptotically flat spacetimes \cite{bbm}, \cite{sachs} their expectations were to reproduce the symmetries of flat spacetimes, i.e. the Poincaré group (the group of all Lorentz transformations together with space-time translations). The surprise was that they found an infinite-dimensional group which contains as a subgroup the finite-dimensional Poincaré group. This result has an astounding conclusion: General Relativity does not reduce to Special Relativity in the case of weak fields at long distances.

\

The purpose of the first part of the present chapter is to give a brief review of asymptotic symmetries, paying special attention to references \cite{h.bms}, \cite{stro.lec} and \cite{zheng.bms}, as obtaining the fundamental results of the subject is a very complicated issue completely off the limits of this dissertation. We will review the structure of asymptotically flat spacetimes, and how supertransformations arise.

\

The second half is devoted to the study of the relation between these symmetries and the information loss problem.

\section{Asymptotically flat spacetimes}

If we ask for the symmetries that leave Minkowski spacetime invariant forms the so-called Poincaré group. This set of ten isometries arise when we look for solutions to the Killing equation:
\begin{equation}\label{5.1}
\pounds_{\xi} \eta_{\mu\nu}=\nabla_\mu\xi_\nu+\nabla_\nu\xi_\mu=0.
\end{equation}

The set of transformations generated by the solutions to this equation is made up of three boosts, three rotations and four translations. It is possible to extend the Poincaré group by adding conformal transformations, which are the ones that preserve the metric up to a conformal factor:
\begin{equation}\label{5.2}
\pounds_{\xi}\eta_{\mu\nu}=\Omega^2 \eta_{\mu\nu}.
\end{equation}

But, as we said earlier, the symmetry group in the case of curved spacetimes (even if they are asymptotically flat) is larger than this one. Let us look for symmetries which leave unchanged the boundary conditions of asymptotic flatness.

\begin{figure}[t]
	\centering
	\includegraphics[width=0.6\linewidth]{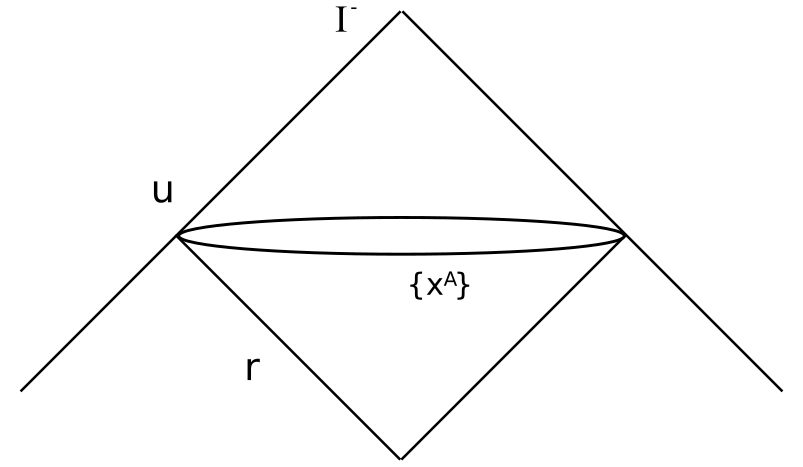}
	\caption{\textit{Minkowski spacetime in retarded Bondi coordinates $(u,r,\{x^A\})$.}}
	\label{Figure 5.1}
\end{figure}

We can use the retarded Bondi coordinates $(u,r,\{x^A\})$ to rewrite the usual Minkowski metric as:
\begin{equation}\label{5.3}
\diff s^2=-\diff u^2-2 \diff r \diff u + r^2 \gamma_{AB} \diff x^A \diff x^B.
\end{equation}

We see that this metric falls into the so-called Bondi gauge:
\begin{equation}\label{5.4}
\begin{split}
g_{rr}=g_{rA}&=0,\\
\partial_r\left(\frac{\det g_{AB}}{r^2}\right)&=0.\\
\end{split}
\end{equation}

We are interested in studying metrics which are asymptotically equal to flat spacetime. If we want to ensure the compliance with the asymptotically flat boundary conditions when we perform a transformation, the resulting components of the Riemann tensor must vanish sufficiently fast when taking the limit $r\rightarrow\infty$. This means that for every component of the curvature tensor, a fall-off rate must be established:
\begin{equation}\label{5.5}
R_{\mu\nu\rho\sigma}\left(g_{\alpha\beta}+\delta g_{\alpha\beta}\right) \xrightarrow{r\rightarrow\infty} 0 + \mathcal{O}\left(\frac{1}{r^{\kappa}}\right).
\end{equation}

As a consequence, the variations of the metric will obey some fall-off conditions too. The following set of fall-off rates was proposed by Sachs \footnote{This is a sufficient but non-necessary condition, weak enough to allow the existence of gravitational waves but strong enough to get rid of unphysical solutions. See page 64 of \cite{stro.lec}.}:
\begin{equation}\label{5.6}
\begin{split}
\delta g_{uA}&\sim\mathcal{O}\left(r^0\right),\\
\delta g_{ur}&\sim\mathcal{O}\left(r^{-2}\right),\\
\delta g_{uu}&\sim\mathcal{O}\left(r^{-1}\right),\\
\delta g_{AB}&\sim\mathcal{O}\left(r\right).\\
\end{split}
\end{equation}

The variations also have to satisfy the Bondi gauge:
\begin{equation}\label{5.7}
\begin{split}
&\delta g_{rr}=\delta g_{rA}=0,\\
&\partial_r\left(\frac{\det \left(g_{AB}+\delta g_{AB}\right)}{r^2}\right)=0.\\
\end{split}
\end{equation}

An asymptotically flat metric in Bondi coordinates can be expanded at null infinity as (choosing $x^A=(z,\bar{z})$, $z=\cot(\theta/2)\mathrm{e}^{\imath \varphi}$):
\begin{equation}\label{5.8}
\begin{split}
\diff s^2&=-\diff u^2-2 \diff r \diff u + r^2 \gamma_{z\bar{z}} \diff z \diff \bar{z}\\
&\phantom{=}+2 \ \frac{m_B}{r} \ \diff u^2+ r \ C_{zz} \ \diff z^2  +r \ C_{\bar{z}\bar{z}} \ \diff \bar{z}^2\\
&\phantom{=}+ \mathrm{D}^z C_{zz} \ \diff u \diff z+\mathrm{D}^{\bar{z}} C_{\bar{z}\bar{z}} \ \diff u \diff \bar{z}\\
&\phantom{=}+\frac{1}{r}\left[\frac{4}{3}\left(N_z+u\partial_z m_B\right)-\frac{1}{4}\partial_z\left(C_{zz}C^{zz}\right)\right] \diff u \diff z + c.c. + ...,\\
\end{split}
\end{equation}
where $\mathrm{D}_z$ is the covariant derivative with respect to the metric on the 2-sphere $\gamma_{z\bar{z}}$. The quantity $m_B$ is the Bondi mass aspect, whose integral over the sphere is the total Bondi mass, and $N_z$ is the angular momentum aspect, as its integration gives the total angular momentum. $C_{zz}$ describes the propagation of gravitational waves, and so does the tensor $\partial_u C_{zz}=N_{zz}$ which we call the Bondi news. This is the analogue to the electromagnetic field strength.

\section{Supertranslations}

In order to obtain the generators of the transformations which preserve that asymptotic structure, we need to find asymptotic solutions to Killing's equation. We have made, for the sake of simplicity, the following assumption: we restrict ourselves to diffeomorphisms with the large fall-off:
\begin{equation}\label{5.9}
\xi^{u},\xi^r \sim \mathcal{O}(1); \ \xi^{z},\xi^{\bar{z}} \sim \mathcal{O}\left(\frac{1}{r}\right).
\end{equation}

This assumption eliminates rotations and boosts, which grows with $r$ at infinity, as we can see from the generators of the Lorentz group (\ref{F.15}). The resulting equations are \footnote{See equation (5.2.2) of \cite{stro.lec}.}:
\begin{equation}\label{5.10}
\begin{split}
\pounds_\xi g_{ur}&=-\partial_u \xi^u+\mathcal{O}\left(\frac{1}{r}\right),\\
\pounds_\xi g_{zr}&=r^2 \gamma_{z\bar{z}} \ \partial_r \xi^{\bar{z}} -\partial_z\xi^u+\mathcal{O}\left(\frac{1}{r}\right),\\
\pounds_\xi g_{z\bar{z}}&=r^2 \gamma_{z\bar{z}}\left(2\xi^r+r\mathrm{D}_z\xi^z+r\mathrm{D}_{\bar{z}}\xi^{\bar{z}}\right)+\mathcal{O}\left(1\right),\\
\pounds_\xi g_{uu}&=-2\partial_u \xi^u-2\partial_u \xi^r+\mathcal{O}\left(\frac{1}{r}\right).\\
\end{split}
\end{equation}

The solutions to these equations are the following vector fields:
\begin{equation}\label{5.11}
\xi\equiv f\partial_u+ \mathrm{D}^z \mathrm{D}_zf \partial_r-\frac{1}{r} \left(\mathrm{D}^{z} f \partial_z+\mathrm{D}^{\bar{z}} f \partial_{\bar{z}}\right),
\end{equation}
where $f(z,\bar{z})$ is an arbitrary function. Therefore we have an infinite family of transformations. This vector field is the infinitesimal generator of supertranslations.

\begin{figure}[t]
	\centering
	\includegraphics[width=0.5\linewidth]{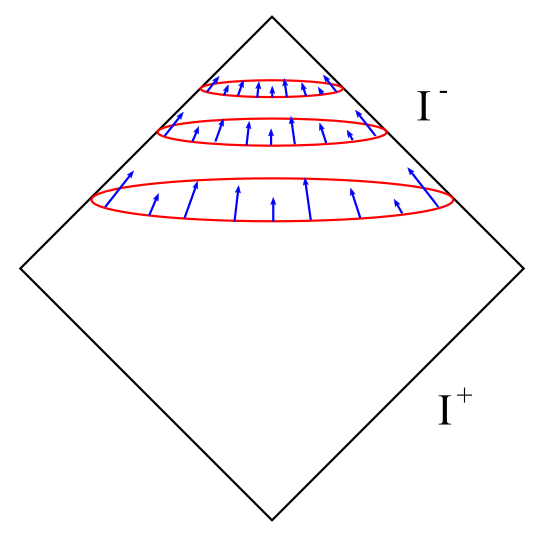}
	\caption{\textit{Supertranslations shift the retarded time $u$ in a different way at every angle.}}
	\label{Figure 5.2}
\end{figure}

The freedom to choose of $f$ allows to generate different translations along the null generators of $\mathrm{I}^+$. Imagine that we emit two light signals at the same time $u$ but from different angles of $\mathrm{I}^+$. If our supertranslations act in different manners depending on the angles, the data generated will be altered in a measurable way. The effects of supertranslations are discerned at a classical level. We can proceed in the same way at $I^-$, using the metric in the advanced Bondi coordinates.

\

Putting together the Lie derivatives (\ref{5.10}) with the solution, we can extract the action of the supertranslations on the data:
\begin{equation}\label{5.12}
\begin{split}
\pounds_f N_{zz}&=f\partial_u N_{zz},\\
\pounds_f m_B&=f\partial_u m_b+\frac{1}{4}\left(N^{zz}\mathrm{D}^2_z f+2\mathrm{D}_zN^{zz}\mathrm{D}_zf+c.c.\right),\\
\pounds_f C_{zz}&=f\partial_u C_{zz}-2\mathrm{D}^2_zf.\\
\end{split}
\end{equation}

We can now use the equations of motion in the form of the Einstein field equations:
\begin{equation}\label{5.13}
R_{\mu\nu}-\frac{1}{2}g_{\mu\nu}R=8\pi T_{\mu\nu}.
\end{equation}

Plugging in the previous expression the metric (\ref{5.8}) and expanding to large $r$, we get:
\begin{equation}\label{5.14}
\partial_u m_B=\frac{1}{4}\left[D_z^2N^{zz}+D_{\bar{z}}^2N^{\bar{z}\bar{z}}\right]-T_{uu}.
\end{equation}

This equation constrains the leading data at $\mathrm{I}^+$. For the null infinity $\mathrm{I}^-$ we can perform an analogue derivation. As the leading data at both $\mathrm{I}^+$ and $\mathrm{I}^-$ should obey some matching conditions (an infinite number of them as we have infinite choices of $f$). This implies the existence of conserved charges, given by:
\begin{equation}\label{5.15}
Q_f^+=\frac{1}{4\pi} \int_{\mathrm{I}^+_-} d^2 z \gamma_{z\bar{z}} f m_B,
\end{equation}
\begin{equation}\label{5.16}
Q_f^-=\frac{1}{4\pi} \int_{\mathrm{I}^-_+} d^2 z \gamma_{z\bar{z}} f m_B,
\end{equation}
where $\mathrm{I}^+_-$ is the region of $\mathrm{I}^+$ near $\mathrm{I}^-$ and $\mathrm{I}^-_+$ its analogous. The matching condition implies the conservation law $Q_f^+=Q_f^-$.

\section{Superrotations}

The fall-offs that we have assumed previously were highly restrictive. We have gotten rid of rotations and boost in the discussion of supertranslations. A generalized treatment of the previous leads to the appearance of superrotations. These are generated by an arbitrary vector field $Y^z$. An analogue matching condition for the angular momentum aspect $N_z$ could be obtain, which implies another infinite set of conserved charges \footnote{We have taken this result directly because our purpose is to study its consequences, rather than derive it. For a detailed treatment of the matter, check \textbf{Section 5.3} of \cite{stro.lec}.}:
\begin{equation}\label{5.17}
Q_Y^+=\frac{1}{8\pi} \int_{\mathrm{I}^+_-} d^2 z \left(Y_{\bar{z}}N_z+Y_zN_{\bar{z}}\right),
\end{equation}
\begin{equation}\label{5.18}
Q_Y^+=\frac{1}{8\pi} \int_{\mathrm{I}^-_+} d^2 z \left(Y_{\bar{z}}N_z+Y_zN_{\bar{z}}\right).
\end{equation}

The conservation of this superrotation charge is expressed as $Q_Y^+=Q_Y^-$.

\section{Gravitational memory effects}

Consider a region near null infinity $\mathrm{I}^+$ in which at some early and late times, $u_i$ and $u_f$ respectively, we have no Bondi news $N_{zz}$. In the in between, we may have gravitational waves passing through. From our assumption, we have for the retarded time $u_i$:
\begin{equation}\label{5.19}
M_B(u_i)=\mathrm{const.} \ C_{zz}(u_i)=0; \ N_{zz}(u_i)=0.
\end{equation}

At the late time $u_f$ the data preserve the form:
\begin{equation}\label{5.20}
M_B(u_f)=\mathrm{const.} \ C_{zz}(u_f)=0; \ N_{zz}(u_f)=0.
\end{equation}

We then notice that there is no retarded time nor energy flux dependence for the asymptotic data. So, the spacetime is the same up to a supertranslation, the radiation pulses passing through $\mathrm{I}^+$ changes the spacetime into another one, which is related to the former by a BMS transformation. Using (\ref{5.12}), for both times we will have:
\begin{equation}\label{5.21}
C_{zz}=-2D^2_z C.
\end{equation}

We can define the change in $C_{zz}$ and in the Bondi mass from initial to final retarded time as:
\begin{equation}\label{5.22}
\begin{split}
\Delta C_{zz}&=C_{zz}(u_f)-C_{zz}(u_i),\\
\Delta M_B&=M_B(u_f)-M_B(u_i).\\
\end{split}
\end{equation}

Now we can integrate (\ref{5.14}) with respect to $u$ to obtain:
\begin{equation}\label{5.23}
D_z^2 \Delta C^{zz}=2\Delta M_B + 2 \int_{u_i}^{u_f} \diff u T_{uu}.
\end{equation}

Using this last expression together with (\ref{5.17}), we can solve the differential equation upon finding an appropriate Green's function:
\begin{equation}\label{5.24}
\Delta C=\int \diff^2 z' \gamma_{z',\bar{z}'} \ G\left(z,\bar{z},z',\bar{z}'\right) \left(\Delta M_B + \int_{u_i}^{u_f} \diff u T_{uu}(z',\bar{z}')\right),
\end{equation}
with the following expression of the Green function:
\begin{equation}\label{5.25}
 G\left(z,\bar{z},z',\bar{z}'\right)=\frac{1}{\pi} \sin^2 \left(\frac{\Delta \theta}{2}\right) \log \sin^2 \left(\frac{\Delta \theta}{2}\right)
\end{equation}

This angle $\Delta \theta$ is the angle between the $(z,\bar{z})$ and the $(z',\bar{z}')$ points. Knowing this, we can deduce some important facts about how supertranlations generated by gravitational waves act (Figure \ref{Figure 5.3}). For instance, note that if the wave passes through the north pole ($\Delta \theta=0$), the effects of supertranslation vanish there and in the south pol e($\Delta \theta=\pi$), as long as they are larger at the equator ($\Delta \theta=\pi/2$). If we put a set of detectors near the region of study, the passage of gravitational radiation would displace them (depending on their position in the 2-sphere). This measurable phenomenon is known as the gravitational memory effect.

\begin{figure}[t]
	\centering
	\includegraphics[width=0.7\linewidth]{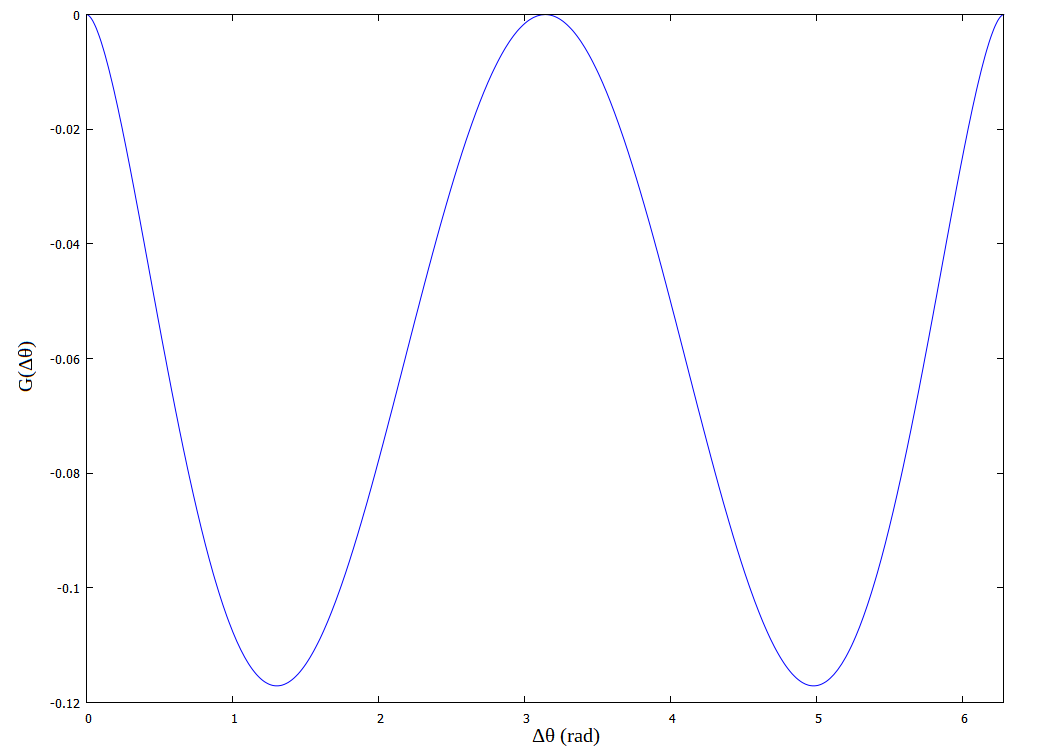}
	\caption{\textit{Green function as a function of the angle $\Delta \theta$ between the points $(z,\bar{z})$ and $(z',\bar{z}')$.}}
	\label{Figure 5.3}
\end{figure}

\begin{figure}[t]
	\centering
	\includegraphics[width=0.5\linewidth]{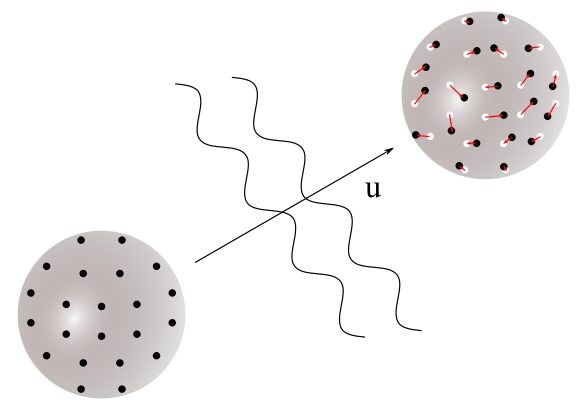}
	\caption{\textit{Schematic representation of the gravitational memory effect. After gravitational radiation passes through, the detectors (black dots) moves depending on its angular position.}}
	\label{Figure 5.4}
\end{figure}

\section{Soft hair}

Those who argument that information is lost in the formation/evaporation process of a black hole assume that such a static black hole is basically ``bald'', i.e., it is completely characterized by three parameters: its mass $M$, charge $Q$ and angular momentum $J$. This is known as the \textit{no-hair theorem}, and states that the static black hole is fully determined by these three quantities, up to diffeomorphisms. But as we have seen, the BMS group transformations, such as supertranslations, change our spacetime to a physically inequivalent one.

\

In the Hawking process, as a consequence of the conservation of charges, the sum of supertranslation charge of both black hole and Hawking quanta is fixed in the whole process. This forces the black hole to carry some \textbf{soft hair} which arises from supertrasnslations. Other symmetries, as superrotations, will lead to other kinds of hair. Moreover, as we have infinite families of supertransformations, and no favorite ones among them, we will have an infinite number of soft hairs.

\

As the Hawking radiation will carry supertransformation charges across null infinity $\mathrm{I}^+$, exact conservation of the charges requires that the black hole decreases its charges in the same amount. This enforces infinite correlations between the state of the outgoing Hawking radiation and the state of the black hole.

\

As a first approach to the problem, we can study the supertranslation hair on a classical Schwarzschild black hole. Let us consider the Schwarzschild metric in the advanced Bondi coordinates:
\begin{equation}\label{5.26}
\diff s^2=-\left(1-\frac{2m_B}{r}\right)\diff v^2+2\diff v\diff r+ r^2 \gamma_{z\bar{z}} \diff z \diff \bar{z}.
\end{equation}

We may apply the supertranslation generator (\ref{7.11}) in the form:
\begin{equation}\label{5.27}
\zeta= f\partial_u+ \frac{1}{r} \mathrm{D}^{z} f \partial_z -\frac{1}{2} \mathrm{D}^2f \partial_r.
\end{equation}

Now, choosing at linear level $\delta_\zeta g_{\mu\nu}=\pounds_\zeta g_{\mu\nu}$, we have to calculate the non-zero components of the Lie derivative of the metric along $\zeta$:
\begin{equation}\label{5.28}
\begin{split}
\pounds_\zeta g_{\mu\nu}&=\zeta^\beta_{\ ;\mu} g_{\beta\nu} + \zeta^\beta_{\ ;\nu} g_{\mu\beta}\\
&=\left(\zeta^\beta_{\ ,\mu}+\Gamma^{\beta}_{\ \gamma\mu} \zeta^{\gamma}\right) g_{\beta\nu} + \left(\zeta^\beta_{\ ,\nu}+\Gamma^{\beta}_{\ \gamma\nu} \zeta^{\gamma}\right) g_{\mu\beta}.\\
\end{split}
\end{equation}

In order to do so it is convenient to consult a catalogue of spacetimes as \cite{catalogue}, to directly use the expressions of the Christoffel symbols. This way one gets:
\begin{equation}\label{5.29}
\begin{split}
\pounds_{\zeta}g_{vv}&=2\left(\zeta^\beta_{\ ,v} +\Gamma^{\beta}_{\ \gamma v} \zeta^\gamma\right)g_{\beta v}\\
&=2\left[\left(\zeta^v_{\ ,v} +\Gamma^{v}_{\ \gamma v} \zeta^\gamma\right)g_{vv}+\left(\zeta^r_{\ ,v} +\Gamma^{r}_{\ \gamma v} \zeta^\gamma\right)g_{rv}\right]\\
&=2\left[\Gamma^{v}_{\ v v} \zeta^v g_{vv}+\left(\Gamma^{r}_{\ v v} \zeta^v+\Gamma^{r}_{\ r v} \zeta^r\right)g_{rv}\right]\\
&=\frac{2m_B}{r^2} \ f g_{vv}+2\left(\frac{m_B(r-2m_B)}{r^3} \ f+\frac{m_B}{2r^2} \ D^2 f\right) g_{rv}\\
&=\frac{m_B}{r^2} \ D^2 f,
\end{split}
\end{equation}
\begin{equation}\label{5.30}
\begin{split}
\pounds_{\zeta}g_{vz}&=\left(\zeta^\beta_{\ ,v} +\Gamma^{\beta}_{\ \gamma v} \zeta^\gamma\right)g_{\beta z}+\left(\zeta^\beta_{\ ,z} +\Gamma^{\beta}_{\ \gamma z} \zeta^\gamma\right)g_{v \beta}\\
&=\left(\zeta^{\bar{z}}_{\ ,v} +\Gamma^{\bar{z}}_{\ \gamma v} \zeta^\gamma\right)g_{\bar{z} z}+\left(\zeta^v_{\ ,z} +\Gamma^{v}_{\ \gamma z} \zeta^\gamma\right)g_{v v}+\left(\zeta^r_{\ ,z} +\Gamma^{r}_{\ \gamma z} \zeta^\gamma\right)g_{v r}\\
&=\left(\zeta^v_{\ ,z} +\Gamma^{v}_{\ z z} \zeta^z\right)g_{v v}+\left(\zeta^r_{\ ,z} +\Gamma^{r}_{\ z z} \zeta^z\right)g_{v r}\\
&=\left(D_z f - r\gamma_{zz} \zeta^z\right)g_{vv}+ \left(-\frac{1}{2} \ D_z D^2 f -r \gamma_{zz} \left(1-\frac{2m_B}{r}\right)\zeta^z\right)g_{vr}\\
&=-D_z f \left(1-\frac{2m_B}{r}\right) - \frac{1}{2} \ D_z D^2 f,\\
\end{split}
\end{equation}
\begin{equation}\label{5.31}
\begin{split}
\pounds_{\zeta}g_{z\bar{z}}&=\left(\zeta^\beta_{\ ,z} +\Gamma^{\beta}_{\ \gamma z} \zeta^\gamma\right)g_{\beta \bar{z}}+\left(\zeta^\beta_{\ ,\bar{z}} +\Gamma^{\beta}_{\ \gamma \bar{z}} \zeta^\gamma\right)g_{z\beta}\\
&=2\left(\zeta^z_{\ ,z} +\Gamma^{z}_{\ \gamma z} \zeta^\gamma\right)g_{z \bar{z}}+2\left(\zeta^{\bar{z}}_{\ ,z} +\Gamma^{\bar{z}}_{\ \gamma z} \zeta^\gamma\right)g_{\bar{z} \bar{z}}\\
&=2\left(\zeta^z_{\ ,z} +\Gamma^{z}_{\ r z} \zeta^r+\Gamma^{z}_{\ \bar{z} z} \zeta^{\bar{z}}\right)g_{z \bar{z}}+2\left(\zeta^{\bar{z}}_{\ , z}+\Gamma^{\bar{z}}_{\ zz}\zeta^z\right)g_{\bar{z}\bar{z}}\\
&=2\left(\frac{1}{r} \ D^2 f -\frac{1}{2r} \ D^2 f + \frac{1}{r} \gamma^{-1}_{z\bar{z}} D^z f\right)g_{z\bar{z}}\\
&=r D^2 f \gamma_{z\bar{z}} + 2r D_z D_{\bar{z}} f.\\
\end{split}
\end{equation}

The resulting geometry describing a black hole with linearized supertranslation hair takes the form:
\begin{equation}\label{5.32}
\begin{split}
\diff s^2 + \pounds_{\zeta} \diff s^2 &= -\left(1-\frac{2m_B}{r}-\frac{m_B}{r^2} \ D^2 f\right) \diff v^2+2\diff v \diff r\\
&\phantom{=}- D_z \left(2f \left(1-\frac{2m_B}{r}\right) + D^2 f\right) \diff v \diff z\\
&\phantom{=}+\left(r^2\gamma_{z\bar{z}}+2rD_zD_{\bar{z}} f+ r\gamma_{z\bar{z}} D^2 f\right) \diff z \diff \bar{z}.\\
\end{split}
\end{equation}

The event horizon is located at $r=2m_B+\frac{1}{2} \ D^2 f$ \footnote{In \cite{stro.lec}, places the event horizon at that location, but I don't see clearly why. That coordinate results from supertranslate the original event horizon applying $\zeta^r$, but does not make the new $g_{vv}$ vanish. The one which actually does it is $r=m_B\pm\sqrt{m_B^2+m_B D^2 f}$.}. This supertransformation does not add supertranslation charge to the black hole, as a common translation does not add linear momentum to a solution. But, as supertranslations do not commute with superrotations, the supertranslated black hole carries superrotation charge. Under a supertranslation, by comparing (\ref{5.32}) with (\ref{5.8}), the angular momentum aspect varies as \footnote{In \cite{stro.lec}, this result is presented by making no more considerations, but it seems a bit \textit{ad hoc}.}:
\begin{equation}\label{5.33}
\begin{split}
\delta_\zeta N_z=-3m_B \partial_z f.
\end{split}
\end{equation}

Thus, the superrotations charge which carries the supertranslated black hole will be, for a generic vector field $Y_z$:
\begin{equation}\label{5.34}
Q^{-}_Y(g,\delta_\zeta g)=-\frac{3}{8\pi} \int_{I^-_+} \diff^2 z \sqrt{\gamma} \ Y^{z} m_B \partial_z f.
\end{equation}

By imposing different choices of $f$ we can add an infinite number of charges to the black hole. This way, we can see that a classical black hole carries an infinite set of supertranslation hair.

\

In this chapter we have briefly introduced the BMS symmetry group, and studied its generators and associated charges. We have discussed the gravitational memory effect, and how it is related to the supertransformations. Finally, we have computed a supertranslation on a Schwarzschild back hole and noted that supertranslated black holes carry superrotation charge. Black holes carry an infinite number of charges that are conserved.
\chapter{Conclusion and Outlook}

In this work we have studied the information loss paradox in detail. To do so, we have learned QFT in curved spacetimes, amongst other things. Finally, we have considered the BMS symmetry group, as it is one of the most novel proposals that aim to solve this conjecture.

\

In chapter \ref{Chapter 1} we have contextualized motivated the topics that have been studied in the thesis. We have introduced them in a logical way, starting from the basics of GR and QM and finishing with the information loss paradox and BMS symmetries as a plausible solution.

\

In chapter \ref{Chapter 2} we have made an analysis of quantum field theory in curved spacetime. We studied the properties of the Klein-Gordon equation and its solutions in a curved background. We have expanded the solutions in orthonormal modes, introducing the creation and annihilation operators. This has led us to introduce the Bogoliubov transformation. This is a key concept, as from it we can predict the creation of particles in non-stationary spacetimes. Its application to accelerated motion in Minkowski space reveals the existence of the Unruh effect.

\

In chapter \ref{Chapter 3} we have introduced concepts as surface gravity or redshift factor. We have seen that the Unruh effect is closely related to the Hawking effect of black holes. A straightforward computation, involving the redshift factor, relates both effects. Returning to the idea of particle creation in non-stationary spacetimes, we model the black hole formation with the simplest Vaidya spacetime. It consists of the collapse of a single shock wave, located at some time. Taking the Klein-Gordon equation into spherical coordinates, for both Minkowski and Schwarzschild regions, and applying the formalism of the Bogoliubov transformation we re-obtain the result of Hawking radiation. Finally, we performed an estimation of the time that it takes the evaporation of a black hole.

\

In chapter \ref{Chapter 4} we turned ourselves to consider the quantum-mechanical nature of the states of the radiation quanta. We conclude that, in order to preserve the niceness conditions which ensure a low-energy limit where we can neglect the effects of quantum gravity, the state of the Hawking quanta plus the matter in the hole admit only small deviations from the leading order state. We proved that for this type of state, the entropy of entanglement always increases. This allow us to prove as theorem which states that the process of formation and evaporation of black holes leads unavoidably to mixed states/remnants. We have seen that this conclusion leads also to an information loss, with respect to the matter state that originates the black hole. This is the so-called information loss paradox.

\

In chapter \ref{Chapter 5} we have explored a possible solution to this problem. The BMS symmetry group consists of transformations which leave unchanged the asymptotically-flat behaviour of the metric. We followed the main steps in order to obtain the generators of the supertranslations. We also presented the charges associated with supertranslations and superrotations. We have studied the known as gravitational memory effect, associated with the passage of gravitational radiation. The invariance under supertransformations forces black holes to carry some sort of soft hair associated which these. As a simple example, we compute the effect of a supertranslation on the Schwarzschild metric, and noted that the supertranslated black hole carries some superrotation charge. The infinite correlations arising between the radiation quanta and the black hole are one of the most popular solutions to the information problem.

\

The relation between the gravitational memory effect, Weinberg's soft graviton theorem and asymptotic symmetries is possibly the fundamental piece to solve the paradox. A possible solution can be found in correlations between the final vacuum state of the evaporation process and/or outgoing soft particles and the Hawking radiation quanta. Thus, a good understanding of this soft modes (which will be produced in every scattering process) could lead to solve the information paradox, as they can carry that information \cite{stro.lec}, \cite{lust.bms}.

\

Anyway, a proper solution to the paradox must fulfil two conditions: (i) it should reproduce the Bekenstein entropy (\ref{3.1}) microscopically and (ii) it must enable us to perform explicit computations involving the information of black holes. Despite this first attempt to combine gravity with Quantum Mechanics has led us to a problem which endangers the very base of our theories, we can consider it a triumph. The efforts that the scientific community has made to solve the paradox have given multitude of new theories and possibilities. It is a huge problem, and thus a huge motivation to keep unravelling the ultimate understanding of Nature.

\

Some aspects of the thesis are not fully developed, specially the calculations involving the BMS symmetry group. This is because these are part of the hottest topics of current research in theoretical physics. This dissertation may be seen as a first step into a research career, that we intend to carry out in the next years.

\begin{appendices}
\chapter{Classical Field Theory}\label{Appendix A}

\section{Lagrangian formulation of field theory}

Consider a $n$ dimensional spacetime, $R$ a bounded region enclosed by the hypersurface $B=\partial R$. We want to study the dynamics of certain fields $\psi_{A}(x)$, $A=1,...,f$, knowing its values in the boundary.

\

Functions $\psi_{A}(x)$, $A=1,...,f$ satisfy Hamilton's Principle. The action functional defined as:
\begin{equation}\label{A.1}
S=\int_{R}\mathcal{L}\left(x,\psi(x),\frac{\partial \psi(x)}{\partial x}\right) \ \diff^{n}x,
\end{equation}
is stationary and the linera function $\mathcal{L}\left(x,\psi(x),\frac{\partial \psi(x)}{\partial x}\right)$ is the Lagrangian density.

\

We can take $\delta \psi_{A}(x)=0 \ \forall \ x\in B$. This way, variation in $S$ results:
\begin{equation}\label{A.2}
\begin{split}
\delta S&=\int_{R} \left(\frac{\partial \lag}{\partial \psi_{A}} \ \delta \psi_{A} + \frac{\partial \lag}{\partial \psi_{A,\alpha}} \ \delta \psi_{A,\alpha}\right) \ \diff^{n}x\\
&=\int_{R} \left(\frac{\partial \lag}{\partial \psi_{A}} \ \delta \psi_{A} + \frac{\partial}{\partial x^{\alpha}} \left[\frac{\partial \lag}{\partial \psi_{A,\alpha}} \ \delta \psi_{A}\right] - \frac{\partial}{\partial x^{\alpha}} \left(\frac{\partial \lag}{\partial \psi_{A,\alpha}}\right) \delta \psi_{A}\right) \ \diff^{n}x\\
&=\int_{R} \left[\frac{\partial \lag}{\partial \psi_{A}} - \frac{\partial}{\partial x^{\alpha}} \left(\frac{\partial \lag}{\partial \psi_{A,\alpha}}\right) \right]\ \delta \psi_{A} \ \diff^{n}x + \int_{R} \frac{\partial}{\partial x^{\alpha}} \left[\frac{\partial \lag}{\partial \psi_{A,\alpha}} \ \delta \psi_{A}\right] \ \diff^{n}x\\
&=\int_{R} \left[\frac{\partial \lag}{\partial \psi_{A}} - \frac{\partial}{\partial x^{\alpha}} \left(\frac{\partial \lag}{\partial \psi_{A,\alpha}}\right) \right]\ \delta \psi_{A} \ \diff^{n}x + \int_{B} \frac{\partial \lag}{\partial \psi_{A,\alpha}} \ \delta \psi_{A} \ \diff B.\\
\end{split}
\end{equation}

The last integral vanishes  because the boundary conditions imposed. Hamilton's Principle implies that this variation must be zero. Because the variations are completely arbitrary, it has to be:
\begin{equation}\label{A.3}
\frac{\partial \lag}{\partial \psi_{A}} - \frac{\partial}{\partial x^{\alpha}} \left(\frac{\partial \lag}{\partial \psi_{A,\alpha}}\right) = 0.
\end{equation}

These relations are the well-known Euler-Lagrange equations.

\

If the Lagrangian density depends on derivatives of the fields of superior order, the dynamical equations would be of order $>2$ and it'd be necessary to know more than just $\psi_{A}(x)$ at $B$ to determine the fields in $R$.

\

Lagrangian density is undetermined in the addition of the divergence of an arbitrary function $\lambda(x,\psi(x))$. Taking
\begin{equation}\label{A.4}
\lag'=\lag+\frac{\partial \lambda^\alpha(x,\psi(x))}{\partial x^\alpha},
\end{equation}
we can define:
\begin{equation}
S'=S+\int_{R}\frac{\partial \lambda^\alpha(x,\psi(x))}{\partial x^\alpha} \ \diff^n x=S+\int_{B}\lambda^\alpha(x,\psi(x)) \ \diff B_\alpha.
\end{equation}

This way $\delta S'=\delta S=0$ for the real fields and because of that this new Lagrangian density will lead us to the same dynamical equations.

\

Two sets of fields $\psi_A$ and $\chi_B$, with Lagrangian densities$\lag_1$ and $\lag_2$ generate the new Lagrangian density $\lag_0=\lag_1(\psi)+\lag_2(\chi)$ whose Euler-Lagrange equations are just the sum of the ones of $\lag_1$ and $\lag_2$.

\

Interaction between fields may be generated by a coupling term $\lag_I(\psi,\chi)$, resulting a Lagrangian density $\lag=\lag_1(\psi)+\lag_2(\chi)+\lag_I(\psi,\chi)$. We had assumed that Lagrangian densities doesn't depend on position (isolated system) and because of that it remains invariant under spacetime translation.

\section{Functional derivative}

A functional is said to be a function which takes another function as its argument. In variational calculus, the integrand to be minimized is a functional of certain unknown function which must satisfy some boundary conditions. So, functional derivative is a generalization of usual derivative, in this case the functional is differentiate about a function.

\

Action is a functional over $\psi_A(x)$:
\begin{equation}\label{A.5}
\begin{split}
S: &V \rightarrow \mathbb{R}\\
&\psi_{A}\mapsto S[\psi_{A}],
\end{split}
\end{equation}
and it's defined by:
\begin{equation}\label{A.6}
S[\psi_{A}]=\int_{R}\mathcal{L}\left(x,\psi(x),\frac{\partial \psi(x)}{\partial x},\frac{\partial^2 \psi(x)}{\partial x^2},...\right) \ \diff^{n}x.
\end{equation}

Under infinitesimal variations $\delta \psi_{A}(x)$, action changes like (assuming negligible variation in coordinates):
\begin{equation}\label{A.7}
\begin{split}
\delta S&=S[\psi_A+\delta\psi_A]-S[\psi_A]=\int_{R}\delta\mathcal{L} \ \diff^{n}x\\
&=\int_{R} \left[\frac{\partial \lag}{\partial \psi_{A}} \ \delta\psi_{A}+\frac{\partial \lag}{\partial \psi_{A,\alpha}} \ \delta\psi_{A,\alpha}+\frac{\partial \lag}{\partial \psi_{A,\alpha\beta}} \ \delta\psi_{A,\alpha\beta}+...\right] \ \diff^n x.
\end{split}
\end{equation}

The second term of the last integral reduce to:
\begin{equation}\label{A.8}
\frac{\partial \lag}{\partial \psi_{A,\alpha}} \ \delta\psi_{A,\alpha}=\frac{\partial}{\partial x^\alpha} \left(\frac{\partial \lag}{\partial \psi_{A,\alpha}} \ \delta\psi_{A}\right) - \frac{\partial}{\partial x^\alpha} \left(\frac{\partial \lag}{\partial \psi_{A,\alpha}}\right) \ \delta\psi_{A},
\end{equation}
and the third:
\begin{equation}\label{A.9}
\begin{split}
\frac{\partial \lag}{\partial \psi_{A,\alpha\beta}} \ \delta\psi_{A,\alpha\beta}&=\frac{\partial}{\partial x^\beta} \left(\frac{\partial \lag}{\partial \psi_{A,\alpha\beta}} \ \delta\psi_{A,\alpha}\right)-\frac{\partial}{\partial x^\beta} \left(\frac{\partial \lag}{\partial \psi_{A,\alpha\beta}}\right) \ \delta\psi_{A,\alpha}\\
&=\frac{\partial}{\partial x^\beta} \left(\frac{\partial \lag}{\partial \psi_{A,\alpha\beta}} \ \delta\psi_{A,\alpha}\right)-\frac{\partial}{\partial x^\alpha}\left[\frac{\partial}{\partial x^\beta} \left(\frac{\partial \lag}{\partial \psi_{A,\alpha\beta}} \ \delta\psi_{A}\right)\right]\\
&+\frac{\partial}{\partial x^\alpha}\left[\frac{\partial}{\partial x^\beta} \left(\frac{\partial \lag}{\partial \psi_{A,\alpha\beta}}\right)\right] \ \delta\psi_{A}.\\
\end{split}
\end{equation}

We can rewrite the variation of the action as:
\begin{equation}\label{A.10}
\delta S = \int_{R} \frac{\delta S}{\delta \psi_{A}} \ \delta \psi_A + \frac{\partial}{\partial x^\alpha} \left[\left(\frac{\partial \lag}{\partial \psi_{A,\alpha}} - \frac{\partial}{\partial x^\beta} \left(\frac{\partial \lag}{\partial \psi_{A,\alpha\beta}}\right)\right) \ \delta \psi_A + \frac{\partial \lag}{\partial \psi_{A,\alpha\beta}} \ \delta\psi_{A,\beta}+...\right] \ \diff^n x.
\end{equation}

We've defined the functional derivative of the action with respect to the field as:
\begin{equation}\label{A.11}
\frac{\delta S}{\delta \psi_{A}} = \frac{\partial \lag}{\partial \psi_{A}}-\frac{\partial}{\partial x^\alpha} \left(\frac{\partial \lag}{\partial \psi_{A,\alpha}}\right)+\frac{\partial}{\partial x^\alpha}\left[\frac{\partial}{\partial x^\beta} \left(\frac{\partial \lag}{\partial \psi_{A,\alpha\beta}}\right)\right]+...
\end{equation}

Using Gauss's Theorem, the integral of the total derivative becomes a surface integral:
\begin{equation}\label{A.12}
\delta S = \int_{R} \frac{\delta S}{\delta \psi_{A}} \ \delta \psi_A + \int_{B} \left[\left(\frac{\partial \lag}{\partial \psi_{A,\alpha}} - \frac{\partial}{\partial x^\beta} \left(\frac{\partial \lag}{\partial \psi_{A,\alpha\beta}}\right)\right) \ \delta \psi_A + \frac{\partial \lag}{\partial \psi_{A,\alpha\beta}} \ \delta\psi_{A,\beta}+...\right] \ \diff B_\alpha.
\end{equation}

If Lagrangian density does not depend on derivatives of superior order, imposing $\delta \psi_A(x)=0 \ \forall x\in B$ the surface term vanishes. Applying Hamilton's Principle, we obtain the usual Euler-Lagrange equations:
\begin{equation}\label{A.13}
\frac{\delta S}{\delta \psi_{A}} = \frac{\partial \lag}{\partial \psi_{A}}-\frac{\partial}{\partial x^\alpha} \left(\frac{\partial \lag}{\partial \psi_{A,\alpha}}\right)=0.
\end{equation}

If the Lagrangian density depends on derivatives of the fields of superior order, we must impose boundary conditions  to the derivatives of the variations of the field, or introduce boundary terms in the action that cancel the ones appearing in the variation of $S$. In this case, the equations of motion are:
\begin{equation}\label{A.14}
\frac{\delta S}{\delta \psi_{A}} = \frac{\partial \lag}{\partial \psi_{A}}-\frac{\partial}{\partial x^\alpha} \left(\frac{\partial \lag}{\partial \psi_{A,\alpha}}\right)+\frac{\partial}{\partial x^\alpha}\left[\frac{\partial}{\partial x^\beta} \left(\frac{\partial \lag}{\partial \psi_{A,\alpha\beta}}\right)\right]+...=0.
\end{equation}

This differential equations are of order $>2$, and for solving them we need to introduce initial values for the fields and its derivatives.

\

The Lagrangian density is undetermined by the addition of the divergence of an arbitrary function of the fields $\lambda(\psi_A)$, with the restriction $\lambda(\psi_A)=0$ over the boundary.

\section{Noether's theorem}

Consider a continuous transformation of the coordinates and the fields:
\begin{equation}\label{A.15}
x \rightarrow x' ; \ \psi_{A}(x) \rightarrow \psi'_{A}(x'), \ A=1,...,f.
\end{equation}

A transformation like this is called a symmetry if:
\begin{equation}\label{A.16}
\begin{split}
\Delta S= S'-S&=\int_{R}\lag\left(x',\psi'_{A}(x'),\frac{\partial \psi'_{A}(x')}{\partial x'}\right) \ \diff^n x'-\int_{R}\lag\left(x,\psi_{A}(x),\frac{\partial \psi_{A}(x)}{\partial x}\right) \ \diff^n x\\
&=\int_{R}\frac{\partial \lambda^{\alpha}(x,\psi_{A}(x))}{\partial x^\alpha} \ \diff^n x,
\end{split}
\end{equation}
where $\lambda(x,\psi_{A}(x))$ is an arbitrary function of position and the fields and $\lag\left(x',\psi'_{A}(x'),\frac{\partial \psi'_{A}(x')}{\partial x'}\right)$ has the same functional dependence as the original Lagrangian density but expressed in the new variables.

\

Hamilton's Principle gives us $\delta S=\delta S'=0$. If $\psi_A(x)$ are solutions to the equations of motion, $\psi'_A(x')$ would be too but transformed. The change of the variables and the local change of the fields may be write as:
\begin{equation}\label{A.17}
\begin{split}
&x'^\alpha=x^\alpha+\delta x^\alpha,\\
&\psi'_A(x)=\psi_A(x)+\bar{\delta}\psi_A(x).\\
\end{split}
\end{equation}

To first order, the change of the fields is:
\begin{equation}\label{A.18}
\begin{split}
\delta \psi_A(x)&=\psi'_A(x')-\psi_A(x)=\psi'_A(x')-\psi_A'(x)+\psi'_A(x)-\psi_A(x)\\
&=\psi'_A(x')-\psi_A'(x)+\bar{\delta}\psi_A(x)\simeq\psi'_A(x)+\psi'_{A,\alpha}(x) \ \delta x^\alpha-\psi_A'(x)+\bar{\delta}\psi_A(x)\\
&=\psi_{A,\alpha}(x) \ \delta x^\alpha+\cancelto{0}{\bar{\delta}\psi_{A,\alpha}(x) \ \delta x^\alpha}+\bar{\delta}\psi_A(x)\\
&=\psi_{A,\alpha}(x) \ \delta x^\alpha+\bar{\delta}\psi_A(x),
\end{split}
\end{equation}
where we have employed the Taylor's series expansion of $\psi'_A(x')=\psi'_A(x+\delta x)$ around $0$, keeping only terms to first order of $\delta x^\alpha$. It's directional derivative is calculated as:
\begin{equation}\label{A.19}
\begin{split}
\delta  \psi_{A,\alpha}(x)&=\frac{\partial \psi'_{A}(x')}{\partial x'^\alpha}-\psi_{A,\alpha}(x)=\frac{\partial}{\partial x'^\alpha} \left[\psi_A(x)+\bar{\delta}\psi_{A}(x)+\psi_{A,\gamma}(x) \ \delta x^{\gamma}\right]-\psi_{A,\alpha}(x)\\
&=\frac{\partial x^\beta}{\partial x'^\alpha}\left[\psi_{A,\beta}(x)+\bar{\delta}\psi_{A,\beta}(x)+\psi_{A,\gamma \beta}(x) \ \delta x^{\gamma}+\psi_{A,\gamma}(x) \ \delta x^{\gamma}_{,\beta}\right]-\psi_{A,\alpha}(x).\\
\end{split}
\end{equation}

We calculate separately the partial derivative of the coordinates. First we compute:
\begin{equation}\label{A.20}
\frac{\partial x'^\alpha}{\partial x^\beta}=\frac{\partial}{\partial x^\beta}\left(x^\alpha+\delta x^\alpha\right)=\delta^{\alpha}_{\beta}+\delta x^{\alpha}_{,\beta},
\end{equation}
and now take the inverse:
\begin{equation}\label{A.21}
\frac{\partial x^\beta}{\partial x'^\alpha}=\left(\frac{\partial x'^\alpha}{\partial x^\beta}\right)^{-1}=\left(\delta^{\alpha}_{\beta}+\delta x^{\alpha}_{,\beta}\right)^{-1}\simeq \delta^{\beta}_{\alpha}-\delta x^{\beta}_{,\alpha}.
\end{equation}

Taking all together, neglecting terms of superior order in $\delta$:
\begin{equation}\label{A.22}
\begin{split}
\delta  \psi_{A,\alpha}(x)&\simeq\left(\delta^{\beta}_{\alpha}-\delta x^{\beta}_{,\alpha}\right)\left[\psi_{A,\beta}(x)+\bar{\delta}\psi_{A,\beta}(x)+\psi_{A,\gamma \beta}(x) \ \delta x^{\gamma}+\psi_{A,\gamma}(x) \ \delta x^{\gamma}_{,\beta}\right]-\psi_{A,\alpha}(x)\\
&\simeq \psi_{A,\alpha}(x)-\psi_{A,\beta}(x) \ \delta x^{\beta}_{,\alpha}+\bar{\delta}\psi_{A,\alpha}(x)+\psi_{A,\gamma \alpha}(x) \ \delta x^{\gamma}+\psi_{A,\gamma}(x) \ \delta x^{\gamma}_{,\alpha}-\psi_{A,\alpha}(x)\\
&=\bar{\delta}\psi_{A,\alpha}(x)+\psi_{A,\alpha \beta}(x) \ \delta x^{\beta}.
\end{split}
\end{equation}

Now, the Jacobian of the change of variable is:
\begin{equation}\label{A.23}
\mathrm{det} \left(\frac{\partial x'^\alpha}{\partial x^\beta}\right)=\mathrm{det} \left(\delta^{\alpha}_{\beta}+\delta x^{\alpha}_{,\beta}\right)=1+\delta x^{\alpha}_{,\alpha}+\mathcal{O}\left(\left(\delta x^{\alpha}\right)^2\right).
\end{equation}

For a better perform of the calculation, we evaluate first the difference of the Lagrangian densities:
\begin{equation}\label{A.24}
\begin{split}
&\lag\left(x',\psi'_{A}(x'),\frac{\partial \psi'_{A}(x')}{\partial x'}\right)-\lag\left(x,\psi_{A}(x),\frac{\partial \psi_{A}(x)}{\partial x}\right)=\frac{\partial \lag}{\partial \psi_A} \ \left(\psi_{A,\alpha} \ \delta x^\alpha + \bar{\delta}\psi_A\right)\\
&+\frac{\partial \lag}{\partial \psi_{A,\alpha}} \ \left(\bar{\delta}\psi_{A,\alpha}(x)+\psi_{A,\alpha \beta}(x) \ \delta x^{\beta}\right)+\frac{\partial \lag}{\partial x^\alpha} \ \delta x^\alpha=\frac{\partial \lag}{\partial \psi_A} \ \bar{\delta}\psi_A+\frac{\partial \lag}{\partial \psi_{A,\alpha}} \ \bar{\delta}\psi_{A,\alpha}(x)\\
&+\frac{\partial \lag}{\partial x^\alpha} \ \delta x^\alpha.
\end{split}
\end{equation}

Then, the difference of the action is:
\begin{equation}\label{A.25}
\begin{split}
\Delta S&=\int_{R} \ \diff^nx \left(\frac{\delta S}{\delta \psi_A} \ \bar{\delta}\psi_A + \frac{\diff}{\diff x^\alpha}\left[\frac{\partial \lag}{\partial \psi_{A,\alpha}} \ \bar{\delta}\psi_A\right]+\frac{\partial \lag}{\partial x^\alpha} \ \delta x^\alpha + \frac{\partial \delta x^{\alpha}}{\partial x^\alpha} \ \lag\right)\\
&=\int_{R} \ \diff^nx \left(\frac{\delta S}{\delta \psi_A} \ \bar{\delta}\psi_A + \frac{\diff}{\diff x^\alpha}\left[\frac{\partial \lag}{\partial \psi_{A,\alpha}} \ \bar{\delta}\psi_A+\lag \delta x^\alpha\right]\right)=\int_{R} \ \diff^nx \ \frac{\diff \lambda^\alpha(x,\psi_A)}{\diff x^\alpha}.
\end{split}
\end{equation}

The last equality only holds when the transformation is a symmetry. If the fields $\psi_A$ are the true ones, then the equations of Euler-Lagrange are satisfied and we can write:
\begin{equation}\label{A.26}
\frac{\diff}{\diff x^\alpha}\left[\frac{\partial \lag}{\partial \psi_{A,\alpha}} \ \bar{\delta}\psi_A+\lag \delta x^\alpha\right]=\frac{\diff \lambda^\alpha(x,\psi_A)}{\diff x^\alpha}.
\end{equation}

\chapter{Lagrangian formulation of General Relativity}\label{Appendix B}

The action functional of General Relativity contains a contribution from the gravitational field and another contribution from matter fields. We can denote them as:
\begin{equation}\label{B.1}
S_{GR} \left[g,\phi\right]=S_{G}\left[g\right] + S_{M} \left[\phi;g\right].
\end{equation}

In turn, the gravitational action contains three terms. The Einstein-Hilbert term, a boundary term and a nondynamical term (affecting numerical value of the action but not the equations of motion we are interested). Explicitly, we write:
\begin{equation}\label{B.2}
S_{G}\left[g\right] = S_{EH} \left[g\right] + S_{B} \left[g\right] - S_0.
\end{equation}

The Einstein field equations:
\begin{equation}\label{B.3}
G_{\mu \nu} = 8 \pi T_{\mu \nu},
\end{equation}
are recovered varying the action with respect to the metric $g_{\mu \nu}$, restricting the variations to the condition:
\begin{equation}\label{B.4}
\left.\delta g_{\mu \nu}\right|_{\partial \mathcal{V}}=0,
\end{equation}
where $\mathcal{V}$ is the region of the spacetime manifold we are integrating over, bounded by the closed hypersurface $\partial \mathcal{V}$.

\

We are going to study the terms of the action separately.

\section{The Einstein-Hilbert action}

Using natural units where $G=c=1$ (if we don't say anything this is going to be assumed), the Einstein-Hilbert action in $n$ dimensions is:
\begin{equation}\label{B.5}
S_{EH} \left[g\right]=\frac{1}{16 \pi} \ \int_{\mathcal{V}} \diff^n x \sqrt{-g} R.
\end{equation}

There, $R$ is the Ricci scalar of the metric $g_{\mu \nu}$ in the region $\mathcal{V}$ we are integrating over. We are going to vary this action with respect $g_{\mu \nu}$. Since we have a product of two quantities that depends on the metric, we must recall the Leibniz rule of differentiation.

\

The Ricci scalar is just the contraction of the Ricci tensor:
\begin{equation}\label{B.6}
R=g^{\mu \nu}R_{\mu \nu}.
\end{equation}

Ricci tensor only depends on the metric through the Levi-Cività (affine) connection. In terms of the Riemann curvature tensor, the connection is written as:
\begin{equation}\label{B.7}
R^{\mu}_{\ \nu \rho \sigma}=\partial_{\rho} \Gamma^{\mu}_{\ \nu \sigma} - \partial_{\sigma} \Gamma^{\mu}_{\ \nu \rho} + \Gamma^{\mu}_{\ \kappa \rho} \Gamma^{\kappa}_{\ \nu \sigma} - \Gamma^{\mu}_{\ \kappa \sigma} \Gamma^{\kappa}_{\ \nu \rho}.
\end{equation}

Now, it's known that the symbols $\Gamma^{\alpha}_{\ \beta \gamma}$ are not tensors. This is easy to see in its transformation formula:
\begin{equation}\label{B.8}
\bar{\Gamma}^{\alpha}_{\ \beta \gamma}=\frac{\partial \bar{x}^{\alpha}}{\partial x^{\mu}} \ \left(\Gamma^{\mu}_{\ \nu \sigma} \ \frac{\partial x^{\nu}}{\partial \bar{x}^{\beta}} \frac{\partial x^{\sigma}}{\partial \bar{x}^{\gamma}} + \frac{\partial^2 x^{\mu}}{\partial \bar{x}^{\beta} \partial \bar{x}^{\gamma}}\right).
\end{equation}

The last term destroys the tensor character of $\Gamma^{\mu}_{\ \nu \sigma}$. But when we take the difference between two sets of $\Gamma$'s, $\delta \Gamma^{\mu}_{\ \nu \sigma}$, this term vanishes and it turns out to be tensorial.

\

The variation of the Riemann, $\delta R^{\mu}_{\ \nu \rho \sigma}$, consist in two terms of derivatives of $\delta \Gamma^{\mu}_{\ \nu \sigma}$ and four terms of the form $\Gamma^{\mu}_{\ \kappa \rho}\delta \Gamma^{\kappa}_{\ \nu \sigma}$:
\begin{equation}\label{B.9}
\delta R^{\mu}_{\ \nu \rho \sigma}=\partial_{\rho}\delta \Gamma^{\mu}_{\ \nu \sigma} - \partial_{\sigma} \delta \Gamma^{\mu}_{\ \nu \rho}+\delta\Gamma^{\mu}_{\ \kappa \rho} \Gamma^{\kappa}_{\ \nu \sigma}+\Gamma^{\mu}_{\ \kappa \rho} \delta \Gamma^{\kappa}_{\ \nu \sigma}-\delta\Gamma^{\mu}_{\ \kappa \sigma} \Gamma^{\kappa}_{\ \nu \rho}-\Gamma^{\mu}_{\ \kappa \sigma} \delta \Gamma^{\kappa}_{\ \nu \rho}.
\end{equation}

Since $\delta \Gamma^{\mu}_{\ \nu \sigma}$ is a tensor, we can take its covariant derivative:
\begin{equation}\label{B.10}
\nabla_{\rho}\delta \Gamma^{\mu}_{\ \nu \sigma}=\partial_{\rho}\delta \Gamma^{\mu}_{\ \nu \sigma}+ \Gamma^{\mu}_{\ \lambda \rho} \delta \Gamma^{\lambda}_{\ \nu \sigma} - \Gamma^{\lambda}_{\ \nu \rho} \delta \Gamma^{\mu}_{\ \lambda \sigma} - \Gamma^{\lambda}_{\ \rho \sigma} \delta \Gamma^{\mu}_{\ \nu \lambda}.
\end{equation}

Performing the same calculus for $\delta \Gamma^{\mu}_{\ \nu \rho}$ and solving for the normal derivative, one gets:
\begin{equation}\label{B.11}
\delta R^{\mu}_{\ \nu \rho \sigma}=\nabla_{\rho}\delta \Gamma^{\mu}_{\ \nu \sigma} - \nabla_{\sigma} \delta \Gamma^{\mu}_{\ \nu \rho}.
\end{equation}

The variation of the Ricci tensor is obtained by simply contracting two indices of the variation of the Riemann. The resulting expression is the \textit{Palatini identity}:
\begin{equation}\label{B.12}
\delta R_{\nu \sigma}=\delta R^{\mu}_{\ \nu \mu \sigma}=\nabla_{\mu}\delta \Gamma^{\mu}_{\ \nu \sigma} - \nabla_{\sigma} \delta \Gamma^{\mu}_{\ \nu \mu}.
\end{equation}

Recalling equation (\ref{B.6}), the variation of the Ricci scalar (with respect to the inverse metric) is:
\begin{equation}\label{B.13}
\begin{split}
\delta R&=\delta g^{\mu \nu} R_{\mu \nu} + g^{\mu \nu} \delta R_{\mu \nu}\\
&=\delta g^{\mu \nu} R_{\mu \nu} + g^{\mu \nu} \left(\nabla_{\lambda}\delta \Gamma^{\lambda}_{\ \mu \nu} - \nabla_{\nu} \delta \Gamma^{\lambda}_{\ \mu \lambda}\right)\\
&=\delta g^{\mu \nu} R_{\mu \nu} + \nabla_{\lambda}\left(g^{\mu \nu}\delta \Gamma^{\lambda}_{\ \mu \nu} - g^{\mu \lambda}\delta \Gamma^{\nu}_{\ \mu \nu}\right).
\end{split}
\end{equation}

In this last step we have used $\nabla_{\rho} g^{\mu \nu}=0$. Taking a closer look to the second term (denote the parenthesis as $v^{\lambda}$), we can introduce it in the integral and apply Stoke's theorem:
\begin{equation}\label{B.14}
\int_{\mathcal{V}} \diff^n x \sqrt{-g} \ \nabla_{\lambda}v^{\lambda}=(-1)^{n}\oint_{\partial\mathcal{V}} \diff^{n-1}y \sqrt{\left|h\right|} n^2 n_{\lambda} v^{\lambda},
\end{equation}
where $n_{\mu}$ is the unit normal vector to $\partial \mathcal{V}$ and $h$ the determinant of the induced metric on $\partial \mathcal{V}$. At the boundary, we must impose the condition:
\begin{equation}\label{B.15}
\delta g^{\mu \nu}=\delta g_{\mu \nu}=0
\end{equation}

Under this assumption, we can evaluate the term $n_{\lambda} v^{\lambda}$ at $\partial \mathcal{V}$. But first, it's necessary to calculate the variation of the connexion parameters there:
\begin{equation}\label{B.16}
\left.\delta \Gamma^{\mu}_{\nu \sigma}\right|_{\partial \mathcal{V}}=\frac{1}{2} g^{\mu \alpha} \left[\partial_{\nu}\delta g_{\alpha \sigma}+\partial_{\sigma} \delta g_{\nu \alpha}-\partial_{\alpha} \delta g_{\sigma \nu}\right].
\end{equation}

Substituting that in $v^{\lambda}$ leads us to:
\begin{equation}\label{B.17}
\begin{split}
\left.v^{\lambda}\right|_{\partial \mathcal{V}}&=g^{\mu \nu} \frac{1}{2} g^{\lambda \sigma} \left[\partial_{\mu}\delta g_{\sigma \nu}+\partial_{\nu} \delta g_{\mu \sigma}-\partial_{\sigma} \delta g_{\nu \mu}\right]-g^{\mu \lambda} \frac{1}{2} g^{\nu \sigma} \left[\partial_{\mu}\delta g_{\sigma \nu}+\partial_{\nu} \delta g_{\mu \sigma}-\partial_{\sigma} \delta g_{\nu \mu}\right]\\
&=\frac{1}{2} g^{\mu \nu} g^{\lambda \sigma} \left[\partial_{\mu}\delta g_{\sigma \nu}+\partial_{\nu} \delta g_{\mu \sigma}-\partial_{\sigma} \delta g_{\nu \mu}\right] - \frac{1}{2} g^{\sigma \lambda} g^{\nu \mu} \left[\partial_{\sigma}\delta g_{\mu \nu}+\partial_{\nu} \delta g_{\sigma \mu}-\partial_{\mu} \delta g_{\nu \sigma}\right]\\
&=g^{\mu \nu} g^{\lambda \sigma} \left(\partial_{\mu}\delta g_{\sigma \nu} - \partial_{\sigma} \delta g_{\nu \mu}\right).
\end{split}
\end{equation}

So now we're able to write:
\begin{equation}\label{B.18}
\begin{split}
\left.n_{\lambda} v^{\lambda}\right|_{\partial \mathcal{V}}&=n^{\lambda} g^{\mu \nu} \left(\partial_{\mu}\delta g_{\lambda \nu} - \partial_{\lambda} \delta g_{\nu \mu}\right)\\
&=n^{\lambda} \left(n^2 n^\mu n^\nu + h^{\mu \nu}\right) \left(\partial_{\mu}\delta g_{\lambda \nu} - \partial_{\lambda} \delta g_{\nu \mu}\right)\\
&=n^{\lambda} h^{\mu \nu} \left(\partial_{\mu}\delta g_{\lambda \nu} - \partial_{\lambda} \delta g_{\nu \mu}\right),
\end{split}
\end{equation}
where we have used the definition of induced metric, $h_{\mu \nu}=g_{\mu \nu}-n^2 n_\mu n_\nu$, on the hypersurface $\partial \mathcal{V}$. As $\delta g_{\mu \nu}$ vanishes on $\partial \mathcal{V}$, its tangential derivative must also vanish, $\delta g_{\mu \nu, \sigma} e^{\sigma}_{\ c}=0$. It follows that:
\begin{equation}\label{B.19}
h^{\mu \nu} \delta g_{\lambda \nu, \mu}=h^{ab} e^{\mu}_{\ a} e^{\nu}_{\ b} \delta g_{\lambda \nu, \mu}=0.
\end{equation}

Finally, we obtain:
\begin{equation}\label{B.20}
\left.n_{\lambda} v^{\lambda}\right|_{\partial \mathcal{V}}=- h^{\mu \nu} n^{\lambda} \delta g_{\nu \mu, \lambda}.
\end{equation}

Until now, we have calculated the variation of $R$. Finally we focus our attention on the variation of the determinant of the metric.

\

In matrix algebra, \textit{Jacobi's Formula} express the derivative of the determinant of a matrix in terms of its adjugate and its derivative. It may be written as:
\begin{equation}\label{B.21}
\frac{\diff}{\diff t} \det{A (t)}=\trace \left( \mathrm{adj}(A(t)) \frac{\diff A(t)}{\diff t} \right).
\end{equation}

This may be applied to calculate the variation of the determinant of the metric as follows:
\begin{equation}\label{B.22}
\delta g=g g^{\mu \nu} \delta g_{\mu \nu}.
\end{equation}

Using Leibniz's rule for product differentiation one can easily get:
\begin{equation}\label{B.23}
\delta \sqrt{-g}=\frac{-1}{2 \sqrt{-g}} \ \delta g=\frac{-1}{2 \sqrt{-g}} \ g g^{\mu \nu} \delta g_{\mu \nu}=\frac{1}{2} \ \sqrt{-g} \ g^{\mu \nu} \delta g_{\mu \nu}.
\end{equation}

Similarly, differentiation of the inverse of the variation of the metric gives:
\begin{equation}\label{B.24}
\delta g^{\mu \nu}=-g^{\mu \alpha} \delta g_{\alpha \beta} g^{\beta \nu}.
\end{equation}

The use of this last expression, together with (\ref{B.23}) leads to:
\begin{equation}\label{B.25}
\delta \sqrt{-g}=-\frac{1}{2} \ \sqrt{-g} \ g_{\mu \nu} \delta g^{\mu \nu}.
\end{equation}

Finally, we have calculated the variation of the Einstein-Hilbert action:
\begin{equation}\label{B.26}
\begin{split}
16 \pi \ \delta S_{EH}&=\int_{\mathcal{V}} \diff^n x \left(R_{\mu \nu} - \frac{1}{2} \ g_{\mu \nu}\right) \sqrt{-g} \ \delta g^{\mu \nu} - (-1)^{n}\oint_{\partial\mathcal{V}} \diff^{n-1}y \sqrt{\left|h\right|} n^2 h^{\mu \nu} n^{\lambda} \delta g_{\nu \mu, \lambda}\\
&=\int_{\mathcal{V}} \diff^n x \ G_{\mu \nu} \sqrt{-g} \ \delta g^{\mu \nu} - (-1)^{n}\oint_{\partial\mathcal{V}} \diff^{n-1}y \sqrt{\left|h\right|} n^2 h^{\mu \nu} n^{\lambda} \delta g_{\nu \mu, \lambda}.
\end{split}
\end{equation}

As we can see, except for the surface integral, this variation leads us to the Einstein field equations. This last term will be cancelled by the variation of the boundary term added to the total action. The need for that term comes from the dependency of the Ricci scalar (the Lagrangian density of General Relativity) on second derivatives os the metric.

\section{Variation of the boundary term}

The boundary term is the integral over the boundary of the trace of the extrinsic curvature of the boundary, $K$:
\begin{equation}\label{B.27}
S_{B}\left[g\right]=\frac{(-1)^n}{8 \pi} \ \oint_{\partial \mathcal{V}} \diff^{n-1}y\  n^2 K \sqrt{\left|h\right|}.
\end{equation}

Since the induced metric is fixed on $\partial \mathcal{V}$, the variation only affects to $K$. The trace of the extrinsic curvature may be written as:
\begin{equation}\label{B.28}
\begin{split}
K&=n^{\mu}_{\ ;\mu}=g^{\mu \nu} n_{\mu;\nu}=(n^2 n^{\mu} n^{\nu} + h^{\mu \nu})n_{\mu;\nu}\\
&=h^{\mu \nu} n_{\mu;\nu}=h^{\mu \nu}\left(n_{\mu,\nu} - \Gamma^{\lambda}_{\ \mu \nu} n_{\lambda}\right).
\end{split}
\end{equation}

The variation of this quantity is found:
\begin{equation}\label{B.29}
\begin{split}
\delta K&=-h^{\mu \nu} \delta \Gamma^{\lambda}_{\ \mu \nu} n_{\lambda}\\
&=-h^{\mu \nu} \left(\frac{1}{2} g^{\lambda \alpha} \left[\delta g_{\alpha \nu,\mu}+\delta g_{\mu \alpha,\nu}-\delta g_{\nu \mu,\alpha}\right]\right) n_{\lambda}\\
&=-\frac{1}{2} h^{\mu \nu} \left[\delta g_{\alpha \nu,\mu}+\delta g_{\mu \alpha,\nu}-\delta g_{\nu \mu,\alpha}\right] n^{\alpha}\\
&=\frac{1}{2} h^{\mu \nu} \delta g_{\mu \nu,\alpha} n^{\alpha},\\
\end{split}
\end{equation}
where we used (\ref{B.16}) and (\ref{B.19}). This way we obtain that variation of the boundary term is:
\begin{equation}\label{B.30}
16 \pi \ \delta S_{B} = (-1)^n \ \oint_{\partial \mathcal{V}} \diff^{n-1}y \ \sqrt{\left|h\right|} n^2 h^{\mu \nu} n^{\alpha} \delta g_{\mu \nu,\alpha}.
\end{equation}

We see that indeed it cancels out the surface integral in the variation of the Einstein-Hilbert action. Because, as we said at the beginning, the nondynamical term only affects to the numerical values of the gravitational action but not to the variation, the variation of the total gravitational action is:
\begin{equation}\label{B.31}
\delta S_{G}=\frac{1}{16 \pi} \ \int_{\mathcal{V}} \diff^n x \ G_{\mu \nu} \sqrt{-g} \ \delta g^{\mu \nu},
\end{equation}
and evidently, as the variations of the metric are arbitrary and Hamilton's Principle is applied for the real metric, we obtain the vacuum Einstein field equations:
\begin{equation}\label{B.32}
G_{\mu \nu}=0.
\end{equation}

\section{Variations of the matter action}

The matter action is taken to be:
\begin{equation}\label{B.33}
S_{M} \left[\phi;g\right]=\int_{\mathcal{V}} \diff^n x \ \sqrt{-g} \ \lag(\phi,\phi_{,\mu};g_{\mu\nu}),
\end{equation}
for some Lagrangian density $\lag$ which depends only on the matter fields and its first derivatives and the metric (and none of its derivatives). The variation of this action is given by (recalling equation (\ref{B.25})):
\begin{equation}\label{B.34}
\begin{split}
\delta S_M&=\int_{\mathcal{V}} \diff^n x \ \delta\left(\sqrt{-g} \ \lag(\phi,\phi_{,\mu};g_{\mu\nu})\right)\\
&=\int_{\mathcal{V}} \diff^n x \ \left(\sqrt{-g} \ \frac{\partial \lag}{\partial g^{\mu \nu}} \ \delta g^{\mu \nu} + \lag \ \delta\sqrt{-g} \right)\\
&=\int_{\mathcal{V}} \diff^n x \left(\frac{\partial \lag}{\partial g^{\mu \nu}} - \frac{1}{2} \ \lag \ g_{\mu \nu}\right) \ \sqrt{-g} \ \delta g^{\mu \nu}.
\end{split}
\end{equation}

We define the stress-energy tensor as:
\begin{equation}\label{B.35}
T_{\mu \nu}= \lag \ g_{\mu \nu} - 2 \ \frac{\partial \lag}{\partial g^{\mu \nu}}.
\end{equation}

This redefinition yields:
\begin{equation}\label{B.36}
\delta S_M=- \frac{1}{2} \ \int_{\mathcal{V}} \diff^n x \ T_{\mu \nu} \ \delta g^{\mu \nu} \ \sqrt{-g}.
\end{equation}

Since the variation of the nondynamical term is zero, as we claimed when we introduced it, putting together all the results the variation of the whole variation of the action functional of General Relativity may be written as:
\begin{equation}\label{B.37}
\delta S_{GR}=\delta \left(S_{G}+S_{M}\right)=\frac{1}{2} \ \int_{\mathcal{V}} \diff^n x \ \left(\frac{1}{8 \pi} \ G_{\mu \nu}-T_{\mu \nu}\right) \sqrt{-g} \ \delta g^{\mu \nu}=0.
\end{equation}

Because the variation $\delta g^{\mu \nu}$ is arbitrary within the region $\mathcal{V}$, the Einstein field equations follows from this variational principle:
\begin{equation}\label{B.38}
G_{\mu \nu}=8 \pi T_{\mu \nu}.
\end{equation}

\section{The Einstein-Maxwell action}

Adding the gravitational action (Einstein-Hilbert+Boundary+Nondynamical) to the Maxwell action we end up with the so-called Einstein-Maxwell action:
\begin{equation}\label{B.39}
S_{EM}\left[g_{\mu \nu},A_{\mu}\right]=S_{G}\left[g\right] - \frac{1}{4} \ \int_{\mathcal{V}} \diff^4 x \ F^2 \ \sqrt{-g}.
\end{equation}

The electromagnetic tensor $F_{\mu \nu}$ is a combination of the electric and magnetic fields into a covariant antisymmetric tensor. In terms of the electromagnetic potential $A^{\mu}$ it is written as:
\begin{equation}\label{B.40}
F_{\mu \nu}=\nabla_{\mu} A_{\nu} - \nabla_{\nu} A_{\mu}.
\end{equation}

First of all, we calculate the variation of that action under an arbitrary variation of the metric. The first term produce the Einstein tensor (equation (\ref{B.31})). We have to take into account the variation of the Maxwell term:
\begin{equation}\label{B.41}
\begin{split}
\delta S_{\mathrm{M}}&=-\frac{1}{4} \ \int_{\mathcal{V}} \diff^4 x \ \left(\delta F^2 \sqrt{-g} + F^2 \delta \sqrt{-g}\right)\\
&=-\frac{1}{4} \ \int_{\mathcal{V}} \diff^4 x \ \left(\delta(F_{\mu \nu} F_{\sigma \rho} g^{\mu \sigma} g^{\nu \rho}) \sqrt{-g} - \frac{1}{2} \ F^2  \sqrt{-g} \ g_{\mu \nu} \delta g^{\mu \nu}\right)\\
&=-\frac{1}{4} \ \int_{\mathcal{V}} \diff^4 x \ \left(2 F_{\mu \sigma} F^{\ \sigma}_{\nu}   - \frac{1}{2} \ F^2  \ g_{\mu \nu} \right) \sqrt{-g} \ \delta g^{\mu \nu}\\
&=-\frac{1}{2} \ \int_{\mathcal{V}} \diff^4 x \ T_{\mu \nu} \sqrt{-g} \ \delta g^{\mu \nu}.\\
\end{split}
\end{equation}

We have defined the energy-momentum tensor of the electromagnetic vector field $T_{\mu \nu}$ as
\begin{equation}\label{B.42}
T_{\mu \nu}=F_{\mu \sigma} F^{\ \sigma}_{\nu} - \frac{1}{4} \ F^2  \ g_{\mu \nu}.
\end{equation}

The resulting field equations would be:
\begin{equation}\label{B.43}
G_{\mu \nu} - 8 \pi \left(F_{\mu \sigma} F^{\ \sigma}_{\nu} - \frac{1}{4} \ F^2  \ g_{\mu \nu}\right)=0.
\end{equation}

By taking the covariant derivative of this last equation and using the \textit{contracted Bianchi identity}, $\nabla_{\mu} G^{\mu \nu}=0$, we find out:
\begin{equation}\label{B.44}
F_{\mu\sigma} \nabla_\mu F^{\mu\sigma} - \frac{3}{2} \ F^{\mu\sigma} \nabla_{\left[\mu\right.}F_{\left.\sigma\nu\right]}=0.
\end{equation}

As the affine connection is symmetric, we will have:
\begin{equation}\label{B.45}
\nabla_\mu F^{\mu\nu}=0.
\end{equation}

We can see that Einstein field equations imply generically Maxwell equations.

\section{The nondynamical term}

In order to obtain the field equations we have made no mention to this term, because it's a constant and its variation with respect $g_{\mu \nu}$ gives zero. Its role is to regularize the numerical value of the gravitational action.

\

Consider a solution of the vacuum field equations. In this case $R=0$ and the action (omitting the present term) turns to be:
\begin{equation}\label{B.46}
S_{G}\left[g\right]=\frac{1}{8 \pi} \ \oint_{\partial \mathcal{V}} \diff^{3}y\  n^2 K \sqrt{\left|h\right|}.
\end{equation}

We are going to evaluate the integral in the case of flat spacetime. Consider the integration region bounded by a three cylinder at $r=R$ and two $t=const.$ hypersurfaces. In the hypersurfaces of constant time $K=0$, and in the cylinder we have a induced metric:
\begin{equation}\label{B.47}
\diff s^2=- \diff t^2 + R^2 \diff \Omega^2.
\end{equation}

A straightforward calculation give us $\left|h\right|^{1/2}=R^2 \sin{\theta}$. The unit normal vector is $n_\alpha=\partial_\alpha r$ and so $K=n^{\alpha}_{\ ;\alpha}=2/R$. Then, we can evaluate the integral:
\begin{equation}\label{B.48}
\oint_{\partial \mathcal{V}} \diff^{3}y\  n^2 K \ \sqrt{\left|h\right|}=8\pi R \left(t_2-t_1\right).
\end{equation}

If we take the limit $R\rightarrow\infty$, i.e. when the boundary is pushed to infinity, this integral diverges. For asymptotically-flat spacetimes this problem does not go away, and therefore the gravitational action is not well-defined.

\

The term $S_0$ is introduced to be equal to the gravitational action of flat spacetime. This way the difference $S_B-S_0$ converges in the limit and therefore is well-defined for asymptotically-flat spacetimes. We choose then $K_0$ to be the extrinsic curvature of the boundary embedded in flat spacetime:
\begin{equation}\label{B.49}
S_{0}=\frac{1}{8 \pi} \ \oint_{\partial \mathcal{V}} \diff^{3}y\  n^2 K_0 \sqrt{\left|h\right|}.
\end{equation}
\chapter{Foliations, masses and momentum}\label{Appendix C}

This chapter is devoted to the more subtle approach to General Relativity provided by the Hamiltonian formulation of Classical Mechanics.

\

This treatment involves a decomposition of spacetime into space and time separately. In the first part we study this 3+1 decompositon, the foliation of our spacetime region $\mathcal{V}$ by spacelike hypersurfaces. Later we pay special attention to the foliation of the boundary, in order to write the action functional in terms of this decomposition. Then we construct the gravitational Hamiltonian, which inherits the boundary terms of the action.

\

Finally we study the connection between this Hamiltonian and the mass and angular momentum of asymptotically-flat spacetime.

\section{The 3+1 decomposition}

The Hamiltonian $H\left[p,q\right]$ is a functional of $q$,the field configuration, and $p$, its canonical conjugate momentum, on a spacelike hypersurface $\Sigma$. If we want to express our gravitational action in terms of this Hamiltonian we need to \textit{foliate} our integration region $\mathcal{V}$ with a family of such spacelike hypersurfaces, corresponding one for each instant of time. This is what is known as the 3+1 decomposition.

\

In order to perform this task, we introduce a scalar single-valued field $t(x^\alpha)$ such that the relation $t=const.$ describes a hypersurface $\Sigma_t$, with normal vector $n_{\alpha}\propto\partial_\alpha t$. We can introduce also a congruence of geodesics $\gamma$ intersecting the hypersurfaces, with $t$ as a parameter and the vector $t^\alpha$ tangent to it. On each $\Sigma_t$ we prepare coordinates $y^{a}$ which are constants on a member of the congruence, $\gamma_p$, in the way $y^{a}(P)=y^{a}(P')=y^{a}(P'')$ where the $'$ denotes that we are talking about a point in a different hypersurface $\Sigma_{t'}$ or $\Sigma_{t''}$.

\

Thus, this construction defines a coordinate system in $\mathcal{V}$, $(t,y^{a})$. This system is related to the original $x^\alpha$ by:
\begin{equation}\label{C.1}
t^\alpha=\left(\frac{\partial x^\alpha}{\partial t}\right)_{y^a}.
\end{equation}

We now define the tangent vectors and the unit normal to the hypersurfaces $\Sigma_t$ as:
\begin{equation}\label{C.2}
e^{\alpha}_a=\left(\frac{\partial x^\alpha}{\partial y^{a}}\right)_t,
\end{equation}
\begin{equation}\label{C.3}
n_\alpha=-N\partial_\alpha t,
\end{equation}
where the function $N$, the lapse, normalize properly the vector field. Since we did not construct the congruence of geodesics to be orthogonal to $\Sigma_t$ the vector $t^{\alpha}$ may be decomposed as:
\begin{equation}\label{C.4}
t^\alpha=N n^\alpha+ N^a e^{\alpha}_a,
\end{equation}
where the vector $N^{a}$ is what we call shift. Now it is possible to express the line in terms of this new coordinates. First, we have:
\begin{equation}\label{C.5}
\begin{split}
\diff x^\alpha&=t^\alpha \diff t + e^{\alpha}_a \diff y^a\\
&=N \diff t \ n^\alpha+  \left(N^a \diff t +  \diff y^a\right) e^{\alpha}_a.\\
\end{split}
\end{equation}

Now, using the definition of the induced metric on the hypersurface:
\begin{equation}\label{C.6}
h_{ab}=g_{\alpha \beta} \ e^{\alpha}_a e^{\beta}_b,
\end{equation}
we end up with:
\begin{equation}\label{C.7}
\begin{split}
\diff s^2&=g_{\alpha \beta} \ \diff x^\alpha \diff x^\beta\\
&=-N^2 \diff t^2 + h_{ab} \left(N^a \diff t +  \diff y^a\right) \left(N^b \diff t +  \diff y^b\right).
\end{split}
\end{equation}

We can also calculate the determinant of the metric. We have that $g^{tt}=g/h$. Then, we find out:
\begin{equation}\label{C.8}
g^{tt}=g^{\alpha \beta} \ t_{, \alpha} t_{,\beta}=N^{-2} \ g^{\alpha \beta} \ n_\alpha n_\beta=-N^{-2}.
\end{equation}

We finally obtain the desired relation:
\begin{equation}\label{C.9}
\sqrt{-g}=N\sqrt{h}.
\end{equation}

All these results will be very useful in later sections, when we foliate our region of integration to express the gravitational action in terms of the Hamiltonian.

\section{Foliation of the boundary}

We consider our region of spacetime $\mathcal{V}$ to be foliated by spacelike hypersurfaces $\Sigma_t$ such as the introduced in the previous section. Those 3-surfaces are bounded by a set of 2-surfaces $S_t$. So, our region $\mathcal{V}$ is itself bounded by two spacelike hypersurfaces $\Sigma_{t_1}$ and $\Sigma_{t_2}$, and a timelike hypersurface $\mathcal{B}$, the union of all the $S_t$.

\

Before trying to obtain a expression for the gravitational Hamiltonian we must consider some properties of the foliation of the timelike boundary $\mathcal{B}$. The closed 2-surfaces $S_t$ are the boundaries of the spacelike hypersurfaces $\Sigma_t$ and are given by a set of equations of the form $\Phi(y^a)=0$ or by parametric relations $y^a(\theta^A)$ where $\theta^A$ are coordinates on $S_t$.

\

If we denote the unit normal to $S_t$ as $r_a$, and define the tangent vectors as:
\begin{equation}\label{C.10}
e^{a}_A=\frac{\partial y^a}{\partial \theta^A}.
\end{equation}

We can introduce also associated 4-vectors to these quantities:
\begin{equation}\label{C.11}
r^\alpha=r^a e^\alpha_a,
\end{equation}
\begin{equation}\label{C.12}
e^{\alpha}_A=e^{\alpha}_a e^{a}_A=\left(\frac{\partial x^\alpha}{\partial \theta^A}\right)_t.
\end{equation}

The induced metric on $S_t$ would be given by:
\begin{equation}\label{C.13}
\begin{split}
\sigma_{AB}&=h_{ab} \ e^a_A e^b_B\\
&=\left(g_{\alpha \beta} \ e^\alpha_a e^\beta_b\right) \ e^a_A e^b_B\\
&=g_{\alpha \beta} \ e^\alpha_A e^\beta_B.
\end{split}
\end{equation}

From this we can obtain easily the three-dimensional completeness relation and then express the analogue for four dimensions in terms of the inverse induced metric $\sigma^{AB}$.

\

The extrinsic curvature of the 2-surfaces $S_t$ can be computed as follows:
\begin{equation}\label{C.14}
\begin{split}
k_{AB}&=r_{a|b} \ e^a_A e^b_B\\
&=r_{a;b} \ e^\alpha_A e^\beta_B.\\
\end{split}
\end{equation}

We now want to relate the coordinates $\theta^A$ on one surface $S_t$ to others in a different surface $S_{t'}$. Consider a congruence of geodesics $\beta$ on $\mathcal{B}$ intersecting $S_t$ orthogonally. Under this assumption, $n^\alpha$ would be their tangent vectors. If the coordinate $\theta^A$ does not vary along a curve of the congruence, i.e. if a geodesic $\beta_P$ intersects one surface at the point $P$ and another surface at $P'$ then the same coordinate would label that points, then $t$ is a suitable parameter on the curves. For all these we can write:
\begin{equation}\label{C.15}
n^\alpha=N^{-1} \ \left(\frac{\partial x^\alpha}{\partial t}\right)_{\theta^A}.
\end{equation}

The factor $N^{-1}$ comes from equation (\ref{2.3}), to yield the normalization condition $n_\alpha n^\alpha=-1$. This way we ensure that $n^{\alpha}$ and $e^{\alpha}_A$ are everywhere orthonormal.

\

Now we place coordinates $z^i$ on $\mathcal{B}$. Since it is foliated by the 2-surfaces $S_t$ the unit normal $r^{\alpha}$ is the same. We can introduce the tangent vectors to this timelike hypersurface:
\begin{equation}\label{C.16}
e^{\alpha}_{i}=\frac{\partial x^\alpha}{\partial z^i}.
\end{equation}

Then, the induced metric on $\mathcal{B}$ is given by:
\begin{equation}\label{C.17}
\gamma_{ij}=g_{\alpha \beta} \ e^{\alpha}_i e^{\beta}_j.
\end{equation}

We choose our coordinates to be $z^i=(t,\theta^A)$. Infinitesimal displacement in $\mathcal{B}$ is then written as:
\begin{equation}\label{C.18}
\begin{split}
\diff x^{\alpha}&=\left(\frac{\partial x^{i}}{\partial t}\right)_{\theta^A} \ \diff t + \left(\frac{\partial x^{i}}{\partial \theta^A}\right)_{t} \ \diff \theta^A\\
&=N n^\alpha \ \diff t + e^\alpha_A \ \diff \theta^A.
\end{split}
\end{equation}

The line element is now easily computed, taking into account the orthogonality relation $n_\alpha e^\alpha_A=0$:
\begin{equation}\label{C.19}
\begin{split}
\diff s^2_{\mathcal{B}}&=g_{\alpha \beta} \ \diff x^\alpha \diff x^\beta\\
&=\left(g_{\alpha \beta} \ n^\alpha n^\beta\right) N^2 \ \diff t^2 + \left(g_{\alpha \beta} \ e^\alpha_A e^\beta_B\right) \ \diff \theta^A \diff \theta^B\\
&=-N^2 \ \diff t^2 + \sigma_{AB} \ \diff \theta^A \diff \theta^B=\gamma_{ij} \ \diff z^i \diff z^j.\\
\end{split}
\end{equation}

The same way we obtained equation (\ref{2.9}), we can now establish the relation:
\begin{equation}\label{C.20}
\sqrt{-\gamma}=N\sqrt{\sigma}.
\end{equation}

For the last, we point out the form that the extrinsic curvature of $\mathcal{B}$ embedded in the four dimensional spacetime would take:
\begin{equation}\label{C.21}
\mathcal{H}_{ij}=r_{\alpha;\beta} \ e^{\alpha}_i e^\beta_j,
\end{equation}
because, as we said previously, the unit normal to $S_t$ is also normal to $\mathcal{B}$.

\section{Decomposition of the action}

The gravitational action introduced in the previous chapters (equations (\ref{1.5}), (\ref{1.27}) and ()) may be re-expressed in terms of the 3+1 decomposition. We are going to start from:
\begin{equation}\label{C.22}
\left(16 \pi\right) S_{G}=\int_{\mathcal{V}} \diff^4 x \ \sqrt{-g} \ R + 2 \oint_{\partial \mathcal{V}} \diff^3 y \ n^2 K \sqrt{h}.
\end{equation}

Here $\mathcal{V}$ is our four-dimensional integration region, $\partial \mathcal{V}$ its boundary, $y^\alpha$ coordinates on that boundary, $h_{ab}$ the induced metric, $K_{ab}$ the extrinsic curvature and $n^\alpha$ the unit (outward) normal to $\partial \mathcal{V}$. He have omitted the nondynamical term $S_0$, but we will introduce it at the end of the calculations.

\

First of all, the boundary $\partial \mathcal{V}$ is the union of three different hypersurfaces, two spacelike $\Sigma_{t_1}$, $\Sigma_{t_2}$ and one timelike $\mathcal{B}$. As the unit normal vector to the spacelike hypersurfaces are future directed we must add an additional minus sign to the terms involving $\Sigma_{t_1}$ (because $t_1<t_2$).

\

From now and all over the chapter the quantities $n^\alpha$, $y^\alpha$, $h_{ab}$ and $K_{ab}$ are referred to the hypersurfaces $\Sigma_t$. Therefore, the gravitational action takes the form:
\begin{equation}\label{C.23}
\begin{split}
\left(16 \pi\right) S_{G}=&\int_{\mathcal{V}} \diff^4 x \ \sqrt{-g} \ R + 2 \int_{\Sigma_{t_1}} \diff^3 y \ K \sqrt{h}\\
&-2 \int_{\Sigma_{t_2}} \diff^3 y \ K \sqrt{h}+2 \int_{\mathcal{B}} \diff^3 z \ \mathcal{H} \sqrt{- \gamma}.\\
\end{split}
\end{equation}

The region $\mathcal{V}$ is foliated by spacelike hypersurfaces $\Sigma_t$. On those the Ricci is given by \cite{p.math}:
\begin{equation}\label{C.24}
R=\vphantom{a}^3R+K^{ab}K_{ab}-K^2-2\left(n^\alpha_{;\beta}n^\beta-n^\alpha n^\beta_{;\beta}\right)_{;\alpha}.
\end{equation}

There $\vphantom{a}^3R$ is the Ricci scalar given by the induced metric. Using (\ref{2.9}) and the Stoke's theorem the volume integral reduces to:
\begin{equation}\label{C.25}
\begin{split}
\int_{\mathcal{V}} \diff^4 x \ \sqrt{-g} \ R=&\int_{t_1}^{t_2} \diff t \int_{\Sigma_t} \diff^3 y \ \left(\vphantom{a}^3R+K^{ab}K_{ab}-K^2\right) N \sqrt{h}\\
&-2\oint_{\partial \mathcal{V}} \diff \Sigma_\alpha \ \left(n^\alpha_{;\beta}n^\beta-n^\alpha n^\beta_{;\beta}\right).
\end{split}
\end{equation}

Now we can split the last surface integral in three contributions. In the case of $\Sigma_{t_1}$ we have:
\begin{equation}\label{C.26}
\begin{split}
-2\int_{\Sigma_{t_1}} \diff \Sigma_\alpha \ \left(n^\alpha_{;\beta}n^\beta-n^\alpha n^\beta_{;\beta}\right)&=2\int_{\Sigma_{t_1}} \diff^3 y \ \left(n^\alpha_{;\beta}n^\beta-n^\alpha n^\beta_{;\beta}\right) n^\alpha \ \sqrt{h}\\
&=-2\int_{\Sigma_{t_1}} \diff^3 y \ n^\beta_{;\beta} \ \sqrt{h}=-2\int_{\Sigma_{t_1}} \diff^3 y \ K\ \sqrt{h}.
\end{split}
\end{equation}

In a similar fashion, the corresponding integral over $\Sigma_{t_2}$ is:
\begin{equation}\label{C.27}
\begin{split}
-2\int_{\Sigma_{t_2}} \diff \Sigma_\alpha \ \left(n^\alpha_{;\beta}n^\beta-n^\alpha n^\beta_{;\beta}\right)&=-2\int_{\Sigma_{t_2}} \diff^3 y \ \left(n^\alpha_{;\beta}n^\beta-n^\alpha n^\beta_{;\beta}\right) n^\alpha \ \sqrt{h}\\
&=2\int_{\Sigma_{t_2}} \diff^3 y \ n^\beta_{;\beta} \ \sqrt{h}=2\int_{\Sigma_{t_2}} \diff^3 y \ K\ \sqrt{h}.
\end{split}
\end{equation}

Introducing this into equation (\ref{2.23}) cancels out the surface integrals over 
$\Sigma_{t_1}$ and $\Sigma_{t_2}$. Using that $n^\alpha r_\alpha=0$ and integrating by parts, the remaining contribution of $\mathcal{B}$ results:
\begin{equation}\label{C.28}
\begin{split}
-2\int_{\mathcal{B}} \diff \Sigma_\alpha \ \left(n^\alpha_{;\beta}n^\beta-n^\alpha n^\beta_{;\beta}\right)&=-2\int_{\mathcal{B}} \diff^3 z \ \left(n^\alpha_{;\beta}n^\beta-n^\alpha n^\beta_{;\beta}\right) r^\alpha \ \sqrt{-\gamma}\\
&=-2\int_{\mathcal{B}} \diff^3 z \ n^\alpha_{;\beta}n^\beta r_\alpha \ \sqrt{-\gamma}=2\int_{\mathcal{B}} \diff^3 z \ n^\alpha n^\beta r_{\alpha;\beta} \ \sqrt{-\gamma}.
\end{split}
\end{equation}

Putting all together the action is now:
\begin{equation}\label{C.29}
\begin{split}
\left(16 \pi\right) S_{G}=&\int_{t_1}^{t_2} \diff t \int_{\Sigma_t} \diff^3 y \ \left(\vphantom{a}^3R+K^{ab}K_{ab}-K^2\right) N \sqrt{h}\\
&+2\int_{\mathcal{B}} \diff^3 z \ \left( \mathcal{H} + n^\alpha n^\beta r_{\alpha;\beta}\right) \ \sqrt{-\gamma}.
\end{split}
\end{equation}

Until now we only have used the foliation of the region $\mathcal{B}$. At this point we are going to apply the results of the previous section and use the fact that $\mathcal{B}$ is also foliated by the set of closed 2-surfaces $S_t$. First, we can find a more suitable form for $\mathcal{H}$ in terms of the metric and the unit normal:
\begin{equation}\label{C.30}
\begin{split}
\mathcal{H}&=\gamma^{ij} \mathcal{H}_{ij}\\
&=\gamma^{ij} \ r_{\alpha;\beta} \ e^{\alpha}_i e^\beta_j\\
&=r_{\alpha;\beta} \left(g^{\alpha \beta} - r^\alpha r^\beta\right).\\
\end{split}
\end{equation}

To obtain this we have used (\ref{C.21}) and the completeness relation deduced from (\ref{C.17}). Taking this into account, the second integrand becomes:
\begin{equation}\label{C.31}
\begin{split}
\mathcal{H} + n^\alpha n^\beta r_{\alpha;\beta}&=r_{\alpha;\beta} \left(g^{\alpha \beta} - r^\alpha r^\beta + n^\alpha n^\beta\right)\\
&=r_{\alpha;\beta} \ \sigma^{AB} \ e^\alpha_A e^\beta_B\\
&=\sigma^{AB} \ k_{AB}=k.
\end{split}
\end{equation}

To obtain this the inverse relation of (\ref{C.13}) was used. Finally, we obtain a expression for the gravitational action completely expressed in terms of a foliation. Re-inserting the corresponding term to $S_0$, we end up with:
\begin{equation}\label{C.32}
\begin{split}
\left(16 \pi\right) S_{G}=&\int_{t_1}^{t_2} \diff t \left[ \int_{\Sigma_t} \diff^3 y \ \left(\vphantom{a}^3R+K^{ab}K_{ab}-K^2\right) N \sqrt{h} \right.\\
&\left.+ 2\oint_{S_t} \diff^2 \theta \ \left(k-k_0\right) \ N \sqrt{\sigma}\right].
\end{split}
\end{equation}

There, $k_0$ is the extrinsic curvature of $S_t$ when embedded in flat spacetime. This term prevents the integral to diverge in the limit $S_t\rightarrow \infty$, in a similar way to what we discussed in the previous chapter. Similarly, if we include a matter action term, it must be also subjected to the 3+1 decomposition. This step would be straightforward, and we are going to omit it.

\section{The gravitational Hamiltonian}

In the case of general relativity the Hamiltonian is a functional of the induced metric on the hypersurfaces $\Sigma_t$ which foliates the region $\mathcal{V}$ and its conjugate momentum. In order to calculate this momentum, we must first compute what we call the time derivative of the induced metric, defined as:
\begin{equation}\label{C.33}
\dot{h}_{ab}\equiv  \ \pounds_t h_{ab},
\end{equation}
where the timelike vector $t^\alpha$ is given by equation (\ref{2.4}). Recalling the definition of induced metric and the general relation $\pounds_t e^\alpha_a=0$ we obtain:
\begin{equation}\label{C.34}
\begin{split}
\dot{h}_{ab}&=\pounds_t \ h_{ab}\\
&=\pounds_t \left(g_{\alpha \beta} \ e^\alpha_a e^\beta_b\right)\\
&=\pounds_t \ g_{\alpha \beta} \ e^\alpha_a e^\beta_b\\
&=\left(\cancelto{0}{g_{\alpha \beta; \mu} t^\mu} + t_{\alpha;\beta} + t_{\beta;\alpha}\right) \ e^\alpha_a e^\beta_b\\
&=\left[\left(N n_\alpha+ N_\alpha\right)_{;\beta} + \left(N n_\beta+ N_\beta\right)_{;\alpha}\right] \ e^\alpha_a e^\beta_b.
\end{split}
\end{equation}

Expanding the expression and using the definitions of extrinsic curvature and intrinsic differentiation \cite{p.math} we obtain:
\begin{equation}\label{C.35}
\dot{h}_{ab}=2NK_{ab} + N_{a|b}+ N_{b|a}.
\end{equation}

Now we can write the extrinsic curvature in terms of the time derivative of the induced metric and derivatives of the shift. The action depends on $\dot{h}_{ab}$ through the extrinsic curvature, and does not depend on time derivatives of lapse and shift, because those only specify the foliation of $\mathcal{V}$.

\

The conjugate momentum is defined as:
\begin{equation}\label{C.36}
p^{ab}\equiv\frac{\partial}{\partial \dot{h}_{ab}} \ \left(\sqrt{-g} \ \lag_{G}\right)=\frac{\partial K_{mn}}{\partial \dot{h}_{ab}} \ \frac{\partial}{\partial K_{mn}} \ \left(\sqrt{-g} \ \lag_{G}\right).
\end{equation}

The quantity $\lag_G$ is the volume part of the gravitational Lagrangian. We may write it as:
-5
\begin{equation}\label{C.37}
\left(16 \pi\right) \sqrt{-g} \ \lag_G = \left[\vphantom{a}^3 R + \left(h^{ac}h^{bd}-h^{ab}h^{cd}\right) K_{ab} K_{cd} \right] N \ \sqrt{h}.
\end{equation}

Evaluating the two partial derivatives give us an expression of the canonical momentum in terms of the extrinsic curvature:
\begin{equation}\label{C.38}
\left(16 \pi\right) p^{ab}=\sqrt{h} \ \left(K^{ab} - K h^{ab}\right).
\end{equation}

The volume part of the Hamiltonian is then given by:
\begin{equation}\label{C.39}
\begin{split}
\mathcal{H}&\equiv p^{ab} \dot{h}_{ab} - \sqrt{-g} \ \lag_G\\
&=\left(16 \pi\right)^{-1} \sqrt{h} \left[\left(K^{ab} - K h^{ab}\right)\left(2NK_{ab}+N_{a|b}+N_{b|a}\right)\right.\\
&\phantom{=}\left.-\left(\vphantom{a}^3R + K^{ab}K_{ab}-K^2\right)N\right]\\
&=\left(16 \pi\right)^{-1} \sqrt{h} \left[N \left( K^{ab}K_{ab}-K^2-\vphantom{a}^3R \right)\right.\\
&\phantom{=} \left.+ 2\left(K^{ab}-K h^{ab}\right) N_{a|b}\right].
\end{split}
\end{equation}

The complete Hamiltonian is given by integrating the volume part over the foliation and adding the boundary terms. Integrating by parts the term in $N_{a|b}$ we obtain:
\begin{equation}\label{C.40}
\begin{split}
\left(16\pi\right)H_{G}&=\int_{\sigma_t} \diff^3 y \ \left(16\pi\right)\mathcal{H}_G - 2 \oint_{S_t} \diff^2 \theta \ \left(k-k_0\right) N \ \sqrt{\sigma}\\
&=\int_{\sigma_t} \diff^3 y \ \left[N \left( K^{ab}K_{ab}-K^2-\vphantom{a}^3R \right)- 2\left(K^{ab}-K h^{ab}\right)_{|b} N_{a}\right] \ \sqrt{h}\\
&\phantom{=} - 2 \oint_{S_t} \diff^2 \theta \ \left[\left(k-k_0\right) N-\left(K^{ab}-K h^{ab}\right)N_{a} r_b\right] \ \sqrt{\sigma}.\\
\end{split}
\end{equation}

In the same way we obtained the field equations varying the action functional, if we re-write it in terms of the gravitational Hamiltonian and calculate its variation we obtain a set of canonical equations and two constraints \cite{p.math}. Those constitute the vacuum Einstein field equations in its Hamilton form.

\

If the fields satisfy this vacuum field equations then the gravitational Hamiltonian must be subject to the constraints:
\begin{equation}\label{C.41}
\begin{split}
&\vphantom{a}^3 R+K^2-K^{ab}K_{ab}=0,\\
&\left(K^{ab}-Kh^{ab}\right)_{|b}=0.\\
\end{split}
\end{equation}

Thus, just the boundary terms contributes the gravitational Hamiltonian:
\begin{equation}\label{C.42}
H_{G}^{sol}=- \frac{1}{8 \pi} \oint_{S_t} \diff^2 \theta \ \left[\left(k-k_0\right) N-\left(K^{ab}-K h^{ab}\right)N_{a} r_b\right] \ \sqrt{\sigma}.
\end{equation}

In the next section we will study the value of this solution for asymptotically-flat spactetimes.

\section{ADM mass and angular momentum}

The vacuum state Hamiltonian depends on the asymptotic behaviour of $\Sigma_t$. Choosing our spacetime to be asymptotically-flat, we demand that $\Sigma_t$ asymptotically coincide with a constant time surface in Minkowski spacetime. On this portion of $\Sigma_t$ the coordinates are related to the Minkowski coordinates $x^\alpha \rightarrow \left(\bar{t},\bar{x},\bar{y},\bar{z}\right)$. The vector $u^\alpha=\frac{\partial x^\alpha}{\partial \bar{t}}$ is orthogonal to the surfaces $\bar{t}=const.$ and therefore it must asymptotically coincide with $n^alpha$. This gives us an asymptotic relation for the flow vector:
\begin{equation}\label{C.43}
t^\alpha \rightarrow N \left(\frac{\partial x^\alpha}{\partial \bar{t}}\right)_{y^\alpha}+ N^a \left(\frac{\partial x^\alpha}{\partial y^\alpha}\right)_{\bar{t}}.
\end{equation}

We define the gravitational mass (of an asymptotically-flat spacetime) to be the limit of the solution-valued Hamiltonian when $S_t$ is a sphere at spatial infinity, with a choice of lapse and shift $N=1$ and $N^a=0$:
\begin{equation}\label{C.44}
M_{ADM}=- \frac{1}{8 \pi} \lim_{S_t \rightarrow\infty} \oint_{S_t} \diff^2 \theta \ \left(k-k_0\right)  \ \sqrt{\sigma}.
\end{equation}

The quantity defined by this equation is called \textit{ADM} mass of asymptotically-flat spacetime, after the work of Arnowitt, Deser and Misner. This particular choice of lapse and shift leaves a flow vector $t^\alpha \rightarrow \left(\frac{\partial x^\alpha}{\partial \bar{t}}\right)_{y^\alpha}$, which generates an asymptotic time traslation. This relation conects the total energy of the system and time traslations. In the same mood, it is possible to obtain a relation between energy and asymptotic rotations which leads us to a definition of angular momentum. In order to obtain the generator of rotations, the flow vector must be $t^\alpha \rightarrow \frac{\partial x^\alpha}{\partial \phi}$, where $\phi$ is the rotation angle define in the asymptotic Minkowski frame. This would correspond whit a choice of lapse and shift $N=0$ and $N^a=\frac{\partial y^a}{\partial \phi}$.

\

As we define the gravitational mass, the angular momentum of asymptotically flat spacetime is (minus) the limit of the solution-valued Hamiltonian when $S_t$ is a sphere at spatial infinity with the previous choice of lapse and shift:
\begin{equation}\label{C.45}
J= -\frac{1}{8 \pi} \lim_{S_t \rightarrow\infty} \oint_{S_t} \diff^2 \theta \ \left(K^{ab}-K h^{ab}\right) \ \frac{\partial y^a}{\partial \phi} \ r_b \ \sqrt{\sigma}.
\end{equation}

The extra minus sign is added to recover the right-hand rule for angular momentum.
\chapter{The Gibbons-Hawking-York term and AdS Black Holes}\label{Appendix D}

\section{Review of the GHY term}

The boundary term added to the gravitational action (\ref{1.27}) is known as the Gibbons–Hawking –York boundary term. The term is needed when the underlying spacetime manifold is bounded, so the Einstein-Hilbert variational principle is well defined. This is because the gravitational Lagrangian density contains second derivatives of the metric.

\

When this term is added, the boundary integral that appears when varying the Einstein-Hilbert action vanishes without imposing any restriction on the normal derivatives of the metric. If this is the only role we want this term plays, then we have freedom to adding an arbitrary function $F(g_{\mu \nu},n_\mu,h^{a b}\partial_b)$ because when we take the variation it will vanish under the assumption $\delta g^{\mu \nu}=0$ on $\partial \mathcal{V}$.

\

Before this term was proposed, Einstein constructed the Lagrangian using the object:
\begin{equation}\label{D.1}
H=g^{\mu \nu} \left[\Gamma^{\beta}_{\ \mu \alpha} \Gamma^{\alpha}_{\ \beta \nu} - \Gamma^{\alpha}_{\ \beta \alpha} \Gamma^{\beta}_{\ \mu \nu}\right],
\end{equation}

instead of $R$. With this quantity you get the so-called gamma-gamma Lagrangian of General Relativity, which is first order in the metric and so it is no need of fixing any derivatives of the metric at the boundaries. We can check that it only differs from $R$ by a total derivative. First, we must write $R$ explicitly, in terms of the connection:
\begin{equation}\label{D.2}
R=g^{\mu \nu} \left(\Gamma^{\alpha}_{\ \mu \nu,\alpha} - \Gamma^{\alpha}_{\ \mu \alpha,\nu} + \Gamma^{\alpha}_{\ \beta \alpha} \Gamma^{\beta}_{\ \mu \nu} - \Gamma^{\alpha}_{\ \beta \nu} \Gamma^{\beta}_{\ \mu \alpha}\right).
\end{equation}

Now, we compute the difference:
\begin{equation}\label{D.3}
\begin{split}
R-H&=g^{\mu\nu} \left(\Gamma^{\alpha}_{\ \mu \nu,\alpha} - \Gamma^{\alpha}_{\ \mu \alpha,\nu}\right)\\
&=g^{\mu\nu} \Gamma^{\alpha}_{\ \mu \nu,\alpha} - g^{\mu\nu} \Gamma^{\alpha}_{\ \mu \alpha,\nu}\\
&=g^{\mu\nu} \Gamma^{\alpha}_{\ \mu \nu,\alpha} - g^{\mu\alpha} \Gamma^{\nu}_{\ \mu \nu,\alpha}\\
&=\nabla_{\alpha}\left(g^{\mu\nu} \Gamma^{\alpha}_{\ \mu \nu} - g^{\mu\alpha} \Gamma^{\nu}_{\ \mu \nu}\right).\\
\end{split}
\end{equation}

Thus, this gamma-gamma action differs from the Einstein-Hilbert plus boundary (GHY) term by a surface integral:
\begin{equation}\label{D.4}
\begin{split}
\int_{\mathcal{V}} \diff^4 x \sqrt{-g} \ H&= \int_{\mathcal{V}} \diff^4 x \sqrt{-g} \ R - \int_{\mathcal{V}} \diff^4 x \sqrt{-g} \ \nabla_{\alpha}\left(g^{\mu\nu} \Gamma^{\alpha}_{\ \mu \nu} - g^{\mu\alpha} \Gamma^{\nu}_{\ \mu \nu}\right)\\
&= \int_{\mathcal{V}} \diff^4 x \sqrt{-g} \ R + \int_{\mathcal{V}} \diff^3 y \sqrt{h} \ n^2 n_{\alpha}\left(g^{\mu\nu} \Gamma^{\alpha}_{\ \mu \nu} - g^{\mu\alpha} \Gamma^{\nu}_{\ \mu \nu}\right).\\
\end{split}
\end{equation}

\chapter{Group theory}\label{Appendix E}

\setcounter{theorem}{0}

To a certain time-space transformation of a system we can associate a mathematical transformation such as:
\begin{equation}
\begin{split}
&\ket{\psi}\longrightarrow\mathcal{F}\left(\ket{\psi}\right)=\ket{\psi'}\\
&A\longrightarrow\mathcal{F}\left(A\right)=A'.\\
\end{split}
\end{equation}

If this transformation is a symmetry, we will have:

\begin{enumerate}
	\item The spectrum of observables remains invariant:
	\begin{equation}
	A\ket{\varphi_n}=a_n\ket{\varphi_n} \ \Rightarrow \ A'\ket{\varphi_n'}=a_n\ket{\varphi_n'}.
	\end{equation}
	
	\item We have equivalent probabilities:
	\begin{equation}
	\left|\braket{\varphi}{\psi}\right|=\left|\braket{\varphi'}{\psi'}\right|.
	\end{equation}
	\item The commutators are conserved:
	\begin{equation}
	\comm{A}{B}=\comm{A'}{B'}.
	\end{equation}
\end{enumerate}

\section{Wigner's theorem}

\begin{definition}[Antilinear operator]
	An operator $A$ is said antilinear if $\forall \ \ket{\varphi}, \ket{\psi} \in \mathcal{H}$ and $\alpha,\beta\in\mathbb{C}$ holds:
	\begin{equation}
	A\left(\alpha\ket{\varphi}+\beta\ket{\psi}\right)=\alpha^* A \ket{\varphi}+\beta^* A \ket{\psi}.
	\end{equation}
\end{definition}

\begin{definition}[Antiunitary operator]
	An operator $A$ is said antiunitary if it is antilineal and satisfy:
	\begin{equation}
	AA^{\dagger}=A^{\dagger}A=I.
	\end{equation}
\end{definition}

\begin{theorem}[Wigner's theorem]
	Have an observable $A$ with an orthonormal base of eigenvectors $\left\{\varphi_n\right\}$, and $A'=\mathcal{F}(A)$ with its own orthonormal base of eigenvectors $\left\{\varphi_n'\right\}$. Every transformation which satisfies:
	\begin{equation}
	\left|\braket{\psi}{\varphi}\right|=\left|\braket{\psi'}{\varphi'}\right|,
	\end{equation}
	may be represented with and unitary or antiunitary operator $U$:
	\begin{equation}
	\ket{\psi'}=U\ket{\psi}.
	\end{equation}
\end{theorem}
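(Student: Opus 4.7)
The plan is to construct the operator $U$ by first fixing its action on the reference orthonormal basis, then extending by coefficients, and finally exploiting the modulus-preservation hypothesis applied to carefully chosen superpositions to pin down the phase freedom and establish the linear/antilinear dichotomy.

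First I would normalize the phase ambiguity in the image basis. Since the transformation $\mathcal{F}$ acts on rays, each $\ket{\varphi_n'}$ is determined only up to an overall phase; I can absorb these phases once and for all by redefining $\ket{\varphi_n'} \mapsto e^{i\theta_n}\ket{\varphi_n'}$, fixing a definite representative for each image basis vector. Because $\left|\braket{\varphi_m'}{\varphi_n'}\right|=\left|\braket{\varphi_m}{\varphi_n}\right|=\delta_{mn}$, the set $\{\varphi_n'\}$ is already orthonormal and (assuming $\mathcal{F}$ is surjective on rays) complete. For a general state expanded as $\ket{\psi}=\sum_n c_n\ket{\varphi_n}$, the image $\ket{\psi'}$ admits an expansion $\ket{\psi'}=\sum_n c_n'\ket{\varphi_n'}$ and the hypothesis forces $|c_n'|=|c_n|$, so $c_n'=e^{i\alpha_n(\psi)}c_n$. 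The remaining work is to determine the phases $\alpha_n(\psi)$.

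Next I would apply the modulus-preservation condition to the two-term superpositions $\ket{\chi_n}=\frac{1}{\sqrt{2}}(\ket{\varphi_1}+\ket{\varphi_n})$ for $n\ge 2$. Using the residual phase freedom to set $\ket{\chi_n'}=\frac{1}{\sqrt{2}}(\ket{\varphi_1'}+\ket{\varphi_n'})$, the identity $\left|\braket{\chi_n}{\psi}\right|=\left|\braket{\chi_n'}{\psi'}\right|$ forces $|c_1+c_n|=|c_1'+c_n'|$, which after squaring yields $\mathrm{Re}(c_1^*c_n)=\mathrm{Re}({c_1'}^*c_n')$. Combined with $|c_n'|=|c_n|$, this constrains $c_1^*c_n$ and ${c_1'}^*c_n'$ to agree up to complex conjugation. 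After rescaling the overall phase of $\ket{\psi'}$ so that $c_1'=c_1$, the two admissible possibilities for each $n$ are $c_n'=c_n$ or $c_n'=c_n^*$.

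The crux of the proof, and the step I expect to be the main obstacle, is to upgrade this pointwise binary choice to a global dichotomy: for a given $\ket{\psi}$ one must have $c_n'=c_n$ for \emph{all} $n$ (linear branch) or $c_n'=c_n^*$ for all $n$ (antilinear branch), and the branch must be consistent across different states. I would handle this by testing against three-term superpositions of the form $\frac{1}{\sqrt{3}}(\ket{\varphi_1}+e^{i\beta}\ket{\varphi_m}+\ket{\varphi_n})$: modulus preservation of the inner product with such states entangles the individual choices for indices $m$ and $n$, forcing them to coincide except on a set of measure zero where the coefficients are real. A short continuity/genericity argument then rules out mixed branches.

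Once the dichotomy is established, I would define $U\ket{\psi}=\sum_n c_n\ket{\varphi_n'}$ globally (linear branch) or $U\ket{\psi}=\sum_n c_n^*\ket{\varphi_n'}$ (antilinear branch). Linearity (respectively antilinearity) is immediate from the construction, and the relation $U^\dagger U=UU^\dagger=I$ follows from the orthonormality and completeness of $\{\varphi_n'\}$ together with Parseval's identity, establishing unitarity in the first case and antiunitarity in the second. This completes the proof.
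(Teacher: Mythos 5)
The paper does not actually prove this statement: Wigner's theorem is quoted in Appendix~E as a background result, with no argument supplied. So there is no in-paper proof to compare against, and your proposal has to stand on its own. What you have written is the standard Wigner--Bargmann line of attack (the one in Weinberg's treatment): fix representatives of the image rays $\ket{\varphi_n'}$, use the two-term superpositions $\frac{1}{\sqrt{2}}(\ket{\varphi_1}+\ket{\varphi_n})$ to lock the relative phases of the image basis and to reduce each coefficient to the binary choice $c_n' = c_n$ or $c_n' = c_n^*$, then use three-term superpositions to force a single global branch, and finally read off unitarity or antiunitarity from Parseval. That is the correct architecture, and the individual algebraic steps you state (equal moduli plus equal real parts implies equal or conjugate) are right.

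The places that still need real work are exactly the ones you flag, plus two smaller ones. First, the global dichotomy: within a single state the three-vector test does exclude mixed branches for indices whose coefficients have genuinely different phases, but you must also show consistency \emph{across} states, which requires a separate comparison of $\left|\braket{\psi}{\phi}\right|$ with $\left|\braket{\psi'}{\phi'}\right|$ for two states provisionally assigned to different branches; ``a short continuity/genericity argument'' is a placeholder, not a proof, and this is the step where incomplete write-ups of Wigner's theorem historically go wrong. Second, your reduction assumes $c_1\neq 0$; states orthogonal to $\ket{\varphi_1}$ have to be handled by re-anchoring on a different basis vector and checking the phase conventions match. Third, after normalizing $c_1'=c_1$ the conjugate branch actually gives $c_n' = (c_1/c_1^{*})\,c_n^{*}$ unless you first rotate $\ket{\psi}$ so that $c_1$ is real and positive; this is cosmetic but should be said, since otherwise the formula $U\ket{\psi}=\sum_n c_n^{*}\ket{\varphi_n'}$ is not literally what the superposition test delivers. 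None of these is a wrong turn --- they are the standard technical debts of this proof --- but as written the argument is a sound sketch rather than a complete proof.
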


\begin{corollary}
	The transformed operator $A'$ may be expressed in terms of the same operator $U$:
	\begin{equation}
	A'=U^{-1}AU.
	\end{equation}
\end{corollary}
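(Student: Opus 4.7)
The plan is to combine Wigner's theorem with an invariance condition on matrix elements to read off the stated conjugation formula. From Wigner's theorem the symmetry is implemented by a unitary or antiunitary operator $U$ satisfying $\ket{\psi'}=U\ket{\psi}$. In the unitary case this gives the dual relation $\bra{\psi'}=\bra{\psi}U^{-1}$, which is the key algebraic tool I will use.

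The physical content of $A'$ being the transform of $A$ is that the matrix elements of $A'$ computed in the original basis agree with those of $A$ computed in the transformed basis; that is, $\bra{\varphi_m}A'\ket{\varphi_n}=\bra{\varphi_m'}A\ket{\varphi_n'}$ for every pair of basis labels. This is the natural promotion of condition (2) of the preamble from the level of squared transition amplitudes to matrix elements, and reflects the interpretation that $A$ acting on transformed states encodes the same information as $A'$ acting on untransformed ones. Substituting $\ket{\varphi_n'}=U\ket{\varphi_n}$ and $\bra{\varphi_m'}=\bra{\varphi_m}U^{-1}$ on the right-hand side yields
\[
\bra{\varphi_m}A'\ket{\varphi_n}=\bra{\varphi_m}U^{-1}AU\ket{\varphi_n},
\]
valid on every basis pair. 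Since $\{\ket{\varphi_n}\}$ is a complete orthonormal basis, matrix-element equality lifts to the operator identity $A'=U^{-1}AU$, which is exactly the stated corollary.

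The hardest step will be the antiunitary case, in which the dual relation $\bra{\psi'}=\bra{\psi}U^{-1}$ must be replaced by the antilinear pairing $\braket{U\psi}{\phi}=\overline{\braket{\psi}{U^{-1}\phi}}$. I would need to verify that the complex conjugations introduced on the two sides of the matrix-element identity cancel (this uses self-adjointness of $A$, so its matrix elements transform consistently under complex conjugation, and the reality of its spectrum $\{a_n\}$), and that the resulting composition $U^{-1}AU$ is still a well-defined linear operator on $\mathcal{H}$. Once this is checked the identity holds uniformly in both cases; commutator preservation, condition (3) of the preamble, then follows as a one-line consequence and provides a consistency check.
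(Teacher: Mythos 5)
The paper states this corollary without proof, so the only question is whether your argument is sound on its own terms and consistent with the appendix's preamble. In the unitary case your algebra is fine, but the invariance condition you adopt as the \emph{definition} of $A'$, namely $\bra{\varphi_m}A'\ket{\varphi_n}=\bra{\varphi_m'}A\ket{\varphi_n'}$, is the reverse of what the paper's condition (1) demands. That condition says $A\ket{\varphi_n}=a_n\ket{\varphi_n}$ implies $A'\ket{\varphi_n'}=a_n\ket{\varphi_n'}$ with $\ket{\varphi_n'}=U\ket{\varphi_n}$; feeding this through gives $A'U\ket{\varphi_n}=a_nU\ket{\varphi_n}$, hence $U^{-1}A'U=A$ and $A'=UAU^{-1}$, with the inverse on the other side. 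The matrix-element form of this is $\bra{\varphi_m'}A'\ket{\varphi_n'}=\bra{\varphi_m}A\ket{\varphi_n}$ (transformed operator between transformed states reproduces the original matrix elements), not the condition you wrote. Since $UAU^{-1}$ and $U^{-1}AU$ coincide only when $U^2$ commutes with $A$, this is not a harmless convention choice: your premise is incompatible with the spectrum-invariance condition stated just above the theorem. You should either derive the corollary from condition (1) and flag that the printed formula appears to have the inverse on the wrong side, or justify explicitly why your alternative matrix-element condition is the intended definition of $A'$.

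The antiunitary case is also not actually handled, and the repair you sketch does not work: for antiunitary $U$ one has $\braket{U\varphi_m}{AU\varphi_n}=\overline{\braket{\varphi_m}{U^{-1}AU\varphi_n}}$, and the matrix elements of the (linear, self-adjoint) operator $U^{-1}AU$ in an arbitrary basis are not real, so the conjugation survives and the matrix-element identity does not lift to the operator identity. The cleaner route in both cases is the eigenvector argument above, which never touches inner products: from $A'U\ket{\varphi_n}=a_nU\ket{\varphi_n}$ apply $U^{-1}$ and use reality of the $a_n$ to get $U^{-1}A'U\ket{\varphi_n}=a_n\ket{\varphi_n}=A\ket{\varphi_n}$ on the complete eigenbasis; since $U^{-1}A'U$ is linear even for antiunitary $U$ (two antilinear factors compose to a linear map), the identity extends to all of $\mathcal{H}$ and yields $A'=UAU^{-1}$ uniformly in both cases.
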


\section{Groups and transformations}

The families of transformations have group structure. We can distinguish:

\begin{itemize}
	\item Discrete groups. This transformations can be unitary or antiunitary.
	\item Continuous (Lie) groups. Must be unitary. They can be monoparametric or pluriparametric.
\end{itemize}

\begin{definition}[Group]
	Consider a set $\mathcal{G}\neq\varnothing$ and an operation $*$ in it. The pair $(\mathcal{G},*)$ is said to have group structure if:
	
	\begin{enumerate}
		\item $\forall A,B \in \mathcal{G}; \ A*B=C\in\mathcal{G}$.
		\item $\exists E\in\mathcal{G} /\forall A\in\mathcal{G}: \ A*E=E*A=A$.
		\item $\forall A\in\mathcal{G}, \ \exists A'\in\mathcal{G}/ A*A'=A'*A=E$.
		\item $\forall A,B,C\in\mathcal{G}: \ A*(B*C)=(A*B)*C$.
	\end{enumerate}
\end{definition}

If in addition we have that $\forall A,B\in\mathcal{G}: \ A*B=B*A$ then we say that the group is Abelian.

\begin{definition}[Lie group]
	A set with group structure with an infinite number of elements $\left\{G(\alpha)\right\}$ depending on one or more continuous parameters is called a Lie group.
\end{definition}

\begin{definition}[Lie algebra]
	A vector space $\mathrm{A}$ over a field $\mathbb{K}$ together with a bilinear map called the Lie bracket:
	\begin{equation}
	\begin{split}
	\left[\phantom{a},\phantom{a}\right]: \ &\mathrm{A}\times\mathrm{A}\longrightarrow\mathrm{A}\\
	&\left(x,y\right) \longmapsto\comm{x}{y}=z,
	\end{split}
	\end{equation}
	is called a Lie algebra if they both satisfy:
	
	\begin{enumerate}
		\item $\forall x,y,z\in\mathrm{A}$ and $\alpha,\beta\in\mathbb{K}: \ \comm{\alpha x + \beta y}{z}=\alpha\comm{x}{z}+\beta\comm{y}{z}$.
		\item $\forall x,y,z\in\mathrm{A}$ and $\alpha,\beta\in\mathbb{K}: \ \comm{x}{\alpha y + \beta z}=\alpha\comm{x}{y}+\beta\comm{x}{z}$.
		\item $\forall x,y,z\in\mathrm{A}: \ \comm{\comm{x}{y}}{z}+\comm{\comm{y}{z}}{x}+\comm{\comm{z}{x}}{y}=0$.
		\item $\forall x\in\mathrm{A}: \ \comm{x}{x}=0$.
	\end{enumerate}

\end{definition}

\subsection{One-parameter groups}
	
Consider a family of unitary operators $U(s)$ which depends on a single continuous parameter $s$. We can choose the parametrization:
\begin{equation}
\begin{split}
&U(s=0)=I,\\
&U(s_1+s_2)=U(s_1)U(s_2).\\
\end{split}
\end{equation}

\begin{theorem}[Gleason's theorem]
	Every continuous group is also differentiable.
\end{theorem}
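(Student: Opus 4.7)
The statement as worded is Hilbert's fifth problem in its final, resolved form, due to Gleason, Montgomery--Zippin and Yamabe: a locally Euclidean topological group admits a unique compatible real-analytic structure making it a Lie group. A genuine proof is a major 20th-century achievement spanning hundreds of pages; what I can reasonably offer here is a roadmap in the spirit of the appendix, together with an indication of what the appendix really needs (namely the one-parameter case, which amounts to Stone's theorem already sketched).

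The plan is to reduce the general claim to the case of groups with ``no small subgroups'' (NSS), and then to build differentiable coordinates out of continuous one-parameter subgroups. First I would observe that any locally compact topological group $\mathcal{G}$ that is also a topological manifold has the NSS property: some neighbourhood of the identity contains no nontrivial subgroup. The intuition is that in a manifold chart near $E$, any subgroup must, by the group law, accumulate into an arc, which forces it out of any preassigned small neighbourhood unless it is trivial. Second, with the NSS property in hand, I would define the set $\mathfrak{g}$ of continuous one-parameter subgroups $\sigma:\mathbb{R}\to\mathcal{G}$ and construct them abundantly by the Gleason square-root trick: given $g$ close to $E$, use local compactness to extract a convergent subsequence of $n$-fold roots $g^{1/2^{k}}$, and assemble these into a homomorphism $\sigma_{g}$ with $\sigma_{g}(1)=g$. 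The NSS property guarantees uniqueness of the root and continuity of $g\mapsto\sigma_{g}$.

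Third, I would equip $\mathfrak{g}$ with its natural linear structure (addition via $\sigma_{1}(t)\sigma_{2}(t)\sim\sigma_{1+2}(t)$ for small $t$, scalar action by time rescaling) and show $\mathfrak{g}\cong\mathbb{R}^{n}$ for some finite $n$, which will be the dimension of $\mathcal{G}$ as a manifold. Fourth, the exponential map $\exp:\mathfrak{g}\to\mathcal{G}$, $\exp(\sigma)=\sigma(1)$, will be shown to be a local homeomorphism near $0$; transporting its inverse by left translations $L_{h}(g)=hg$ then produces a real-analytic atlas on all of $\mathcal{G}$ compatible with the group operation, verifying that $\mathcal{G}$ is a Lie group. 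In the one-parameter setting actually required by the following subsection on $U(s)$, these steps collapse dramatically: strong continuity together with the group law $U(s_{1}+s_{2})=U(s_{1})U(s_{2})$ lets one define a dense domain of smooth vectors by convolution
\[
\psi_{f}=\int_{\mathbb{R}}f(s)\,U(s)\psi\,ds,\qquad f\in C_{c}^{\infty}(\mathbb{R}),
\]
on which $t\mapsto U(t)\psi_{f}$ is differentiable, yielding the generator.

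The main obstacle in the general case is the construction of one-parameter subgroups without any a priori regularity: ensuring that the successive square roots $g^{1/2^{k}}$ stay inside a controlled neighbourhood and converge, and that the resulting $\sigma_{g}$ depends continuously on $g$, requires delicate quantitative estimates that are the technical heart of Gleason's work. Because this lies far outside the scope of a physics dissertation, I would present the NSS reduction and the exponential construction as an outline, quote the convergence of $n$-th roots as a black box from Montgomery--Zippin, and devote the verifiable detail to the one-parameter case, which is all the subsequent discussion of generators of $U(s)$ actually uses.
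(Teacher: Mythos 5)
The paper offers no proof of this statement: it is quoted in Appendix E as a background fact, with the corollary on Taylor-expanding $U(s)$ as its only use, so there is no argument of the paper's to compare yours against. Your identification of the statement with the Gleason--Montgomery--Zippin--Yamabe resolution of Hilbert's fifth problem is the correct reading of what is being asserted, and your roadmap --- reduction to the no-small-subgroups case, construction of one-parameter subgroups by iterated square roots, the linear structure on the space of one-parameter subgroups, and the exponential map as a local chart transported by left translations --- is a faithful outline of the actual proof. Be clear, though, that what you have written is a roadmap and not a proof: the two steps you explicitly defer to black boxes (convergence and uniqueness of the roots $g^{1/2^{k}}$, and continuity of $g\mapsto\sigma_{g}$) are precisely where all of the difficulty lives, so nothing is established beyond what is already cited. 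Your most valuable observation is that the appendix only ever invokes the theorem for one-parameter unitary groups, where strong continuity together with $U(s_{1}+s_{2})=U(s_{1})U(s_{2})$ genuinely yields differentiability on the dense domain of vectors $\psi_{f}=\int f(s)\,U(s)\psi\,\diff s$; that special case admits a complete one-page proof and is the honest statement to which a dissertation at this level should restrict itself.
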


\begin{corollary}
	An unitary representation of an element of a continuous group can be Taylor expanded:
	\begin{equation}
	U(s)=I+\left.\frac{\diff U(s)}{\diff s}\right|_{s=0} s + \mathcal{O}(s^2).
	\end{equation}
\end{corollary}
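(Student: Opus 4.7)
The plan is straightforward: combine the differentiability of $s \mapsto U(s)$, which Gleason's theorem provides, with the normalization $U(0) = I$ fixed in the parametrization chosen just before the statement, and apply Taylor's theorem to first order to the operator-valued function $U(s)$ at the origin. No additional structure beyond what has been introduced in the two preceding definitions and the theorem is needed.

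First I would invoke Gleason's theorem to conclude that the map $s \mapsto U(s)$ is differentiable at $s = 0$, so that
$$T \;\equiv\; \left.\frac{\diff U(s)}{\diff s}\right|_{s=0}$$
exists as an operator on $\mathcal{H}$. Then Taylor's theorem applied to $U(s)$ at $s=0$ gives
$$U(s) \;=\; U(0) \;+\; T\, s \;+\; R(s),$$
where the remainder satisfies $\|R(s)\|/|s|\to 0$ as $s\to 0$, which is precisely the meaning of $R(s) = \mathcal{O}(s^2)$ in the Landau notation used in this appendix. Substituting the normalization $U(0)=I$ then yields exactly
$$U(s)\;=\;I\;+\;\left.\frac{\diff U(s)}{\diff s}\right|_{s=0}\! s\;+\;\mathcal{O}(s^2),$$
which is the claimed expansion.

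The only subtle point, and the place I expect will require the most care, is the interpretation of $\mathcal{O}(s^2)$ when $U(s)$ takes values in operators rather than scalars: in general the generator $T$ need not be bounded, so the expansion must be read in the strong operator topology on a suitable dense domain rather than in the operator norm. In the quantum-mechanical setting of this appendix this is handled rigorously by Stone's theorem, which under strong continuity of $U(s)$ (the natural hypothesis for an unitary representation of a symmetry on the Hilbert space of states) guarantees a densely defined skew-Hermitian generator $T = -iA/\hbar$ with $A$ self-adjoint. Written in this form the expansion becomes
$$U(s)\;\approx\; I \;-\; \frac{i}{\hbar}\, A\, s,$$
exhibiting $A$ as the Hermitian generator of the symmetry; this is the physically useful content of the corollary and motivates the subsequent identification of such generators with the conserved observables of the theory.
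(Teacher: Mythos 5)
Your proposal is correct and is exactly the argument the paper intends: the corollary is stated without an explicit proof, as an immediate consequence of Gleason's theorem (differentiability of $s\mapsto U(s)$) together with the normalization $U(0)=I$ fixed in the chosen parametrization, followed by a first-order Taylor expansion. Your additional caveat about unbounded generators and the strong-operator-topology reading of $\mathcal{O}(s^2)$, justified via Stone's theorem, is a sensible refinement but goes beyond the level of rigour the appendix requires.
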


\begin{theorem}[Stone's theorem]
	The family of operators $U(s)$ is unambiguously determined by the generator $K$:
	\begin{equation}
	\left\{\begin{array}{c}\frac{\diff U(x)}{\diff x}=-\imath K U(x)\\U(0)=I\end{array}\right..
	\end{equation}
\end{theorem}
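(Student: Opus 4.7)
The plan is to extract the ODE directly from the group law together with differentiability (which is granted by Gleason's theorem and its corollary), and then invoke standard uniqueness for linear first-order ODEs to get the claim that $K$ determines the whole family.

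First I would define the generator as $K \equiv i \left.\dfrac{\diff U(s)}{\diff s}\right|_{s=0}$, which exists by the corollary to Gleason's theorem just stated, and then use the composition law $U(s+\epsilon)=U(\epsilon)U(s)$ to split the difference quotient at $s$:
\begin{equation}
\frac{U(s+\epsilon)-U(s)}{\epsilon}=\frac{U(\epsilon)-I}{\epsilon}\,U(s).
\end{equation}
Taking $\epsilon\to 0$ and using the Taylor expansion $U(\epsilon)=I-\imath K\epsilon+\mathcal{O}(\epsilon^2)$ gives the stated differential equation $\dfrac{\diff U(s)}{\diff s}=-\imath K U(s)$, while $U(0)=I$ is built into the chosen parametrization. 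This shows existence of a $U(s)$ that is controlled by $K$ alone.

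Next I would establish uniqueness, i.e. that $K$ alone fixes $U(s)$. Suppose $U_1(s)$ and $U_2(s)$ both solve the Cauchy problem. Consider $W(s)\equiv U_1(s)U_2(s)^{\dagger}$, which satisfies $W(0)=I$ because of the unitarity condition $U_i U_i^{\dagger}=I$. Differentiating and using the ODE for both $U_1$ and $U_2$,
\begin{equation}
\frac{\diff W(s)}{\diff s}=-\imath K U_1(s)U_2(s)^{\dagger}+U_1(s)\bigl(-\imath K U_2(s)\bigr)^{\dagger}=-\imath K W(s)+\imath W(s)K,
\end{equation}
where I used that $K$ is Hermitian (it is $i$ times the derivative at $0$ of a one-parameter unitary family). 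Since $W(0)=I$ commutes with $K$, by the same argument $W(s)$ commutes with $K$ for all $s$, so $\dfrac{\diff W}{\diff s}=0$, hence $W(s)=I$ and $U_1=U_2$. This is the step I expect to be the most delicate because in infinite dimensions $K$ is typically unbounded and the formal manipulations require care (the honest statement invokes the spectral theorem to write $U(s)=e^{-\imath K s}$); at the level of rigor of the present dissertation the formal argument above suffices.

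Finally, I would remark that the solution can be exhibited explicitly as $U(s)=\exp(-\imath K s)$, which manifestly satisfies the Cauchy problem and, by the uniqueness just proved, is \emph{the} one-parameter family associated with $K$. This closes the loop: every generator $K$ produces, through the ODE, exactly one one-parameter unitary group, and conversely every one-parameter unitary group is recovered by exponentiation from its self-adjoint generator.
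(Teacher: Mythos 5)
The paper gives no proof of this statement at all: Stone's theorem appears in Appendix E as one of several background results (alongside Wigner's and Gleason's theorems) quoted without argument, so there is no internal proof to compare yours against — your proposal is filling a genuine gap rather than paralleling anything in the text. On its own terms, your derivation of the differential equation from the composition law $U(s+\epsilon)=U(\epsilon)U(s)$ together with the Taylor expansion supplied by the corollary to Gleason's theorem is the standard formal argument and is adequate at the level of rigor of this dissertation.

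The one step that does not hold as written is the uniqueness argument. From $\frac{\diff W}{\diff s}=-\imath KW+\imath WK$ and $W(0)=I$ you conclude that $W(s)$ commutes with $K$ for all $s$ ``by the same argument,'' but no such argument has been given: commutativity at $s=0$ does not propagate along the flow unless you already invoke the uniqueness statement you are in the middle of proving, so the step is circular. The repair is immediate — take $V(s)=U_2(s)^{\dagger}U_1(s)$ instead of $U_1U_2^{\dagger}$; then
\begin{equation}
\frac{\diff V}{\diff s}=\bigl(-\imath K U_2\bigr)^{\dagger}U_1+U_2^{\dagger}\bigl(-\imath K U_1\bigr)=\imath\,U_2^{\dagger}K U_1-\imath\,U_2^{\dagger}K U_1=0
\end{equation}
identically (using Hermiticity of $K$), so $V(s)=V(0)=I$ and $U_1=U_2$ with no commutator to control. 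Your closing caveat is well placed: in infinite dimensions the honest theorem requires $K$ to be self-adjoint (not merely symmetric) and $U(s)$ to be strongly continuous, and the exponential $U(s)=\exp(-\imath K s)$ is defined via the spectral theorem; none of that machinery is needed at the level at which the appendix operates, but it is right to flag it.
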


\begin{definition}[su(2) algebra]
	We call su(2) algebra to the Lie algebra generated by three operators $\left\{S_i\right\}$ which obey the commutation relations:
	\begin{equation}
	\comm{S_i}{S_j}=\imath \epsilon_{ijk} S_k.
	\end{equation}
\end{definition}

\begin{theorem}[Fundamental theorem of the su(2) algebra]
	Given a dimensionality $n=2q+1$ with $q=0,1/2,1,...$ exists one and only one set of matrices that obey the commutation relations of the su(2) algebra.
\end{theorem}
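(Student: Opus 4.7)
The plan is to establish the theorem by constructing the irreducible representation of the su(2) algebra explicitly on a basis of simultaneous eigenvectors of a suitably chosen complete set of commuting operators, and then showing that the construction is forced, giving uniqueness up to unitary equivalence.

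First I would introduce the Casimir operator $S^2 \equiv S_1^2+S_2^2+S_3^2$ and verify, using only the commutation relations $\comm{S_i}{S_j}=\imath \epsilon_{ijk} S_k$, that $\comm{S^2}{S_i}=0$ for every $i$. Hence $S^2$ and (say) $S_3$ admit a common orthonormal eigenbasis, which I would label $\ket{q,m}$ with $S^2\ket{q,m}=q(q+1)\ket{q,m}$ and $S_3\ket{q,m}=m\ket{q,m}$ (the parametrisation $q(q+1)$ being natural because $S^2$ is a non-negative operator). Next I would introduce the ladder operators $S_\pm\equiv S_1\pm \imath S_2$ and compute the three key relations $\comm{S_3}{S_\pm}=\pm S_\pm$, $\comm{S_+}{S_-}=2S_3$ and $S^2=S_\mp S_\pm + S_3^2 \pm S_3$. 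From the first of these it follows that $S_\pm\ket{q,m}$ is proportional to $\ket{q,m\pm 1}$, so the $S_\pm$ shift $m$ by one unit while leaving $q$ fixed.

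The central step is then to show that the tower of $m$ values must be finite. Using $S^2=S_\mp S_\pm + S_3^2 \pm S_3$ together with positivity of $\norm{S_\pm\ket{q,m}}^2=q(q+1)-m(m\pm 1)\geq 0$, I would deduce that $m$ ranges between a minimum $m_{min}$ and a maximum $m_{max}$ with $S_+\ket{q,m_{max}}=0$ and $S_-\ket{q,m_{min}}=0$. Sandwiching these annihilation conditions back into the Casimir formula gives $m_{max}(m_{max}+1)=q(q+1)=m_{min}(m_{min}-1)$, hence $m_{max}=q$ and $m_{min}=-q$. Since $m_{max}-m_{min}=2q$ must be a non-negative integer, $q\in\{0,\tfrac{1}{2},1,\tfrac{3}{2},\dots\}$, and the representation has dimension $n=2q+1$, as claimed.

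For uniqueness I would observe that the above argument fixes the matrix elements completely: positivity and the recursion $\norm{S_+\ket{q,m}}^2=q(q+1)-m(m+1)$ determine $S_\pm$ up to a phase choice on each basis vector, which is a unitary gauge that may be absorbed. Hence for each admissible $q$ the irreducible representation on $\mathbb{C}^{2q+1}$ is unique up to unitary equivalence, proving the second half of the theorem. The only real subtlety, which I would flag as the main obstacle, is the proof that the ladder terminates: it requires positivity of the Hilbert space inner product and the implicit assumption that the representation is irreducible and finite-dimensional, so that $m_{max}$ and $m_{min}$ actually exist; once these hypotheses are in place the rest is algebraic bookkeeping.
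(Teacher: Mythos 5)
Your argument is correct, and it is the standard highest-weight (ladder-operator) classification of the unitary irreducible representations of $\mathfrak{su}(2)$. Note, however, that the paper itself states this theorem in Appendix~E without any proof at all --- it is presented as background material --- so there is no ``paper's proof'' to compare against; your write-up would in fact supply the missing argument. On its own terms the proposal is sound: the Casimir $S^2$ commutes with each $S_i$, the relations $\comm{S_3}{S_\pm}=\pm S_\pm$ and $S^2=S_\mp S_\pm+S_3^2\pm S_3$ are exactly the right tools, and the positivity computation $\norm{S_\pm\ket{q,m}}^2=q(q+1)-m(m\pm1)\geq 0$ both terminates the ladder and fixes all matrix elements up to phases, which delivers existence and uniqueness (up to unitary equivalence --- which is how the theorem's ``one and only one set of matrices'' must be read) simultaneously. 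Two small refinements: the norm computation presupposes $S_\pm^\dagger=S_\mp$, i.e.\ Hermiticity of the $S_i$, which you should state as a hypothesis (it is what ``unitary representation'' buys you); and the subtlety you flag about the existence of $m_{\max}$ is slightly over-cautious --- positivity of the inner product alone bounds $m$ above and below for fixed $q$, so repeated application of $S_+$ must annihilate some vector after finitely many steps without assuming finite-dimensionality in advance. Irreducibility, on the other hand, genuinely is needed for the uniqueness half (a direct sum of lower spins in dimension $n$ would otherwise be a counterexample), so you are right to keep it as a hypothesis.
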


\section{Movement constants and degeneration}

\begin{definition}
	Consider a set of lineal and unitary transformations $G=\left\{T_\alpha\right\}_{\alpha\in A}$. The observable $Q$ is said to be invariant under $G$ if:
	\begin{equation}
	\comm{Q}{T_\alpha}=0; \ \forall T_\alpha\in G.
	\end{equation}
\end{definition}

If $Q=H$ then we have $\comm{H}{T_\alpha}=0$ and therefore the $T_\alpha$ are called movement constants.

\begin{theorem}
	If a operator $Q$ is invariant under a set of continuous one-parametric transformations $G=\left\{T(\alpha)\right\}$ then the operator also commutes with its generator $K$:
	\begin{equation}
	\comm{Q}{K}=0.
	\end{equation}
\end{theorem}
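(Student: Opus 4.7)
The plan is to extract the commutator with $K$ as the first-order term of the commutator with $T(\alpha)$, which vanishes identically by the invariance hypothesis. By Gleason's theorem the family $\{T(\alpha)\}$ is differentiable in $\alpha$, so both sides of $\comm{Q}{T(\alpha)} = 0$ are smooth functions of the parameter, and one is entitled to differentiate the equation with respect to $\alpha$.

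The central step is then immediate: because the commutator bracket is bilinear and $Q$ does not depend on $\alpha$, the derivative passes inside, giving $\frac{d}{d\alpha}\comm{Q}{T(\alpha)} = \comm{Q}{\frac{d T(\alpha)}{d\alpha}} = 0$. Evaluating at $\alpha = 0$ and invoking Stone's theorem, which identifies $\left.\frac{dT(\alpha)}{d\alpha}\right|_{\alpha=0} = -i K$, yields $-i\comm{Q}{K} = 0$, whence the claim.

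An equivalent route, closer in spirit to the corollary to Gleason's theorem stated earlier in this appendix, is to insert the Taylor expansion $T(\alpha) = I - i\alpha K + \mathcal{O}(\alpha^{2})$ directly into $\comm{Q}{T(\alpha)} = 0$, so that $-i\alpha \comm{Q}{K} + \mathcal{O}(\alpha^{2}) = 0$, and then read off the coefficient of $\alpha$; since the identity holds for every $\alpha$ in a neighbourhood of zero, this coefficient must vanish.

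There is no serious obstacle in the argument. The only point one might pause over is the legitimacy of interchanging the $\alpha$-derivative (or equivalently the Taylor expansion) with the commutator, together with the usual domain question that arises when $K$ is unbounded: strictly speaking one needs $Q$ to preserve the dense domain on which the generator is defined, so that $\comm{Q}{T(\alpha)}$ and $\comm{Q}{K}$ make sense on a common subspace. At the formal level of symmetry generators at which this appendix operates this is automatic, so I would simply note it and conclude.
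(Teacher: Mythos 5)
Your argument is correct and is exactly the one the surrounding appendix sets up: differentiate (or Taylor-expand, via the corollary to Gleason's theorem) the invariance condition $\comm{Q}{T(\alpha)}=0$ at $\alpha=0$ and identify $\left.\tfrac{\diff T(\alpha)}{\diff\alpha}\right|_{\alpha=0}=-\imath K$ from Stone's theorem. The paper itself states this theorem without proof, so there is nothing to compare against beyond noting that your route is the evidently intended one; your closing remark about domains of unbounded generators is a sensible caveat at the formal level of this appendix.
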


If $H$ is invariant under a certain transformation, then its spectrum is degenerated.

\begin{theorem}
	Consider a Hamiltonian $H$ and a transformation $T$ such as $\comm{H}{T}=0$. If $H\ket{\alpha}=E_{\alpha}\ket{\alpha}$, then:
	
	\begin{itemize}
		\item $\ket{\beta}=T\ket{\alpha}$ is an eigenvector of $H$.
		\item if $\ket{\alpha}$ and $\ket{\beta}$ are lineally independent, the eigenvalue $E_\alpha$ is degenerated.
	\end{itemize}
\end{theorem}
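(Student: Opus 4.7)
The plan is to exploit the commutation hypothesis $\comm{H}{T}=0$ directly, showing that $T$ maps each eigenspace of $H$ into itself. First I would take the eigenvalue equation $H\ket{\alpha}=E_{\alpha}\ket{\alpha}$ and act with $T$ on the left of both sides, which by linearity of $T$ gives $TH\ket{\alpha}=E_{\alpha}\,T\ket{\alpha}$, the scalar $E_{\alpha}$ passing through $T$ unchanged.

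Next I would invoke the commutation hypothesis in the equivalent form $TH=HT$ to rewrite the left-hand side, obtaining $H\bigl(T\ket{\alpha}\bigr)=E_{\alpha}\bigl(T\ket{\alpha}\bigr)$. Identifying $\ket{\beta}=T\ket{\alpha}$, this is exactly $H\ket{\beta}=E_{\alpha}\ket{\beta}$, which establishes the first bullet: $\ket{\beta}$ is an eigenvector of $H$ sharing the eigenvalue $E_{\alpha}$ with $\ket{\alpha}$.

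For the second bullet, I would observe that the linear independence of $\ket{\alpha}$ and $\ket{\beta}$ implies in particular that $\ket{\beta}\neq 0$, so both vectors are genuine nonzero eigenvectors of $H$ associated with the common eigenvalue $E_{\alpha}$. The $E_{\alpha}$-eigenspace therefore has dimension at least two, which is by definition the statement that $E_{\alpha}$ is degenerate.

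The argument reduces to a one-line manipulation of the commutator, so no real obstacle arises. The only point that deserves a moment of care is ensuring that $T$ acts linearly (rather than antilinearly) when moving the scalar $E_{\alpha}$ across it; this is automatic in the present setting, since the theorem is applied within the framework of continuous (Lie) symmetry transformations, which by Stone's theorem together with Wigner's theorem (both stated earlier in this appendix) are necessarily unitary and therefore linear.
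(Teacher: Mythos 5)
Your proof is correct and is the standard argument: acting with $T$ on the eigenvalue equation and using $TH=HT$ gives $H\bigl(T\ket{\alpha}\bigr)=E_{\alpha}\bigl(T\ket{\alpha}\bigr)$, and linear independence of $\ket{\alpha}$ and $\ket{\beta}$ then forces the $E_{\alpha}$-eigenspace to have dimension at least two. The paper states this theorem without proof, so there is nothing to compare against; your added remarks — that linearity of $T$ is needed to pull the scalar through (guaranteed here since the transformations in this section are taken to be linear and unitary), and that unitarity ensures $T\ket{\alpha}\neq 0$ so that $\ket{\beta}$ is a genuine eigenvector — are exactly the right points of care and fill in what the paper leaves implicit.
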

\chapter{The Lorentz Group}\label{Appendix F}

The postulates of Special Relativity tells that the velocity of light \textit{c} is the same in all inertial frames. If in one frame we have a light signal at space-time point $(t,x_i)$ and in another frame we found it at $(t',x'_i)$, the previous restriction implies:
\begin{equation}\label{F.1}
s^2\equiv c^2 t^2 - x_i x_i=c^2 t'^2 - x'_i x'_i.
\end{equation}

Choosing units such that $c=1$ and adopting the contracted notation $x^\mu$, $\mu=0,1,2,3$ with $x^0=t$ and $(x^1,x^2,x^3)=\vec{x}$, $s^2$ may be written as:
\begin{equation}\label{F.2}
s^2\equiv x^\mu x^\nu g_{\mu \nu}=x'^\mu x'^\nu g_{\mu \nu},
\end{equation}

where the metric $g_{\mu \nu}$ is zero except for $\mu=\nu$ when $g_{00}=-g_{11}=-g_{22}=-g_{33}=1$

\

Now we look for a linear transformation which preserves $s^2$:
\begin{equation}\label{F.3}
x'^{\mu}=\Lambda^{\mu}_{\nu} x^{\nu}.
\end{equation}

It must satisfy:
\begin{equation}\label{F.4}
x'^\mu x'^\nu g_{\mu \nu}=\Lambda^{\mu}_{\rho} x^{\rho} \Lambda^{\nu}_{\sigma} x^{\sigma} g_{\mu \nu}=x^{\rho} x^{\sigma} g_{\rho \sigma}.
\end{equation}

Because that must be hold by every $x^{\mu}$, we conclude:
\begin{equation}\label{F.5}
g_{\rho \sigma}=g_{\mu \nu} \Lambda^{\mu}_{\rho} \Lambda^{\nu}_{\sigma}.
\end{equation}

One may show that these transformations satisfy the group axioms.

\

If we use matrix notation, regarding $x^{\mu}$ as a column vector $x$, the metric as a squared matrix $g$ and the transformation as the matrix equivalent off the coefficients $L$, we can rewrite the previous relation as:
\begin{equation}\label{F.6}
g=L^tgL.
\end{equation}

We must point out two consequences of the above discussion. First, if we take the determinant in the last equation:
\begin{equation}\label{F.7}
\det{g}=\det{L^t} \det{g} \det{L},
\end{equation}

from which we deduce:
\begin{equation}\label{F.8}
\det{L}=\pm 1.
\end{equation}

Second, taking the $00$ entry in equation (5):
\begin{equation}\label{F.9}
1=g_{\mu \nu} \Lambda^{\mu}_{0} \Lambda^{\nu}_{0}=\left(\Lambda^{0}_{0}\right)^2-\left(\Lambda^{i}_{i}\right)^2,
\end{equation}
shows that:
\begin{equation}\label{F.10}
\left|\Lambda^{0}_{0}\right| \geq 1.
\end{equation}

Any Lorentz transformation can be decomposed as the product of rotations, boost transformations, time inversion and full inversion. Consider the infinitesimal Lorentz Transformation:
\begin{equation}\label{F.11}
\Lambda^{\mu}_{\nu}=\delta^{\mu}_{\nu} + \varepsilon^{\mu}_{\nu}.
\end{equation}

Evaluating (\ref{F.5}) to first order in $\epsilon$ gives:
\begin{equation}\label{F.12}
g_{\rho \sigma}=g_{\rho \sigma} + g_{\rho \nu} \varepsilon^{\nu}_{\sigma} + g_{\mu \sigma} \varepsilon^{\mu}_{\rho},
\end{equation}
which is equivalnet to:
\begin{equation}\label{F.13}
g_{\nu \rho} \varepsilon^{\rho}_{\mu} + g_{\mu \rho} \varepsilon^{\rho}_{\nu}=0.
\end{equation}

Since the metric can low indices, that equation becomes:
\begin{equation}\label{F.14}
\varepsilon_{\nu \mu} + \varepsilon_{\mu \nu}=0
\end{equation}

That is, $\varepsilon_{\mu \nu}$ is an antisymmetric tensor with only 6 independent entries. It could be advertised, since there's three boosts and three rotations, one for each space direction.

\

We can introduce the Hermitian generators:
\begin{equation}\label{F.15}
L_{\mu \nu}\equiv\imath \left(x_{\mu} \partial_{\nu}-x_{\nu} \partial_{\mu}\right).
\end{equation}

The $L_{\mu \nu}$'s satisfy the Lie algebra of $SO(3,1)$:
\begin{equation}\label{F.16}
\left[L_{\mu \nu},L_{\rho \sigma}\right] = \imath g_{\nu \rho} L_{\mu \sigma} - \imath g_{\mu \rho} L_{\nu \sigma} - \imath g_{\nu \sigma} L_{\mu \rho} + \imath g_{\mu \sigma} L_{\nu \rho}.
\end{equation}

The most general representation of the generators of $SO(3,1)$ obeying that commutation relation is given by:
\begin{equation}\label{F.17}
M_{\mu \nu}\equiv \imath \left(x_{\mu} \partial_{\nu}-x_{\nu} \partial_{\mu}\right) + S_{\mu \nu},
\end{equation}
where the Hermitian operator $S_{\mu \nu}$ commute with $L_{\mu \nu}$ and satisfy the same Lie algebra. These generators form an algebra among themselves:
\begin{equation}\label{F.18}
\left[M_{i j},M_{k l}\right] = -\imath \delta_{j k} M_{i l} + \imath \delta_{i k} M_{j l} + \imath \delta_{j l} M_{i k} - \imath \delta_{i l} M_{j k},
\end{equation}
which is that of the rotation group $SU(2)$. If we introduce the operators:
\begin{equation}\label{F.19}
J_i\equiv\frac{1}{2} \ \epsilon_{ijk} M_{jk},
\end{equation}
\begin{equation}\label{F.20}
K_i\equiv M_{0i},
\end{equation}
where $\epsilon_{ijk}$ is the Levi-Civita symbol, the next commutation relations follows:
\begin{equation}\label{F.21}
\left[J_i,J_j\right]=\imath \epsilon_{ijk} J_k,
\end{equation}
\begin{equation}\label{F.22}
\left[K_i,K_j\right]=-\imath \epsilon_{ijk} J_k,
\end{equation}
\begin{equation}\label{F.23}
\left[J_i,K_j\right]=\imath \epsilon_{ijk} K_k.
\end{equation}

These relations may be disentangled by introducing the linear combination:
\begin{equation}\label{F.24}
N_i\equiv \frac{1}{2} \ \left( J_i + \imath K_i\right).
\end{equation}

Although it's not Hermitian, it yield simple commutation relations:
\begin{equation}\label{F.25}
\left[N_i,N_j^\dagger\right]=0,
\end{equation}
\begin{equation}\label{F.26}
\left[N_i,N_j\right]=\imath \epsilon_{ijk} N_k,
\end{equation}
\begin{equation}\label{F.27}
\left[N_i^\dagger,N_j^\dagger\right]=\imath \epsilon_{ijk} N_k^\dagger.
\end{equation}
\end{appendices}


\begin{thebibliography}{9}
\addcontentsline{toc}{chapter}{Bibliography}
	
	\bibitem{finkelstein}
	Finkelstein, D. (1958). Past-Future Asymmetry of the Gravitational Field of a Point Particle. \textit{Phys. Rev.}, 110, 965. \url{doi:10.1103/PhysRev.110.965}

	\bibitem{bh.tdcs}
	Bardeen, J. M., Carter, B. $\&$ Hawking, S. W. (1973). The four laws of black hole mechanics. \textit{Communications in Mathematical Physics}, 31(2), 161–170. \url{doi:10.1007/BF01645742}

	\bibitem{hawking}
	Hawking, S. (1975). Particle creation by black holes. \textit{Communications in Mathematical Physics}, 43, 199–220. \url{doi:10.1007/BF02345020}
	
	\bibitem{nohair}
	Israel, W. (1967). Event Horizons in Static Vacuum Space-Times. \textit{Phys. Rev.,} 164, 1776. \url{doi:10.1103/PhysRev.164.1776}
	
	\bibitem{hawking.info}
	Hawking, S. (1976). Breakdown of Predictability in Gravitational Collapse. \textit{Physical Review D}. 14, 2460. \url{doi:10.1103/PhysRevD.14.2460}
	
	\bibitem{bbm}
	Bondi, H., van der Burg, M. J. $\&$ Metzner, A. W. (1962). Gravitational waves in general relativity VII. Waves from isolated axisymmetric systems. \textit{Proc. Roy. Soc. Lond. A}, 269, 21. \url{doi:10.1098/rspa.1962.0161}
	
	\bibitem{sachs}
	Sachs, R. (1962). Asymptotic Symmetries in Gravitational Theory. \textit{Phys. Rev.}, 128, 2851. \url{doi:10.1103/PhysRev.128.2851}
	
	\bibitem{hsp}
	Hawking, S., Perry, M. $\&$ Strominger, A. (2016). Soft hair on black holes. \textit{Phys. Rev. Lett.}, 116 231301. \url{doi:10.1103/PhysRevLett.116.231301}
	
	\bibitem{c.qft}
	Carroll, S. (2004). \textit{Spacetime and geometry}. Harlow: Pearson Education, pp.376 - 421.
	
	\bibitem{w.qft}
	Wald, R. (1986). \textit{General relativity}. Chicago: Univ. of Chicago Press, pp.378 - 420.
	
	\bibitem{t.bh}
	Townsend, P. (1997). \textit{Black Holes}. \url{arXiv:gr-qc/9707012v1}
	
	\bibitem{unruh}
	Unruh, W. (1976). Notes on black-hole evaporation. \textit{Phys. Rev. D}, 14, 870. \url{doi:10.1103/PhysRevD.14.870}
	
	\bibitem{rindler}
	Rindler, W. (1960). Hyperbolic Motion in Curved Space Time. \textit{Phys. Rev.}, 119, 2082. \url{doi:10.1103/PhysRev.119.2082}
	
	\bibitem{j.bh}
	Jacobson, T. (1996). \textit{Introductory lectures on Black Hole Thermodynamics}. Lecture, University of Utrecht.
	
	\bibitem{f.mbhe}
	Fabbri, A. $\&$ Navarro-Salas, J. (2005). \textit{Modeling Black Hole Evaporation}. London: Imperial College Press, pp. 81-107.
	
	\bibitem{v.bms}
	van der Ven, H. (2016). \textit{The BMS algebra and black hole information} (Master thesis). Retrieved from the Utrecht University Repository.
	
	\bibitem{m.info}
	Mathur, S. (2009). The information paradox: a pedagogical introduction. \textit{Class. Quant. Grav.}, 26(22). \url{arXiv:0909.1038v2}
	
	\bibitem{h.bms}
	Haco, S., Hawking, S., Perry, M., $\&$ Bourjaily, J. (2017). The conformal BMS group. \textit{Journal Of High Energy Physics}, 2017(11). doi: \url{10.1007/jhep11(2017)012}
	
	\bibitem{stro.lec}
	Strominger, A. (2018). \textit{Lectures on the Infrared Structure of Gravity and Gauge Theory}. Lecture, Harvard University. \url{arXiv:1703.05448v2}
	
	\bibitem{zheng.bms}
	Zheng, L. (2017). \textit{BMS superrotation and supertranslation} (Doctoral dissertation). Retrieved from \url{https://www.researchgate.net/publication/318429419}
	
	\bibitem{catalogue}
	Mueller, T. $\&$ Grave, F. (2010). \textit{Catalogue of Spacetimes}. \url{arXiv:0904.4184v3}
	
	\bibitem{lust.bms}
	L$\mathrm{\ddot{u}}$st, D., $\&$ Vleeshouwers, W. (2019). \textit{Black Hole Information and Thermodynamics}. Cham: Springer International Publishing, pp. 91-113.
	
	\bibitem{p.math} 
	Poisson, E. (2004). \textit{A Relativist's Toolkit: The Mathematics of Black-Hole Mechanics}. Cambridge: Cambridge University Press.

	
\end{thebibliography}
\end{document}